\renewcommand{\subparagraph}[1]{\noindent{#1}}
\newcommand{\stface}{$\diamondsuit$-face\xspace}
\newcommand{\stfaces}{$\diamondsuit$-faces\xspace}
\newcommand{\Int}{\KwSty{int}\xspace}
\newcommand{\Boolean}{\KwSty{bool}\xspace}
\newcommand{\Vertex}{\KwSty{Vertex}\xspace}
\newcommand{\Edge}{\KwSty{Edge}\xspace}
\newcommand{\multigraph}{graph\xspace}
\newcommand{\multigraphs}{multigraphs\xspace}
\newcommand{\NULL}{\textsc{NULL}\xspace}
\newcommand{\remove}[1]{}
\newcommand{\bigoh}{\ensuremath{\mathcal{O}}}
\renewcommand{\int}{int}
\newtheorem{observation}{Observation}
\newtheorem{claimx}{Claim}
\renewenvironment{proof}
{{\bf Proof:}}{\hspace*{\fill}$\Box$\par\vspace{2mm}}
\Crefname{algorithm}{Listing}{Listings}
\Crefname{section}{Section}{Sections}
\Crefname{observation}{Observation}{Observations}
\Crefname{property}{Property}{Properties}
\Crefname{lemma}{Lemma}{Lemmata}
\Crefname{claim}{Claim}{Claims}
\Crefname{claimx}{Claim}{Claims}
\Crefname{figure}{Fig.}{Figs.}
\Crefname{figure}{Fig.}{Figs.}
\Crefname{enumi}{Condition}{Conditions}
\definecolor{realblue}{rgb}{0,0,1}
\definecolor{blue}{rgb}{0.274,0.392,0.666}
\definecolor{darkerblue}{rgb}{0.094,0.455,0.804}
\definecolor{darkblue}{rgb}{0.063,0.306,0.545}
\definecolor{red}{rgb}{0.627,0.117,0.156}
\definecolor{green}{RGB}{51, 153, 102}
\definecolor{orange}{rgb}{0.903,0.739,0.382}
\definecolor{realred}{rgb}{1,0,0}
\newcommand{\codecomment}[1]{{{\textcolor{green}{#1}\xspace}}}
\renewcommand{\emph}[1]{\textcolor{realblue}{\bf #1}\xspace}
\titlerunning{Efficient Enumeration of Drawings and Combinatorial Structures}
\author{Giordano {Da Lozzo} \and  Giuseppe {Di Battista} \and  Fabrizio Frati \and \\Fabrizio Grosso \and Maurizio Patrignani}
\authorrunning{G. {Da Lozzo}, G. {Di Battista}, F. Frati, F. Grosso,  and M. Patrignani} %TODO mandatory. % \author{First 
\institute{Roma Tre University, Rome, Italy\\  \email{\{giordano.dalozzo,giuseppe.dibattista,fabrizio.frati,\\fabrizio.grosso,maurizio.patrignani\}@uniroma3.it}}
\begin{document}

\title{Efficient Enumeration of Drawings and Combinatorial Structures for Maximal Planar Graphs
\thanks{Research partially supported by PRIN projects no.\ 2022ME9Z78 ``NextGRAAL: Next-generation algorithms for constrained GRAph visuALization'' and no.\  2022TS4Y3N ``EXPAND: scalable algorithms for EXPloratory Analyses of heterogeneous and dynamic Networked Data''.}}

\maketitle

\begin{abstract}
We propose efficient algorithms for enumerating the notorious combinatorial structures of maximal planar graphs, called canonical orderings and Schnyder woods, and the related classical graph drawings  by de Fraysseix, Pach, and Pollack [Combinatorica, 1990] and by Schnyder [SODA, 1990], called canonical drawings and Schnyder drawings, respectively.
To this aim (i) we devise an algorithm for enumerating 
special $e$-bipolar orientations of maximal planar graphs, called {\em canonical orientations}; (ii) we establish bijections between canonical orientations and canonical drawings, and between canonical orientations and Schnyder drawings; and (iii) we exploit the known correspondence between canonical orientations and canonical orderings, and the known bijection between canonical orientations and Schnyder woods. 
All our enumeration algorithms have $\bigoh(n)$ setup time, space usage, and delay between any two consecutively listed outputs, for an $n$-vertex maximal planar~graph. 
\end{abstract}

\section{Introduction}
% -------- combinatorial structures and drawings ----------

In the late eighties, de Fraysseix, Pach, and Pollack~\cite{fpp-ss-88,dpp-hdpgg-90} and Schnyder~\cite{DBLP:conf/soda/Schnyder90} independently and almost simultaneously solved a question posed by Rosenstiehl and Tarjan~\cite{rt-rpl-86} by proving that every maximal planar graph, and consequently every planar graph, admits a planar straight-line drawing in a $\bigoh(n)\times \bigoh(n)$ grid. Since resolution and size are measures of primary importance for the readability of a graph representation~\cite{gdbook}, the result by de Fraysseix, Pach, and Pollack~\cite{fpp-ss-88,dpp-hdpgg-90} and by Schnyder~\cite{DBLP:conf/soda/Schnyder90} has a central place in the graph visualization literature. It also finds heterogeneous applications in other research areas, for example in knot theory~\cite{cks-mr-02,hl-nrm-01,hlp-cck-99} and computational complexity~\cite{bdy-dml-06,GanianHK0ORS16,ss-dsg-04}.  

The drawing algorithms presented by de Fraysseix, Pach, and Pollack and by Schnyder have become foundational for the graph drawing research area; see, e.g.,~\cite{nr-pgd-04,t-hgd-13}. The combinatorial structures conceived for these algorithms have been used to solve a plethora of problems in graph drawing~\cite{AlamBFKKU13,DBLP:conf/compgeom/AngeliniCCLR19,br-scd-06,DBLP:conf/cocoon/LozzoDF20,dor-tcg-94,d-gdt-10,DujmovicESW07,DBLP:journals/order/Felsner01,fz-sw-08,kant1996drawing,DBLP:journals/jocg/NollenburgPR16} and beyond~\cite{DBLP:journals/algorithmica/BarbayAHM12,DBLP:journals/gc/BonichonGHPS06,BoseDHLMW09,BoseGS05,cgh-ce-98,h-scirmt-06,HeKL99}. In a nutshell, de Fraysseix, Pach, and Pollack's algorithm works iteratively, as it draws the vertices of a maximal planar graph one by one, while maintaining some geometric invariants on the boundary of the current drawing. The order of insertion of the vertices ensures that each newly added vertex is in the outer face of the already drawn graph and that such a graph is biconnected; this order is called \emph{canonical ordering} (sometimes also \emph{shelling order}). Schnyder's algorithm is based on a partition of the internal edges of a maximal planar graph into three trees rooted at the outer vertices of the graph and satisfying certain combinatorial properties; these trees form a so-called \emph{Schnyder wood}. The three paths connecting each vertex to the roots of the trees define three regions of the plane, and the number of faces of the graph in such regions determines the vertex coordinates. 

At first sight, canonical orderings and Schnyder woods appear to be distant concepts. However, Schnyder~\cite{DBLP:conf/soda/Schnyder90} already observed that there is a simple algorithm to obtain a Schnyder wood of a maximal planar graph $G$ from a canonical ordering of $G$. The connection between the two combinatorial structures is deeper than this and it is best explained by the concept of \emph{canonical orientation}. Given a canonical ordering $\pi$ of $G$, the canonical orientation of $G$ with respect to $\pi$ is the directed graph obtained from $G$ by orienting each edge away from the vertex that comes first in $\pi$. de Fraysseix and Ossona de Mendez~\cite{DBLP:journals/dm/FraysseixM01} proved that there is a bijection between the canonical orientations and the Schnyder woods of $G$. 

In this paper, we consider the problem of enumerating the above combinatorial structures and the corresponding graph drawings. The ones we present are, to the best of our knowledge, the first enumeration algorithms for drawings of graphs. An \emph{enumeration algorithm} lists all the solutions of a problem, without duplicates, and then stops. Its efficiency is measured in terms of setup time, space usage, and maximum elapsed time (\emph{delay}) between the outputs of two consecutive solutions; see, e.g.,~\cite{DBLP:journals/dam/AvisF96,d-acp-11,r-cg-03,w-eea-16}. We envisage notable applications of graph drawing enumeration algorithms with polynomial delay: 
\begin{enumerate}[\bf (i)]
	\item The possibility of providing a user with several alternative drawings optimizing different aesthetic criteria, giving her the possibility of selecting the most suitable for her needs; enumerating techniques may become an important tool for graph drawing software.
	\item Machine-Learning-based graph drawing tools are eager of drawings of the same graph for their training; linear-time delay enumeration algorithms may provide a powerful fuel for such tools.
	\item Computer-aided systems for proving or disproving geometric and topological statements concerning graph drawings may benefit from enumeration algorithms for exploring the solution space of graph drawing problems.
\end{enumerate}

% -------- enumeration definition and state-of-the-art ----------

The enumeration of graph orientations has a rich literature. 
%Since our algorithms rely on the enumeration of the canonical orientations of a graph, it is worth to mention previous literature on the enumeration of graph orientations. 
Consider an undirected graph $G$ with $n$ vertices and $m$ edges. In \cite{DBLP:journals/dam/ConteGMR18} algorithms are presented for generating the acyclic orientations of $G$ with $\bigoh(m)$ delay, the cyclic orientations with $\tilde\bigoh(m)$ delay, and the acyclic orientations with a prescribed single source with $\bigoh(n m)$ delay; see also earlier works on the same
 problem~\cite{DBLP:journals/ipl/BarbosaS99,DBLP:journals/jal/Squire98}. In \cite{DBLP:journals/algorithmica/BlindKV20}  the $k$-arc-connected orientations of $G$ are enumerated with $\bigoh(knm^2)$ delay. Of special interest for our paper is the enumeration of $e$-bipolar orientations of $G$. Let $e=(s,t)$ be an edge of $G$; an \emph{$e$-bipolar orientation} of $G$ (often called \emph{$st$-orientation}) is an acyclic orientation of $G$ such that $s$ and $t$ are the only source and the only sink of the orientation. de Fraysseix, Ossona de Mendez, and Rosenstiehl~\cite{DBLP:journals/dam/FraysseixMR95} provided an algorithm for enumerating the $e$-bipolar orientations of $G$ with polynomial delay. Setiawan and Nakano~\cite{setiawan2011listing} showed how suitable data structures and topological properties of planar graph drawings can be used in order to bound the delay of the algorithm by de Fraysseix {\em et al.} to $\bigoh(n)$, if $G$ is a biconnected planar graph. The link between these algorithm and our paper resides in another result by de Fraysseix and Ossona de Mendez~\cite{DBLP:journals/dm/FraysseixM01}: They proved that there exists a bijection between the canonical orientations and the bipolar orientations of $G$ {\em such that every internal vertex has at least two incoming edges}. Our enumeration algorithm for canonical orientations follows the strategy devised by de Fraysseix {\em et al.}~\cite{DBLP:journals/dam/FraysseixMR95} and enhanced by Setiawan and Nakano~\cite{setiawan2011listing} for enumerating bipolar orientations of biconnected planar graphs. However, the requirement that every internal vertex has at least two incoming edges dramatically increases the complexity of the problem and reveals new and, in our opinion, interesting topological properties of the desired orientations.

We present the following main results.
Let $G$ be an $n$-vertex maximal plane graph. 
\begin{itemize}
\item First, we show an algorithm that enumerates all the canonical orientations of $G$. The algorithm works recursively. Namely, it applies one or two operations (edge contraction and edge removal) to $G$. Each application of an operation results in a smaller graph, whose canonical orientations are enumerated recursively and then modified into canonical orientations of $G$ by orienting the contracted or removed edges. 

In order for the recursive algorithm to have small delay, we need to apply an edge contraction or removal only if the corresponding branch of computation is going to produce at least one canonical orientation of $G$. We thus identify necessary and sufficient conditions for a subgraph of $G$ to admit an orientation that can be extended to a canonical orientation of $G$. Further, we establish topological properties that determine whether applying an edge contraction or removal results in a graph satisfying the above conditions. Also, we design data structures that allow us to efficiently test for the satisfaction of these properties and to apply the corresponding operation if the test is successful. 
\item Second, as we show that canonical orderings are topological sortings of canonical orientations, our algorithm for enumerating canonical orientations allows us to obtain an algorithm that enumerates all canonical orderings of $G$. Furthermore, as canonical orientations are in bijection with Schnyder woods~\cite[Theorem 3.3]{DBLP:journals/dm/FraysseixM01}, our algorithm for enumerating canonical orientations allows us to obtain an algorithm that enumerates~all~Schnyder~woods~of~$G$. 
\item Third, we show that if we apply de Fraysseix, Pach, and Pollack's algorithm with two distinct canonical orderings corresponding to the same canonical orientation, the algorithm outputs the same planar straight-line drawing of $G$. This is the key fact that we use in order to establish a bijection between the canonical orientations of $G$ and the planar straight-line drawings of $G$ produced by de Fraysseix, Pach, and Pollack's algorithm. Together with our algorithm for the enumeration of canonical orientations, this allows us to enumerate such drawings.
%This, together with our algorithm for the enumeration of canonical orientations, allows us to enumerate the planar straight-line drawings of $G$ produced by de Fraysseix, Pach, and Pollack's algorithm.
%
% Second, we show that de Fraysseix, Pach, and Pollack's algorithm, applied with any two canonical orderings corresponding to the same canonical orientation, outputs the same planar straight-line drawing. This allows us to enumerate the de Fraysseix, Pach, and Pollack drawings of $G$.
\item Fourth, we prove that the planar straight-line drawings of $G$ obtained by Schnyder's algorithm are in bijection with the Schnyder woods. This, together with the bijection between canonical orientations and Schnyder woods and together with our algorithm for the enumeration of canonical orientations, allows us to enumerate the planar straight-line drawings of $G$ produced by Schnyder's algorithm.
\end{itemize}

All our enumeration algorithms have $\bigoh(n)$ setup time, $\bigoh(n)$ space usage, and $\bigoh(n)$ worst-case delay.

We remark that a different approach for the enumeration of canonical orientations might be based on the fact that the canonical orientations of a maximal plane graph form a distributive lattice~$\cal L$~\cite{DBLP:journals/combinatorics/Felsner04}. By the fundamental theorem of finite distributive lattices~\cite{Birkhoff-ros37}, there is a finite poset $P$ whose order ideals correspond to the elements of~$\cal L$ and it is known that $|P|$ is polynomial in~$n$~\cite[page 10]{DBLP:journals/combinatorics/Felsner04}. Enumerating the order ideals of $P$ is a studied problem. In~\cite{DBLP:journals/dam/HabibMNS01} an algorithm is presented that lists all order ideals of $P$ in $\bigoh(\Delta(P))$ delay, where $\Delta(P)$ is the maximum indegree of the covering graph of~$P$. However, the algorithm has three drawbacks that make it unsuitable for solving our problems. First, the guaranteed delay of the algorithm is amortized, and not worst-case. Second, the algorithm uses $\bigoh(w(P)\cdot |P|) = \bigoh(n^3)$ space, where $w(P)=\bigoh(n)$ is the width of $P$, and $\bigoh(|P|^2) = \bigoh(n^4)$ preprocessing time. Third and most importantly, each order ideal is produced twice by the algorithm, rather than just once as required by an enumeration algorithm.
Similarly, the algorithms in~\cite{pruesse1993gray,squire95,STEINER1986317} are affected by all or by part of the three drawbacks above.

\smallskip

% -------- organization of the paper ----------

The paper is organized as follows.
\cref{se:preliminaries} contains basic definitions and properties. The subsequent sections show how to enumerate:
canonical orientations (\cref{se:canonical-orientation});
canonical orderings and de Fraysseix, Pach, and Pollack drawings (\cref{se:canonical-drawings}); and
Schnyder woods and Schnyder drawings (\cref{se:schnyder}).
Conclusions and open problems are in \cref{se:conclusions}.

%\textcolor{green}{Full details for the proofs marked with a ($\star$) are provided in the appendix.}

% ---------------------------------------------

\section{Preliminaries} \label{se:preliminaries}

In the following, we provide basic definitions and concepts.

%hence, a \emph{graph} (resp.\ \emph{digraph}) is an ordered pair $G=(V,E)$ where is $V$ is the set of \emph{vertices} of $G$, and $E$ is a multiset of unordered (resp.\ ordered) pairs of vertices of $G$, called \emph{edges}.
%For a graph (resp.\ digraph) $G$, a sequence $v_1,e_1,v_2,e_2,\dots,v_{k-1},e_{k-1},v_k$ such that $v_i \in V$, and $e_i \in E$, $e_i$ has end-vertices $v_i$ and $v_{i+1}$ (resp.\  $e_i$ is directed from $v_i$ to $v_{i+1}$), for $i=\{1,\dots,k-1\}$ and $v_k \in V$, is a \emph{path} (resp.\ \emph{directed path}) of $G$ ; and it is \emph{simple} if no vertex appears twice in the sequence. % and it is a \emph{cycle} if $v_1 = v_k$. A \emph{cycle} (resp.\ \emph{directed cycle}) is a simple path (resp.\ simple directed path) with $v_1 = v_k$; therefore, when specifying a cycle we only list the subsequence formed by its vertices. A \emph{$k$-cycle} is a length-$k$ cycle, i.e., a cycle consisting of $k$ vertices.
%

\subparagraph{\bf Graphs with multiple edges.} For technical reasons, we consider graphs and digraphs with multiple edges; edges with the same end-vertices are said to be \emph{parallel}. We only consider (di)graphs without \emph{self-loops}, i.e., edges with identical end-vertices. A graph without parallel edges is said to be \emph{simple}. For a graph $G$, we denote the degree of a vertex $v$ of $G$ by $\deg_G(v)$. 
Let $\cal D$ be a digraph. 
%An edge $(u,v)$ of $\cal D$ directed from $u$ to $v$ is \emph{outgoing} $u$ and \emph{incoming} $v$. 
A \emph{source} (resp.\ \emph{sink}) of $\cal D$ is a vertex with no incoming (resp.\ no outgoing) edges. We say that $\cal D$ is \emph{acyclic} if it contains no directed cycle.
%Given a digraph $\cal D$, its 
The \emph{underlying graph} of $\cal D$ is the undirected graph obtained from $\cal D$ by ignoring~the~edge~directions. 
An \emph{orientation} of an undirected graph $G$ is a digraph whose underlying graph is $G$. 
%In the remainder of this paper, we will refer to multigraphs simply as graphs, and use the term multigraph to emphasize the presence of possible parallel edges.

\subparagraph{\bf Planar graphs.} A \emph{drawing} $\Gamma$ of a graph maps each vertex to a point in the plane and each edge to a Jordan arc between its end-vertices. The drawing $\Gamma$ is \emph{planar} if no two edges cross, it is \emph{straight-line} if each edge is mapped to a straight-line segment, and it is a \emph{grid} drawing if all vertices have integer coordinates. Clearly, a graph might only admit a planar straight-line drawing if it is simple.

A planar drawing partitions the plane into connected regions, called \emph{faces}. The only unbounded face is the \emph{outer face}; the other (bounded) faces are \emph{internal}. Two planar drawings of the same connected planar graph are \emph{equivalent} if they determine the same circular order of the edges incident to each vertex. A \emph{planar embedding} is an equivalence class of planar drawings.
A \emph{plane graph} is a planar graph equipped with a planar embedding and a designated outer face. When talking about a subgraph $G'$ of a plane graph $G$, we always assume that $G'$ inherits a plane embedding from $G$; sometimes we write \emph{plane subgraph} to stress the fact that the subgraph has an associated plane embedding. A \emph{maximal planar graph} is a planar graph without parallel edges to which no edge can be added without losing planarity or simplicity. A \emph{maximal plane graph} is a maximal planar graph with a prescribed embedding. 
A vertex or edge of a plane \multigraph is \emph{internal} if it is not incident to the outer face, and it is \emph{outer} otherwise. An internal edge is a \emph{chord} if both its end-vertices are outer.

Let $G$ be a plane graph and let $e$ be an edge of $G$ with end-vertices $u$ and $v$. The \emph{contraction of} $e$ \emph{in} $G$ is an operation that removes $e$ from $G$ and that ``merges'' $u$ and $v$ into a new vertex $w$. Suppose that the clockwise cyclic order of the edges incident to $u$ is $e, e^u_1,\dots,e^u_h$ and that the clockwise cyclic order of the edges incident to $v$ is $e,e^v_1,\dots,e^v_k$. Then, the clockwise cyclic order of the edges incident to $w$ is set to $e^u_1,\dots,e^u_h, e^v_1,\dots,e^v_k$. 
Suppose that, in $G$, there exist a vertex $x$, an edge $e' = (u, x)$, and an edge $e'' = (v, x)$. Then, the contraction of $e$ in $G$ turns $e'$ and $e''$ into a pair of parallel edges incident to $w$ and $x$. Also, suppose that there exists an edge $e'$ that is parallel to $e$ in $G$. Then, the contraction of $e$ in $G$ turns $e'$ into a self-loop incident to $w$.

%In an \emph{straight-line drawing} of a graph edges are represented by straight-line segments.
%A \emph{grid drawing} is such that each vertex is mapped to a point with integer coordinates. The \emph{width} (resp.\ \emph{height}) of a grid drawing is the number of grid columns (rows) intersecting it. We say that a drawing \emph{lies on a $W\times H$ grid} if it is a grid drawing with width $W$ and height $H$. The \emph{area} of a graph drawing is usually defined as the area of the smallest axis-parallel rectangle enclosing the drawing (when proving upper bounds) or as the area of the smallest convex polygon enclosing the drawing (when proving lower bounds).

\subparagraph{\bf Connectivity.} 
%A graph $G$ is \emph{$1$-connected}, or simply \emph{connected}, if there is a path between any two vertices; furthermore, $G$ is \emph{$k$-connected}, for $k \geq 2$, if the removal of $k-1$ vertices leaves the graph connected. A $2$-connected ($3$-connected) graph is also called \emph{biconnected} (\emph{triconnected}). Note that, a single edge or set of parallel edges forms a biconnected graph. 
A graph is \emph{connected} if it contains a path between any two vertices.
A \emph{cut-vertex} (resp. \emph{separation pair}) in a graph is a vertex (resp. a pair of vertices) whose removal disconnects the graph.
A graph is \emph{biconnected} (\emph{triconnected}) if it has no cut-vertex (resp.\ no separation pair).
Note that a single edge or a set of parallel edges forms a biconnected graph.
A \emph{split pair} of $G$ is either a pair of adjacent vertices or a separation pair. 
The \emph{components}~of~$G$ \emph{separated by} a split pair $\{u,v\}$ are defined as follows.
If $e$ is an edge of $G$ with end-vertices $u$ and $v$, then it is a component of $G$ separated by $\{u,v\}$; note that each parallel edge between $u$ and $v$ determines a distinct component of $G$ separated~by~$\{u,v\}$.
Also, let $G_1,\dots,G_k$ be the connected components of $G \setminus \{u,v\}$. The subgraphs of $G$ induced by $V(G_i) \cup \{u,v\}$, minus all parallel edges between $u$ and $v$, are components of $G$ separated by $\{u,v\}$, for $i=1,\dots,k$.
We will exploit the following. 

\begin{property} \label{obs:split-biconnected}
    Let $H$ be a biconnected plane graph and let $C$ be a cycle of $H$. Then the plane graph $H_C$ consisting of the vertices and of the edges of $H$ that lie in the interior or on the boundary of $C$ is biconnected.
\end{property}
\begin{proof}
In a biconnected plane graph, each face is bounded by a cycle~\cite{w-nspg-32}. Note that all the internal faces of $H_C$ are also internal faces of $H$. Thus, since $H$ is biconnected, these faces are bounded by cycles. Moreover, the outer face of $H_C$ is bounded by the cycle $C$, by construction. By \cite[Theorem~10.7]{bondy2011graph}, if the boundary of each face of a plane graph is a cycle, then the graph is biconnected. Therefore, $H_C$ is~biconnected.\end{proof}

%Let $H$ be a biconnected plane multigraph and let $u$ and $v$ be two vertices incident to the outer face of $H$. Consider a path $P$ between $u$ and $v$ whose internal vertices are not incident to the outer face of $H$. Denote by $P_1$ and $P_2$ the subpaths of the outer face of $H$ between $u$ and $v$. Let $H_1$ (resp. $H_2$) be the plane multigraph that is the subgraph of $H$ consisting of the vertices and of the edges of $H$ that lie in the interior or on the boundary of the cycle obtained by concatenating $P$ and $P_1$ (resp. $P$ and $P_2$). We have that $H_1$ and $H_2$ are biconnected.

% \todo[inline]{Proof already somewhere? Otherwise write a proof. Giordano: Il Lemma 7.9 su   \href{https://www.sfu.ca/~mdevos/notes/graph/445_planarity.pdf}{questo link} a delle note di Matt DeVos dice "If G is a plane graph, then G is 2-connected if and only if every face of G is bounded by a cycle.". Da un lato questo lemma da solo può essere invocato per giustificare l'osservazione, dall'altro la sua prova può essere ricalcata per trasformare l'osservazione in un lemma. Anche Theorem 5 \href{http://dccg.upc.edu/irp2018/wp-content/uploads/2018/04/Notes\_BojanMohar.pdf}{qui}}

%\textcolor{realred}{\sc Inserire definizione lens and non-lens}

\subparagraph{\bf Planar st-graphs.} 
Given two vertices $s$ and $t$ of an undirected graph $G$, an orientation of $G$ is an \emph{$st$-orientation} if (i) it is acyclic and (ii) $s$ and $t$ are its unique source and unique sink, respectively.  A digraph is a \emph{planar $st$-graph} if and only if it is an $st$-orientation and it admits a planar embedding $\mathcal E$ with $s$ and $t$ on the outer face; such a digraph  together with $\mathcal E$ is a \emph{plane $st$-graph}. A face $f$ is a \emph{\stface} 
if 
%the plane digraph induced by its edges contains a single source and a single sink.  \todo{Gio: cambio che va contro la storia, ma che evita l'incongruenza $st$}
%
its boundary consists of two directed paths %$p_l$ and $p_r$ 
from a common source $s_f$ to a common sink $t_f$. %The paths $p_l$ and $p_r$ are the \emph{left path} and the \emph{right path} of $f$, respectively. 
%The vertices $s_f$ and $t_f$ are the \emph{source} and the \emph{sink} of $f$, respectively.
The following observations are well known.

\begin{observation}%[\protect{\cite[Lemma~4.1]{gdbook}} and \textcolor{realred}{Manca la direzione st-facce --> st-grafo}]
[\protect{\cite[Lemma~1]{DBLP:journals/algorithmica/LozzoBFPR20}}]
\label{obs:st-face}
    A plane digraph with $s$ and $t$ on the outer face is a plane $st$-graph if and only if each of its faces is a \stface.
\end{observation}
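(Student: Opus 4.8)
The plan is to recast ``\stface'' as a local counting condition and then prove the two implications separately. Fix the plane embedding and, for a vertex $v$, call an \emph{angle} at $v$ a pair of edges consecutive in the clockwise order around $v$; every angle belongs to exactly one face. Say the angle is a \emph{source-angle} if both its edges are directed out of $v$, and a \emph{sink-angle} if both are directed into $v$. The first step is the elementary fact that a face $f$ is a \stface if and only if $\partial f$ has exactly one source-angle and exactly one sink-angle (which are then at $s_f$ and $t_f$): along $\partial f$ the source-angles and the sink-angles alternate and hence are equally many, and ``one of each'' is the same as saying that $\partial f$ is the concatenation of two directed paths with a common first vertex and a common last vertex. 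Throughout I would assume $\mathcal D$ connected and with at least one edge, the remaining cases being immediate.

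For the forward implication, suppose $\mathcal D$ is a plane $st$-graph. Because $\mathcal D$ is acyclic, the boundary walk of any face can be traversed neither entirely with nor entirely against the edge directions, so every face has at least one source-angle. For a matching upper bound I would count source-angles vertex by vertex: around $v$ their number equals $d^+(v)-b^+(v)$, where $b^+(v)$ is the number of maximal blocks of consecutive outgoing edges in the cyclic order at $v$, under the convention that $b^+(v)=0$ when $v$ has no incoming edge (so $v=s$) or no outgoing edge (so $v=t$). Summing over all vertices, the total number of source-angles is $|E(\mathcal D)|-\sum_v b^+(v)$. In an $st$-graph every vertex other than $s$ and $t$ has both an incoming and an outgoing edge, so each of those $n-2$ vertices has $b^+(v)\ge 1$; hence the total number of source-angles is at most $|E(\mathcal D)|-(n-2)$, which by Euler's formula equals the number $F$ of faces. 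Since each of the $F$ faces hosts at least one source-angle, each hosts exactly one; the symmetric argument gives exactly one sink-angle per face. Therefore every face is a \stface.

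For the reverse implication, suppose $s$ and $t$ lie on the outer face and every face is a \stface. The remaining defining conditions of a plane $st$-graph hold by hypothesis, so it suffices to show that $\mathcal D$ is acyclic. I would argue by contradiction: assuming $\mathcal D$ has a directed cycle, pick one, $C$, whose bounded region $\Delta$ contains the fewest edges of $\mathcal D$ in its interior. If no edge lies in the interior of $\Delta$ --- hence no vertex does either, since an isolated interior vertex would be a source and a sink other than $s$ and $t$ --- then $\Delta$ is a face whose boundary is the directed cycle $C$; but $C$ has no source-angle, contradicting that this face is a \stface. Otherwise some edge $e$ lies in the interior of $\Delta$, and in each case one obtains a directed cycle enclosing strictly fewer interior edges, contradicting the choice of $C$: if both endpoints of $e$ lie on $C$, then $e$ together with one of the two directed $C$-paths between its endpoints is such a cycle; and if an endpoint of $e$ is interior to $\Delta$, then extending $e$ forward (or backward) within $\Delta$ must eventually close a directed cycle or reach $C$, again producing a smaller one.

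The only genuinely delicate point, I expect, is that last argument: making ``innermost directed cycle'' precise enough to conclude that it bounds a single face, without being able to appeal to biconnectivity of $\mathcal D$. The forward implication should be routine once the source-angle count and Euler's formula are set up.
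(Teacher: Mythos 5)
The paper does not prove this statement; it is quoted with a citation, so there is no in-paper proof to compare against. Your forward direction is correct and follows a standard angle-counting scheme: each face has at least one source-angle by acyclicity, and summing the per-vertex count $d^+(v)-b^+(v)$ of source-angles together with $b^+(v)\geq 1$ at the $n-2$ intermediate vertices bounds the total by $m-(n-2)$, which equals the number of faces by Euler's formula, forcing exactly one source-angle per face.

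The reverse direction has a genuine gap. You write that ``the remaining defining conditions of a plane $st$-graph hold by hypothesis,'' but a plane $st$-graph requires that $s$ and $t$ be the only source and the only sink of $\mathcal D$; the hypothesis merely places them on the outer face. This must be proved, and it does not follow from your Euler count: from ``every face has exactly one source-angle'' you get $\sum_v b^+(v)=n-2$, hence only $\sigma+\tau\geq 2$ where $\sigma$ and $\tau$ count the sources and sinks --- a lower bound, not the upper bound that uniqueness requires. Moreover, the omission feeds back into the acyclicity argument: in the sub-case where an endpoint of the interior edge $e$ does not lie on $C$, the forward extension from that endpoint may terminate at a sink of $\mathcal D$ interior to $\Delta$ (and the backward one at an interior source), and you cannot dismiss this as impossible precisely because you have not yet shown that every source and sink of $\mathcal D$ is incident to the outer face. (The aside that an isolated interior vertex ``would be a source and a sink other than $s$ and $t$'' has the same circularity, though that instance is easily repaired: an isolated vertex yields a face with disconnected boundary, which is not a $\diamondsuit$-face.) Acyclicity and the uniqueness of the source and sink need to be established together, e.g., by first arguing directly from the $\diamondsuit$-face condition that any source or sink of $\mathcal D$ must be incident to the outer face, and only then running the innermost-cycle argument.
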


\begin{observation}[\protect{\cite[Lemma~4.2]{gdbook}}]\label{obs:bimodality}
Let $G$ be a plane $st$-graph. Then, all the incoming (resp.\ all the outgoing) edges incident to any vertex $v$ of $G$ appear consecutively around $v$.
\end{observation}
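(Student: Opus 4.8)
The plan is to use the elementary reachability properties of $st$-orientations together with a short planarity argument based on a cycle through~$v$. (\cref{obs:st-face} is not needed for this particular argument.)

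If $v$ is the source $s$ it has no incoming edge, and if $v$ is the sink $t$ it has no outgoing edge; in either case there is nothing to prove, so assume $v\notin\{s,t\}$, whence $v$ has at least one incoming and at least one outgoing edge. I first record two standard facts. Since $G$ is acyclic and $s$ (resp.\ $t$) is its unique source (resp.\ sink), from every vertex $x$ there is a directed path $s\leadsto x$ and a directed path $x\leadsto t$ (follow incoming, resp.\ outgoing, edges until stuck: one stops at a source, resp.\ a sink, which must be $s$, resp.\ $t$). Moreover, two directed paths of $G$ with the same initial and the same final vertex traverse their common vertices in the same relative order --- otherwise two such common vertices $y,z$ would admit a directed subpath $y\leadsto z$ along one path and $z\leadsto y$ along the other, a directed cycle; in particular a directed path $s\leadsto x$ and a directed path $x\leadsto t$ share no vertex but $x$, so their union is a simple directed $s$--$t$ path.

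Now suppose, for a contradiction, that the incoming edges of $v$ are not consecutive in the clockwise order around $v$; equivalently, they form at least two maximal runs. Then one can pick edges $e_1,e_2,e_3,e_4$ incident to $v$, in this clockwise order, with $e_1$ and $e_3$ incoming and lying in two different such runs, and $e_2$ and $e_4$ outgoing. Extending $e_1$ backwards to $s$ and $e_2$ forwards to $t$ and using the facts above, I obtain a simple directed $s$--$t$ path $\pi$ through $v$ whose two edges at $v$ are $e_1$ (incoming) and $e_2$ (outgoing); symmetrically I obtain a simple directed $s$--$t$ path $\pi'$ through $v$ whose two edges at $v$ are $e_3$ and $e_4$.

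To conclude, let $P,P'$ be the $s\leadsto v$ subpaths and $Q,Q'$ the $v\leadsto t$ subpaths of $\pi,\pi'$, and let $y$ be the common vertex of $P$ and $P'$ closest to $v$ (well defined by the ``same relative order'' property). The two directed stretches $y\leadsto v$, one along $P$ and one along $P'$, are internally disjoint and distinct (their last edges at $v$ are $e_1\neq e_3$), so their union is a cycle $\mathcal{B}$ of $G$ through $y$ and $v$ whose two edges at $v$ are exactly $e_1$ and $e_3$. These split the rotation at $v$ into two arcs lying on the two sides of $\mathcal{B}$; since the clockwise order is $e_1,e_2,e_3,e_4$, the edge $e_2$ leaves $v$ into one of the two regions bounded by $\mathcal{B}$ and $e_4$ into the other. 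Following $Q$ from $v$ (starting with $e_2$) and $Q'$ from $v$ (starting with $e_4$) up to the first common vertex $z\neq v$ of $Q$ and $Q'$, and noting that the corresponding initial stretches of $Q$ and $Q'$ are internally disjoint from $\mathcal{B}$ (a shared vertex with $P$ or $P'$ would again force a directed cycle), I reach the contradiction that $z$ lies in both regions bounded by~$\mathcal{B}$. The main obstacle is this last part: making the planarity step airtight requires carefully tracking how $P,P',Q,Q'$ intersect one another and $\mathcal{B}$, leaning on the ``same relative order'' property throughout. An alternative that sidesteps all the topological bookkeeping is to invoke the classical fact that every planar $st$-graph admits an upward planar straight-line drawing --- one in which each edge is a segment along which the $y$-coordinate strictly increases: then no edge is horizontal, the incoming edges of $v$ reach vertices below $v$ while the outgoing ones reach vertices above $v$, so each class is exactly one of the two arcs into which the horizontal line through $v$ splits the rotation at $v$, hence consecutive.
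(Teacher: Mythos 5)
The paper does not prove this observation; it cites it as Lemma~4.2 of the Di~Battista--Eades--Tamassia--Tollis book, so there is no in-paper argument to compare against. Your primary argument is a correct, self-contained proof from first principles, and the bookkeeping you flag as the ``main obstacle'' in fact closes up cleanly: acyclicity forces each of $Q$ and $Q'$ to meet $P$ and $P'$ only at $v$ (any other shared vertex $w$ would give a directed path $v\leadsto w$ along a $Q$-path and $w\leadsto v$ along a $P$-path, closing a cycle), hence to meet $\mathcal B\subseteq P\cup P'$ only at $v$; since $z\neq v$, the vertex $z$ is off $\mathcal B$, and because plane edges cannot cross edges of $\mathcal B$ and no internal vertex of $Q[v,z]$ or $Q'[v,z]$ lies on $\mathcal B$, the Jordan curve theorem keeps $Q[v,z]$ in the open region entered via $e_2$ and $Q'[v,z]$ in the other, placing $z$ strictly in both --- the desired contradiction. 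One small wording fix: $y$ should be defined as the common vertex of $P$ and $P'$ closest to $v$ \emph{other than $v$ itself}; it exists because $s$ is common and is well defined by your ``same relative order'' fact. Your alternative, splitting the rotation at $v$ by the horizontal through $v$ in an embedding-preserving upward planar straight-line drawing, is also valid and much shorter, but it imports a theorem (existence of such drawings for plane $st$-graphs) that is at least as substantial as the observation itself; so as a proof from scratch the first argument is the more elementary and the more honest of the two.
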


Let $\cal D$ be a plane $st$-graph and let $e=(u,v)$ be an edge of $\cal D$. 
The \emph{left face} (resp. \emph{right face}) of $e$ is the face to the left (resp. right) of $e$ while moving from $u$ to~$v$. %The boundary of every face $f$ of $\cal D$ consists of two directed paths $p_l$ and $p_r$ from a common source $s_f$ to a common sink $t_f$. The paths $p_l$ and $p_r$ are the \emph{left path} and the \emph{right path} of $f$, respectively. The vertices $s_f$ and $t_f$ are the \emph{source} and the \emph{sink} of $f$, respectively. 
%If $f$ is the outer face, %$p_l$ (resp. $p_r$) consists of the edges for which $f$ is the left face (resp. right face); in this case, $p_l$ and $p_r$ are also called 
The \emph{left path} $p_l$ (resp.\ the \emph{right path} $p_r$) of $\cal D$ consists of the edges of $\cal D$ whose left face (resp. whose right face) is the outer face of $\cal D$.     

\medskip
In the following, let $G$ be a maximal plane graph and let $(u,v,z)$ be the cycle delimiting its outer face, where $u$, $v$, and $z$ appear in this counter-clockwise order along~the~cycle. 

\begin{figure}
    \centering
    \begin{subfigure}{0.24\textwidth}
    \centering
    \includegraphics[page=1, width=\textwidth]{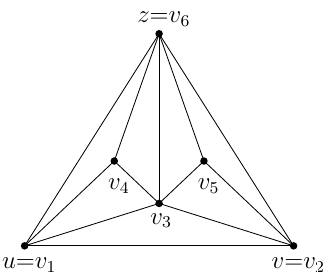}
    \caption{\label{fig:small-example-canonical-ordering-1}}
    \end{subfigure}
    \hfil
    \begin{subfigure}{0.24\textwidth}
    \centering
    \includegraphics[page=2, width=\textwidth]{img/simple-example.pdf}
    \caption{\label{fig:small-example-canonical-ordering-2}}
    \end{subfigure}
    \hfil
    \begin{subfigure}{0.24\textwidth}
    \centering
    \includegraphics[page=3, width=\textwidth]{img/simple-example.pdf}
    \caption{\label{fig:small-example-canonical-orientation}}
    \end{subfigure}
    \hfil
    \begin{subfigure}{0.24\textwidth}
    \centering
    \includegraphics[page=4, width=\textwidth]{img/simple-example.pdf}
    \caption{\label{fig:small-example-Schnyder}}
    \end{subfigure}
    \caption{(a), (b) The two canonical orderings with first vertex $u$ of a maximal plane graph $G$. (c) The unique canonical orientation with first vertex $u$ of $G$. (d) The unique Schnyder~wood~of~$G$.}
    \label{fig:small-example}
\end{figure}

\subparagraph{\bf Canonical orderings.} A \emph{canonical ordering of $G$ with first vertex $u$} is a labeling of the vertices $v_1=u, v_2=v, v_3,  \dots, v_{n-1}, v_n=z$ meeting the following requirements for every $k=3,\dots,n-1$; see \cref{fig:small-example-canonical-ordering-1,fig:small-example-canonical-ordering-2} and refer to \cite{dpp-hdpgg-90}. 

\begin{enumerate}[(CO-1)]
\item The plane subgraph $G_{k} \subseteq G$ induced by $v_1,v_2,\dots,v_k$ is $2$-connected; let $C_k$ be the cycle bounding its outer face;
\item $v_{k+1}$ is in the outer face of $G_{k}$, and its neighbors in $G_{k}$ form an (at
least $2$-element) subinterval of the path $C_{k}-(u,v)$. 
\end{enumerate}

%, and the boundary of its outer face is a cycle $C_{k-1}$ containing the edge $(u,v)$; 

A \emph{canonical ordering} of $G$ is a canonical ordering of $G$ with first vertex $x$, where $x$ is a vertex in $\{u,v,z\}$. Finally, if $G'$ is a maximal planar graph, a \emph{canonical ordering} of $G'$ is a canonical ordering of a maximal plane graph isomorphic to $G'$.

\begin{property} \label{pr:one-in-the-future}
Let $\pi=(v_1=u, v_2=v, v_3,  \dots, v_n=z)$ be a canonical ordering of a maximal plane graph $G$. For $i=3,\dots,n-1$, each vertex $v_i$ has at least two neighbors $v_j$ with $j<i$ and one neighbor $v_j$ with $j>i$. 
\end{property}

\begin{proof}
Recall that, for any $i=3,\dots,n$, we denote by $G_{i}$ the plane subgraph of $G$ induced by $v_1,v_2,\dots,v_{i}$. For $i=4,\dots,n-1$, the fact that $v_i$ has at least two neighbors $v_j$ with $j<i$ directly follows by condition (CO-2) of $\pi$. Furthermore, $v_3$ is adjacent to $v_1$ and $v_2$, since $G_3$ is biconnected, by condition (CO-1) of $\pi$. Suppose, for a contradiction, that, for some $i\in \{3,\dots,n-1\}$, the vertex $v_i$ has no neighbor $v_j$ with $j>i$. By condition (CO-1) of $\pi$, we have that $G_{i}$ is $2$-connected.  Also, we have that $v_i$ is in the outer face of $G_{i-1}$; this comes from condition (CO-2) of $\pi$ if $i>3$, and from the fact that $G_{i-1}$ is an edge if $i=3$. Hence, $v_i$ is incident to the outer face of $G_i$. Let $k\geq i$ be the largest index such that $v_i$ is incident to the outer face of $G_{k}$. Let $C_k$ be the cycle delimiting the outer face of $G_{k}$. Then $v_{k+1}$ is adjacent to a vertex that comes before $v_i$ and to a vertex that comes after $v_i$ in the path $C_k-(u,v)$, as otherwise $v_i$ would also be incident to the outer face of $G_{k+1}$, which would contradict the maximality of $k$. However, the fact that $v_{k+1}$ is not adjacent to $v_i$ implies that the neighbors of $v_{k+1}$ in $G_k$ do not form an interval of $C_k-(u,v)$, a contradiction to condition (CO-2) of $\pi$. 
\end{proof}

\subparagraph{\bf Canonical orientations.} Let $\pi=(v_1,\dots,v_n)$ be a canonical ordering of $G$ with first vertex~$u$. Orient every edge $(v_i,v_j)$ of $G$ from $v_i$ to $v_j$ if and only if $i<j$. The resulting orientation is the \emph{canonical orientation of $G$ with respect to $\pi$}. We say that an orientation $\cal D$ of $G$ is a \emph{canonical orientation with first vertex $u$} if there exists a canonical ordering $\pi$ of $G$ with first vertex $u$ such that $\cal D$ is the canonical orientation of $G$ with respect to $\pi$; see~\cref{fig:small-example-canonical-orientation}.  A \emph{canonical orientation} of $G$ is a canonical orientation with first vertex $x$, where $x$ is a vertex in $\{u,v,z\}$. Finally, if $G'$ is a maximal planar graph, a \emph{canonical orientation} of $G'$ is a canonical orientation of a maximal plane graph isomorphic~to~$G'$.

\subparagraph{\bf Schnyder woods.} A \emph{Schnyder wood} $(\mathcal T_1,\mathcal T_2, \mathcal T_3)$ of $G$ is an assignment of directions and of the colors $1$, $2$ and~$3$ to the internal edges of $G$ such that the following two properties hold; see~\cref{fig:small-example-Schnyder} and refer to~\cite{DBLP:conf/soda/Schnyder90}. Let $i-1 = 3$, if $i=1$, and let $i+1=1$, if $i=3$.

\begin{enumerate}[(S-1)]
	\item For $i=1,2,3$, each internal vertex $x$ has one outgoing edge $e_i$ of color $i$. The outgoing edges $e_1$, $e_2$, and $e_3$ appear in this counter-clockwise order at $x$. Further, for $i=1,2,3$, all the incoming edges at $x$ of color $i$ appear in the clockwise sector between the edges $e_{i+1}$ and $e_{i-1}$. 
	\item At the outer vertices $u$, $v$, and $z$, all the internal edges are incoming and of color $1$, $2$, and $3$, respectively. 
\end{enumerate}
	\begin{center}
		\begin{tabular}{ccc}
			\centering
			\includegraphics[page=1]{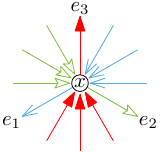}
			&~~~~~~~~~~~~~~~~&
			\includegraphics[page=2]{schnyder-conditions}
			\\
			{\bf (S-1)} 			
			&~~~~~~~~~~~~~~~~& 
			{\bf (S-2)} 	
		\end{tabular}
	\end{center}

\noindent
Finally, if $G'$ is a maximal planar graph, a \emph{Schnyder wood}  of $G'$ is a Schnyder wood of a maximal plane graph isomorphic to $G'$.

% ---------------------------------------------

\section{Canonical Orientations} \label{se:canonical-orientation}

In~\cite[Lemma 3.6, Lemma 3.7, Theorem 3.3]{DBLP:journals/dm/FraysseixM01}, de Fraysseix and Ossona De Mendez proved the following characterization, for which we provide here an alternative proof.  

\begin{theorem}[\cite{DBLP:journals/dm/FraysseixM01}]\label{th:canonical-orientation-characterization}
Let $G$ be a maximal plane graph and let $(u,v,z)$ be the cycle delimiting its outer face, where $u$, $v$, and $z$ appear in this counter-clockwise order along the cycle. An orientation $\cal D$ of $G$ is a canonical orientation with first vertex $u$ if and only if $\cal D$ is a $uz$-orientation in which every internal vertex has at least two incoming edges.
%\green{every vertex different from $u$ and $v$ has at least two incoming edges, whereas $v$ has one incoming edge}.
\end{theorem}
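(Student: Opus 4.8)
The plan is to prove both directions of the equivalence by induction on $n$, closely mirroring the recursive structure of a canonical ordering.

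\medskip

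\noindent\textbf{Necessity.} Suppose $\mathcal{D}$ is a canonical orientation with first vertex $u$, arising from a canonical ordering $\pi=(v_1=u,v_2=v,v_3,\dots,v_n=z)$. That every edge is oriented from the lower-indexed to the higher-indexed endpoint immediately makes $\mathcal{D}$ acyclic, with $u=v_1$ the only source and $z=v_n$ the only sink: indeed $v_1$ has all edges outgoing, $v_n$ has all edges incoming, and every other $v_i$ with $2\le i\le n-1$ has, by \cref{pr:one-in-the-future}, at least one neighbor of smaller index (hence an incoming edge) and at least one of larger index (hence an outgoing edge). So $\mathcal{D}$ is a $uz$-orientation. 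For the ``at least two incoming edges'' condition at internal vertices: by \cref{pr:one-in-the-future} each $v_i$ with $3\le i\le n-1$ has at least two neighbors $v_j$ with $j<i$, giving two incoming edges; and $v_2=v$ is outer, so the only internal vertices among $v_3,\dots,v_{n-1}$ are covered, while $u$, $v$, $z$ need not be checked. This direction is essentially a repackaging of \cref{pr:one-in-the-future}.

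\medskip

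\noindent\textbf{Sufficiency.} This is the harder direction. Let $\mathcal{D}$ be a $uz$-orientation of $G$ in which every internal vertex has indegree at least two; I must produce a canonical ordering $\pi$ with first vertex $u$ whose associated orientation is $\mathcal{D}$. The natural strategy is to peel off a vertex to serve as $v_n$ and recurse, so the crux is: \emph{there exists an internal vertex $w\ne v$ on the outer cycle $C_{n-1}$ of $G\setminus\{z\}$ \dots}—wait, more precisely, the right move is to remove $z$ itself last, so I instead argue about choosing $v_{n-1}$. Concretely, I claim that in $\mathcal{D}$ there is a vertex $w$ that is a sink of the digraph $\mathcal{D}'$ obtained by deleting $z$ (equivalently, all of whose out-edges in $\mathcal{D}$ go to $z$), such that $w$ is incident to the outer face of $G\setminus\{z\}$, removing $w$ from $G\setminus\{z\}$ leaves a biconnected plane graph, and $w$'s neighbors on $C_{n-1}$ form a subinterval of $C_{n-1}-(u,v)$ of size $\ge 2$. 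Given such a $w$, set $v_{n-1}=w$, delete $w$ and its incident edges (and, conceptually, re-add $z$), obtaining a smaller maximal plane graph $G''$; one checks that the restriction of $\mathcal{D}$ to $G''$ is still a $uz$-orientation with all internal vertices of indegree $\ge 2$ (the key point being that every vertex of $C_{n-1}-(u,v)$ other than $w$'s neighbors is unaffected, and $w$'s neighbors lose incoming edges only from $w$, but they gain\dots—this bookkeeping needs care), apply induction to get a canonical ordering of $G''$, and append $w$, then $z$.

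\medskip

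\noindent\textbf{The main obstacle} is establishing the existence of the vertex $w$ with all the required properties simultaneously, and verifying that the induction hypothesis genuinely applies to the reduced graph. To find $w$: consider the outer cycle $C_{n-1}$ of $G\setminus\{z\}$; since $\mathcal{D}$ is acyclic with unique sink $z$, following outgoing edges from any internal vertex must eventually reach $z$, and a counting/topological argument (akin to \cref{obs:st-face} applied to the plane $uz$-graph $\mathcal{D}$, whose faces are \stfaces) should show that the neighbors of $z$ on $C_{n-1}$, together with the indegree-$\ge 2$ condition, force the existence of a removable vertex. One clean route: $\mathcal{D}$ is a plane $uz$-graph, so by \cref{obs:st-face} every face is a \stface and by \cref{obs:bimodality} the in- and out-edges are consecutive at each vertex; the right face of the edge entering $z$ from $v$ (or a suitable internal edge near $z$) bounds a \stface whose ``top'' vertex, when it differs from $z$, is a candidate for $w$—its indegree-$\ge 2$ guarantees biconnectivity of the remainder via \cref{obs:split-biconnected}. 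I expect the bulk of the proof to be the careful verification that removing $w$ preserves the indegree-$\ge 2$ invariant at $w$'s neighbors, handling the boundary cases where a neighbor of $w$ is $u$ or $v$ or becomes incident to $z$. Finally, I should double-check the base case $n=3$ (the triangle $uvz$, whose only $uz$-orientation with the orientations forced is $u\to v$, $u\to z$, $v\to z$, matching the unique canonical ordering $(u,v,z)$).
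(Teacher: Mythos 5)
Your necessity direction is correct (modulo a small slip: you invoke \cref{pr:one-in-the-future} for $v_2$, but that property is stated only for $i\ge 3$; $v_2$ needs the separate, easy observation that $(v_1,v_2)$ is incoming and $(v_2,v_3)$ is outgoing). The sufficiency direction, however, has genuine gaps. You propose to peel off a vertex $w=v_{n-1}$ and recurse on $G''=G\setminus\{w\}$, calling $G''$ ``a smaller maximal plane graph''. It is not: deleting an internal vertex of degree $k\ge 3$ from a maximal plane graph leaves a face of size $k$, so $G''$ is maximal only when $\deg_G(w)=3$. Hence the induction hypothesis, which is about maximal plane graphs, does not apply to $G''$ as you have set things up. To make a peeling argument go through you would need to strengthen the statement to a class closed under the removal (e.g.\ internally triangulated biconnected plane graphs and removal of $z$, yielding $G_{n-1}$), restate what ``canonical orientation'' and the indegree-$\ge 2$ invariant mean in that class, and then actually carry out the bookkeeping you flag as ``needing care'': once $z$ is gone, the former internal vertices now lying on the new outer cycle lose an incoming edge, so the invariant literally fails and must be replaced by a weaker one. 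You also leave the existence of the removable vertex $w$ with all the listed properties at the level of a hunch (``a counting/topological argument should show''), and that existence claim is precisely the heart of the matter.

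The paper avoids the recursion altogether. Given a $uz$-orientation $\mathcal D$ with indegree $\ge 2$ at internal vertices, it takes an \emph{arbitrary} topological sorting $\pi=(v_1,\dots,v_n)$ of $\mathcal D$ and verifies (CO-1) and (CO-2) directly: $v_1=u$ and $v_n=z$ as unique source/sink; $v_2=v$ because every other vertex has two predecessors; $v_3$ is the third vertex of the face on $(u,v)$ by a short planarity argument; $G_{k+1}$ is biconnected because $v_{k+1}$ has $\ge 2$ neighbors already in $G_k$; $v_{k+1}$ lies in the outer face of $G_k$ by a Jordan-curve argument on a directed path to $z$; and the neighbors of $v_{k+1}$ on $C_k-(u,v)$ form an interval, since a gap would create a non-triangular internal face of $G$. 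No invariant has to be threaded through a recursion, which is exactly what your sketch struggles with.
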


\begin{proof}
($\Longrightarrow$) Consider any canonical orientation $\cal D$ with first vertex $u$. We prove that $\cal D$ is a $uz$-orientation as in the statement. Let $\pi=(v_1=u,v_2=v,v_3,\dots, v_{n-1},v_n=z)$ be any canonical ordering of $G$ such that $\cal D$ is the canonical orientation of $G$ with respect to $\pi$. By the construction of $\cal D$ from $\pi$, an edge $(v_i,v_j)$ is directed from $v_i$ to $v_j$ if and only if $i<j$. This implies that $\cal D$ is an acyclic orientation, that $u$ is a source in $\cal D$, and that $z$ is a sink in $\cal D$.
Furthermore, $v_2=v$ has one incoming edge in $\cal D$, namely $(v_1,v_2)$, and at least one outgoing edge in $\cal D$, namely $(v_2,v_3)$. Finally, for $i=3,4,\dots,n-1$, by \cref{pr:one-in-the-future}, we have that $v_i$ has at least two incoming edges and at least one outgoing edge in $\mathcal D$. 

%This concludes the proof that $\cal D$ is a $uz$-orientation in which every internal vertex has at least two incoming edges.

%\green{, and that $v$ has one incoming edge in $\cal D$, namely $(u,v)$}. 
%\green{every vertex different from $u$ and $v$ has at least two incoming edges, whereas $v$ has one incoming edge.}

%By condition (CO-1) of the definition of canonical ordering, $G_3$ is biconnected, hence edges connecting $v_3$ with $u$ and $v$ exist; these edges are oriented towards $v_3$, again since $(v_i,v_j)$ is directed from $v_i$ to $v_j$ if and only if $i<j$. It follows that $v_3$ has two incoming edges in $\cal D$. Furthermore, for $k=3,\dots,n-1$, we have that $v_{k+1}$ has at least two incoming edges in $\cal D$, by condition (CO-2) of the definition of canonical ordering; indeed, the (at least two) neighbors of $v_{k+1}$ in $G_{k}$ precede $v_{k+1}$ in $\pi$. It follows 
% that $u$ is the unique source of $\cal D$ and that $z$ is the unique sink of $\cal D$. Also,  that each vertex different from $z$ has at least one outgoing edge directly comes from \cref{pr:one-in-the-future} and from the construction of $\cal D$ from $\pi$. This concludes the proof that $\cal D$ is a $uz$-orientation in which every vertex different from $u$ and $v$ has at least two incoming edges, whereas $v$ has one incoming edge.

\begin{figure}[t]
\begin{subfigure}{.32\textwidth}
	\includegraphics[page=1,width=\textwidth]{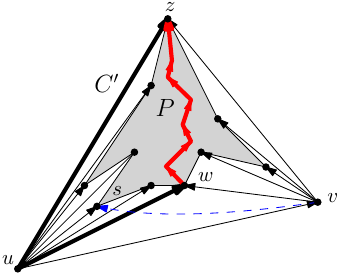}
	\caption{}
	\label{fig:v3-is-third}
\end{subfigure}
\begin{subfigure}{.32\textwidth}
\hfil	\includegraphics[page=2,width=\textwidth]{img/canonical-orderings-canonical-orientations.pdf}
    \caption{}
    \label{fig:vKplus1-internal}
\end{subfigure}
\hfil
\begin{subfigure}{.32\textwidth}
    \includegraphics[page=9,width=\textwidth]{img/canonical-orderings-canonical-orientations.pdf}
 \caption{}
	\label{fig:large-face}
\end{subfigure}
\caption{Illustrations for the proof of \cref{th:canonical-orientation-characterization}. 
(a) Illustration for the proof that $w=v_3$; the directed path $P$ from $w$ to $z$ is (thick) red, the edge $(v,s)$ is (thin dashed) blue, the edges of the cycle $C'$ are thick.
(b) Illustration for the proof that $v_{k+1}$ lies in the outer face of $G_k$; the directed path $P$ from  $v_{k+1}$ to $z$ is (thick) red, the edges of the cycle $C_k$ are thick. (c) Illustration for the proof that the neighbors of $v_{k+1}$ form a subinterval of the path $C_{k}-(u,v)$; the face incident to both $w_r$ and $v_{k+1}$ having length greater than three~is~shaded~yellow.}
\label{fig:canonical-orientation-characterization}
\end{figure}

($\Longleftarrow$) Let $\cal D$ be a $uz$-orientation in which every internal vertex has at least two incoming edges.
%\green{in which every vertex different from $u$ and $v$ has at least two incoming edges, whereas $v$ has one incoming edge}. 
We prove that $\cal D$ is a canonical orientation of $G$ with first vertex $u$. Consider any topological sorting $\pi=(v_1,\dots,v_n)$ of $\cal D$. We show that $\pi$ is a canonical ordering of $G$ with first vertex $u$. 
Clearly, we have $v_1=u$ and $v_n=z$, as $u$ is the only source of $\cal D$ and $z$ is the only sink of $\cal D$. 
Furthermore, we have $v_2=v$, as every vertex different from $u$ and $v$ has at least two incoming edges, and hence at least two vertices come before it in any topological sorting of $\cal D$. Let $w$ be the vertex that is incident to an internal face of $G$ together with $u$ and $v$. 
We prove that every vertex of $\cal D$ different from $u$ and $v$ is a successor of $w$, hence $v_3=w$. 
 Suppose, for a contradiction, that there exists a vertex $s \neq w$ that is a source of the plane digraph $\cal D'$ obtained from $\cal D$ by removing $u$ and $v$; refer to \cref{fig:v3-is-third}. Consider any directed path $P$ from $w$ to $z$ in $\cal D'$ and note that $s$ does not belong to $P$. Since $s$ has at least two incoming edges in $\cal D$, edges from $u$ and $v$ to $s$ exist. Furthermore, in $\cal D$, the vertex $s$ lies either in the interior of the cycle $C'$ bounded by the edge $(u,w)$, by $P$, and by the edge $(u,z)$, or in the interior of the cycle $C''$ bounded by the edge $(v,w)$, by $P$, and by the edge $(v,z)$. In the former case,  the edge $(v,s)$ crosses $C'$, while in the latter case, the edge $(u,s)$ crosses $C''$. In both cases, we get a contradiction to the planarity of $\cal D$. This contradiction proves that $v_3=w$.

%%PROVARE

It remains to prove that, for $k=3,\dots,n-1$, the ordering $\pi$ satisfies conditions~(CO-1) and~(CO-2) of a canonical ordering. In order to prove condition~(CO-1), we proceed by induction on $k$. In the base case, $k=3$; that $G_3$ is biconnected comes from the fact that it coincides with the cycle $(u,v,w)$. Suppose now that $G_k$ is biconnected, for some $k\in \{3,\dots,n-1\}$. Since $v_{k+1}$ has at least two incoming edges, by assumption, and since the end-vertices of such edges different from $v_{k+1}$ belong to $G_{k}$, since $\pi$ is a topological sorting of $\mathcal D$, it follows that $G_{k+1}$ is biconnected, which concludes the proof of condition~(CO-1). In order to prove condition~(CO-2), we first prove that, for $k=3,\dots,n-1$, the vertex $v_{k+1}$ is in the outer face of $G_{k}$ (refer to \cref{fig:vKplus1-internal}); suppose, for a contradiction, that $v_{k+1}$ lies in the interior of $C_k$. Consider a directed path $P$ in $\cal D$ from $v_{k+1}$ to $z$. Such a path exists as $z$ is the only sink of $\cal D$; moreover, $P$ does not contain any vertex of $G_k$ (and hence of $C_k$) given that $\pi$ is a topological sorting of $\cal D$, hence every vertex $v_j$ of $P$ is such that $j>k$. Since $v_{k+1}$ lies in the interior of $C_k$, while $z$ lies in its exterior, by the Jordan curve's theorem we have that $P$ crosses $C_k$, a contradiction which proves that $v_{k+1}$ is in the outer face of $G_{k}$. 
We now prove that the neighbors of $v_{k+1}$ in $G_{k}$ form an (at
least $2$-element) subinterval of the path $C_{k}-(u,v)$; let $(w_1=u,w_2,\dots,w_m=v)$ be such a path; refer to \cref{fig:large-face}. Let $w_p$ and $w_q$ be the neighbors of $v_{k+1}$ such that $p$ is minimum and $q$ is maximum. Note that $p<q$, given that $v_{k+1}$ has at least two incoming edges in $\cal D$; that such edges connect $v_{k+1}$ to vertices in $C_{k}$ follows by the planarity of $G$. Suppose, for a contradiction, that there exists an index $r$ with $p<r<q$ such that $w_r$ is not a neighbor of $v_{k+1}$. Then the internal face of $G_{k+1}$ incident to both $v_{k+1}$ and $w_r$ is also incident to $w_{r-1}$ and $w_{r+1}$, hence its length is larger than three. However, since the vertices $v_{k+2},\dots,v_n$ and their incident edges lie in the outer face of $G_{k+1}$, such an internal face of $G_{k+1}$ is also  a face of $G$, a contradiction to the fact that $G$ is a maximal plane graph. This concludes the proof~of~condition~(CO-2)~and~of~the~theorem.~\end{proof}

%%%%%%%%%%%%%%%%%%%%%%%%%%%
%%%%%%%%%%%%%%%%%%%%%%%%%%%
%%%%%%%%%%%%%%%%%%%%%%%%%%%
%%%%%%%%%%%%%%%%%%%%%%%%%%%
%%%%%%%%%%%%%%%%%%%%%%%%%%%

Our proof of \cref{th:canonical-orientation-characterization} implies the following.

\begin{lemma}\label{le:from-orientation-to-ordering}
Consider any canonical orientation $\mathcal D$ with first vertex $u$ of a maximal plane graph $G$. Then any topological sorting of $\mathcal D$ is a canonical ordering of $G$ with first vertex $u$.    
\end{lemma}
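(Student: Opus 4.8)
The plan is to observe that this lemma is essentially a restatement of what the ($\Longleftarrow$) direction of \cref{th:canonical-orientation-characterization} already proves, so the proof will be short and will mostly consist of citing the earlier argument correctly.

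First I would invoke the ($\Longrightarrow$) direction of \cref{th:canonical-orientation-characterization}: since $\mathcal D$ is a canonical orientation of $G$ with first vertex $u$, it is a $uz$-orientation in which every internal vertex has at least two incoming edges. This places $\mathcal D$ in exactly the class of orientations considered in the ($\Longleftarrow$) direction of the theorem.

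Next I would recall that, in the proof of the ($\Longleftarrow$) direction, we fixed an \emph{arbitrary} topological sorting $\pi=(v_1,\dots,v_n)$ of such an orientation and showed, with no further assumption on $\pi$, that $v_1=u$, $v_n=z$, $v_2=v$, that $v_3$ is the third outer vertex's internal neighbor $w$, and that conditions (CO-1) and (CO-2) hold for every $k=3,\dots,n-1$; that is, $\pi$ is a canonical ordering of $G$ with first vertex $u$. Since this chain of deductions depends only on $\mathcal D$ being a $uz$-orientation with the two-incoming-edges property, and not on the particular topological sorting chosen, it applies verbatim to every topological sorting of $\mathcal D$. This yields the claim.

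I do not expect a genuine obstacle here; the only point that needs care is to confirm that the argument in the ($\Longleftarrow$) direction truly quantifies over all topological sortings (it does, since it begins ``Consider any topological sorting $\pi$'' and never specializes $\pi$), so that nothing beyond a pointer to that proof is required.
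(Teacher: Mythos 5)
Your proposal is correct and follows the same route as the paper: apply the ($\Longrightarrow$) direction of \cref{th:canonical-orientation-characterization} to identify $\mathcal D$ as a $uz$-orientation with the two-incoming-edges property, then observe that the ($\Longleftarrow$) argument already proves that any topological sorting of such an orientation is a canonical ordering with first vertex $u$. The only quibble is a small inaccuracy in how you describe $v_3$ (it is the unique vertex sharing an internal face with the edge $(u,v)$, not ``the third outer vertex's internal neighbor''), but this does not affect the validity of the citation.
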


\begin{proof}
Let $(u,v,z)$ be the cycle delimiting the outer face of $G$, where $u$, $v$, and $z$ appear in this counter-clockwise order along the cycle. By \cref{th:canonical-orientation-characterization}, we have that $\mathcal D$ is a $uz$-orientation in which every internal vertex has at least two incoming edges. The proof of \cref{th:canonical-orientation-characterization} shows that every topological sorting of a $uz$-orientation in which every internal vertex has at least two incoming edges is a canonical ordering of $G$ with first vertex $u$.
\end{proof}

\begin{figure}
%\begin{figure}[t]
%    \centering
%    \begin{subfigure}{.26\textwidth}
%   \vspace{-0.5cm}
   \centering
    \includegraphics[page=11]{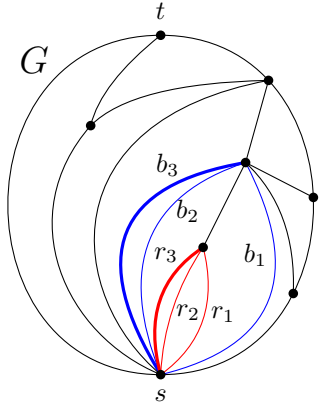}
    \caption{Illustration for the definition of a well-formed graph.}
    \label{fig:well-formed}
%    \end{subfigure}
    %\hfil
    %\begin{subfigure}{.26\textwidth}
    %\centering
    %\includegraphics[page=10]{img/canonical-orderings-canonical-orientations.pdf}
    %\caption{\label{fig:multilens}}
    %\end{subfigure}
  %\caption{multilenses.}
%Illustration of a biconnected well-formed plane multigraph $G$ with poles $s$ and $t$. 
%(b) Illustration for the definition of multilens and loose edge.
%}
%\end{figure}
\end{figure}

Given two parallel edges $h_1$ and $h_2$ with end-vertices $s$ and $x$ in a plane \multigraph, we denote by $\ell(h_1,h_2)$ the open region of the plane bounded by $h_1$ and $h_2$; we say that $\ell(h_1,h_2)$ is a \emph{multilens} if it contains no vertices in its interior. Observe that $\ell(h_1,h_2)$ might contain edges parallel to $h_1$ and $h_2$ in its interior, or it might coincide with an internal face of the \multigraph. %In \cref{fig:well-formed}, any two of the parallel edges $h_1,h_2,h_3$ form a multilens; also, $\ell(h_1,h_2)$ and $\ell(h_2,h_3)$ bound faces; on the other hand, the edge $h_4$ does not form a multilens with any of its parallel edges $h_1,h_2,h_3$, since the vertex $v$ belongs to $\ell(h_i,h_4)$, with $i \in \{1,2,3\}$. 
The leftmost edge of a maximal set of parallel edges is said to be \emph{loose}, whereas the other edges of such a set are \emph{nonloose}.  
%In \cref{fig:well-formed}, the edges $b_3$ and $r_3$ are loose, whereas edges $b_2$, $b_1$, $r_2$, and $r_1$ are nonloose.
In \cref{fig:well-formed}, any two of the (red) parallel edges $r_1$, $r_2$, and $r_3$ form a multilens, the two (blue) parallel edges $b_2$ and $b_3$ form a multilens, whereas neither $b_2$ nor $b_3$ forms a multilens with their (blue) parallel~edge~$b_1$; the multilenses $\ell(r_1,r_2)$, $\ell(r_2,r_3)$, and  $\ell(b_2,b_3)$ are also faces; loose edges are thick ($b_3$ and $r_3$), whereas nonloose edges are thin ($b_1$, $b_2$, $r_1$, and $r_2$). 

The following two definitions introduce the concepts most of this section will deal with.

\begin{definition}\label{def:well-formed}
A biconnected plane \multigraph $G$ with two distinguished vertices $s$ and $t$ is called \emph{well-formed} if it satisfies the following conditions (refer to \cref{fig:well-formed}):
\begin{enumerate}[WF1:]
    \item\label{cond:st} $s$ and $t$ are both incident to the outer face of $G$ and $s$ immediately precedes $t$ in clockwise order along the cycle $C_o$ bounding the outer face;
    \item\label{cond:faces} all the internal faces of $G$ have either two or three incident vertices;
    \item\label{cond:multiple} multiple edges, if any, are all incident to $s$; and
    \item\label{cond:lens} if there exist two parallel edges $h_1$ and $h_2$ with end-vertices $s$ and $x$ such that $\ell(h_1,h_2)$ is not a multilens, then there exist two parallel edges $h'_1$ and $h'_2$ between $s$ and a vertex $y\neq x$ such that $\ell(h'_1,h'_2)$ is a multilens and such that $\ell(h'_1,h'_2)\subset \ell(h_1,h_2)$.
\end{enumerate}
Vertices $s$ and $t$ are called \emph{poles} of $G$.
\end{definition}

\begin{definition}\label{def:good-orientation}
An $st$-orientation $\cal D$ of a well-formed biconnected plane \multigraph $G$ with poles $s$ and $t$ is \emph{inner-canonical} if every internal vertex of $G$ has at least two incoming edges in $\cal D$.
\end{definition}

We introduce notation that will be used throughout this section. Let $G$ be a well-formed biconnected plane \multigraph with poles $s$ and $t$. Let $(w_0=s,w_1,\dots,w_k=t)$ be the right path $p_r$ of $G$. Let $e_1,e_2,\dots,e_m$ be the counter-clockwise order of the edges incident to $s$, where $e_1$ is the first edge of $p_r$ and $e_m$ is the unique edge of the left path of $G$, by Condition WF1. Let $v_1,\dots,v_m$ be the end-vertices of $e_1,\dots,e_m$ different from $s$, respectively. 
%We now describe two operations that can be applied to $G$ in order to obtain a smaller well-formed biconnected plane multigraph with poles $s$ and $t$, if certain conditions apply.
Moreover, denote by $G^*$ the plane multigraph resulting from the contraction of $e_1$ in $G$.
Also, if $G$ contains parallel edges, let $j\in\{1,\dots,m-1\}$ be the smallest index such that $e_{j}$ and $e_{j+1}$ define a multilens of $G$; denote by $G^-$ the plane \multigraph resulting from the removal of $e_1,\dots,e_j$ from $G$. The next lemmata prove that, under certain conditions, $G^*$ and $G^-$ are well-formed \multigraphs.

\begin{lemma} \label{le:contract-well-formed}
%Suppose that $G$ does not contain parallel edges between $s$ and $w_1$. Then the contraction of $e_1$ in $G$ results in a well-formed biconnected plane multigraph $G^*$ with~poles~$s$~and~$t$.
Suppose that $G$ does not contain parallel edges between $s$ and $w_1$. Then $G^*$ is a well-formed biconnected plane \multigraph with~poles~$s$~and~$t$.
%
% \begin{figure}[t]
%     \centering
%     \begin{subfigure}{.26\textwidth}
%     \centering
%     	\includegraphics[page=5]{img/canonical-orderings-canonical-orientations.pdf}
%         \caption{}
%     	\label{fig:no-multiple-components}
%     \end{subfigure}
%     \hfill
%     \begin{subfigure}{.68\textwidth}
%     \centering
%     	\includegraphics[page=8]{img/canonical-orderings-canonical-orientations.pdf}
%     	\caption{}
%     	\label{fig:contraction}
%     \end{subfigure}
% \caption{
% Illustrations for the proof of \cref{le:contract-well-formed}. (a) The components separated by $\{s,w_1\}$. (b) The contraction of the edge $e_1 = (s,w_1)$ in $G$.
% }
% \end{figure}

\begin{figure}[t]
    \centering
    \includegraphics[page=8]{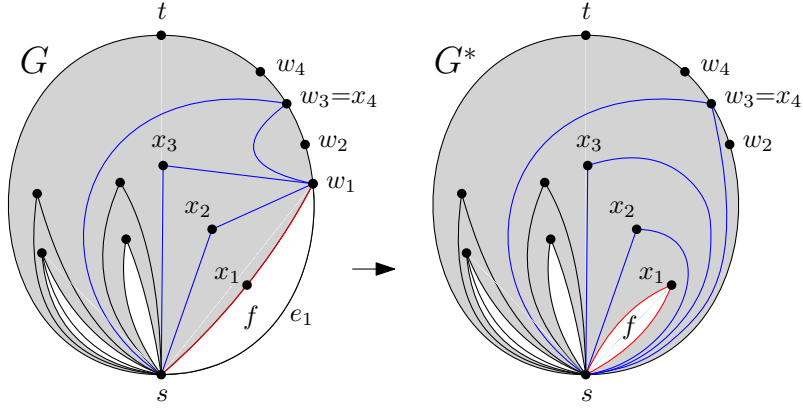}
    \caption{Illustration for the contraction of the edge $e_1$.}
    \label{fig:contraction}
\end{figure}

\end{lemma}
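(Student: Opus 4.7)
The plan is to verify biconnectivity together with the four conditions WF1--WF4 for $G^*$, regarding the vertex obtained by contracting $e_1=(s,w_1)$ as the new pole~$s$. The hypothesis that $G$ has no parallel edges between $s$ and $w_1$ is crucial: it ensures that the contraction of $e_1$ creates no self-loops, and, combined with WF2 for $G$, that the unique internal face of $G$ lying immediately counter-clockwise from $e_1$ at $s$ is the triangle $(s,w_1,v_2)$ rather than a bigon.

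First I would verify biconnectivity. Since $G$ is biconnected and $e_1$ is a single (non-parallel) edge, contracting $e_1$ preserves biconnectivity: any two internally vertex-disjoint paths in $G$ between a pair of vertices survive the identification of $s$ and $w_1$, because whenever both paths meet $\{s,w_1\}$ then $e_1$ lies on one of them, so removing $e_1$ leaves the two paths internally disjoint after the merge. Condition WF1 then follows because $e_1$ lies on the boundary of the outer face (being the first edge of $p_r$), so after contraction the outer boundary still visits~$s$ immediately before~$t$ in clockwise order. For WF2, the only face whose vertex-count changes under contraction is the internal face of $G$ immediately counter-clockwise of $e_1$ at $s$: this triangle $(s,w_1,v_2)$ becomes the bigon $\{s,v_2\}$ of $G^*$, while every other internal face preserves its vertex-count. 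For WF3, any newly created parallel edges arise between $s$ and a common neighbor of $s$ and $w_1$ in $G$, so they are all incident to the merged pole~$s$.

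The main obstacle will be WF4. For each common neighbor $x$ of $s$ and $w_1$ in $G$, contracting $e_1$ turns the edges $(s,x)$ and $(w_1,x)$ into parallel edges at~$s$ in $G^*$. The pair arising from $x=v_2$ bounds, in $G^*$, the region that was the triangular face $(s,w_1,v_2)$ of $G$; since this region contained no vertices in its interior, it is a multilens in $G^*$. For any other common neighbor $x$, the region bounded in $G^*$ by the new parallel pair corresponds, in $G$, to a region whose boundary traverses $e_1$ and which may contain further vertices and edges; here I would exploit WF4 for $G$ together with the local structure near $s$ guaranteed by WF2 and WF3, either verifying that such a region is itself a multilens in $G^*$, or locating strictly inside it a pair of parallel edges at $s$ forming a multilens, walking counter-clockwise from $e_1$ through consecutive triangular faces at $s$ until a bigon is reached. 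Pre-existing parallel pairs of $G$ keep their multilens status, or their witnessing multilens, in $G^*$, because the contraction of $e_1$ does not add vertices to the interior of any region disjoint from $e_1$.
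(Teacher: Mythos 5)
Your biconnectivity argument is wrong as stated, and this is the one place in the proof where the hypotheses of a well-formed graph must actually be invoked. You assert that contracting a single, non-parallel edge of a biconnected graph preserves biconnectivity, on the grounds that whenever two internally disjoint $a$--$b$ paths both meet $\{s,w_1\}$, the edge $e_1$ lies on one of them. That implication is false: one path may pass through $s$ and the other through $w_1$ with neither traversing $e_1$, and after contraction both pass through the merged vertex, so they are no longer internally disjoint. As a concrete counterexample to the general claim, take $K_4$ with one edge $cd$ deleted and contract the edge $ab$: the result has a cut-vertex, even though $ab$ has no parallel. The correct reduction is that the only candidate cut-vertex of $G^*$ is the merged vertex, which is a cut-vertex if and only if $\{s,w_1\}$ is a separation pair of $G$; ruling this out requires WF2. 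The paper does this by noting that if $\{s,w_1\}$ separated $G$ into $k\ge 3$ components (one of them being $e_1$ itself), then, because no parallel $(s,w_1)$-edges exist, the components other than $e_1$ each contain a third vertex, and the internal face of $G$ shared by two consecutive such components would then have at least four incident vertices, contradicting WF2.

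The remainder of your plan is sound and matches the paper's approach. WF1--WF3 are handled as in the paper. For WF4 your sketch points in the right direction but is under-developed: the clean observation (which your ``walk counter-clockwise from $e_1$'' plan would essentially rediscover) is that the bigon $\ell(e^\circ,e^\diamond)$ arising from the former triangular face $(s,w_1,v_2)$ is a multilens of $G^*$ contained in $\ell(e_a,e_b)$ for every newly parallel pair $(e_a,e_b)$ created by the contraction, while for pairs that were already parallel in $G$ the witnessing multilens of $G$ survives because $e_1$ lies outside every such region.
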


\begin{proof}
    Since $G$ is biconnected, if $G^*$ contains a cut-vertex, then this is necessarily $s$. It follows that $\{s,w_1\}$ is a split pair of $G$. Let $G_1,G_2,\dots,G_k$ be the components separated by $\{s,w_1\}$, where $k\geq 3$ and $G_1$ coincides with the edge $(s,w_1)$ of the right path of $G$. Since $G$ does not contain parallel edges, we have that $G_2,\dots,G_k$ are not single edges.
    Then the internal face of $G$ that is incident to $G_2$ and $G_3$ is incident to at least four vertices, a contradiction to Condition WF2 of $G$. This proves that $G^*$ is biconnected.
    
    We next prove that $G^*$ is well-formed; refer to \cref{fig:contraction}. 
    \begin{itemize}
    \item Condition WF1 follows from the fact that the left path of $G^*$, as well as the left path of $G$, is the edge $(s,t)$; this trivially follows from the fact that $t\neq w_1$, since no two parallel edges between $s$ and $w_1$ exist in $G$. 
    \item In order to prove Condition WF2, observe that $G^*$ contains no face incident to a single vertex, as the same is true for $G$, by Condition WF2 for $G$, and since no two parallel edges between $s$ and $w_1$ exist in $G$. Furthermore, that every face of $G^*$ has at most three incident vertices descends from the fact that $G$ satisfies \cref{cond:faces} and that the contraction of an edge cannot increase the number of vertices incident to a face. 
    \item Condition WF3 follows from the fact that $G$ satisfies Condition WF3 and that $G^*$ is obtained from $G$ by the contraction of an edge that has $s$ as an end-vertex; thus, new multiple edges, if any, are all incident to $s$. 
    \item Finally, we prove Condition WF4. Consider any pair $(e_1,e_2)$ of parallel edges of $G^*$ such that $\ell(e_1,e_2)$ is not a multilens. By Condition WF3, both $e_1$ and $e_2$ are incident to $s$. If $e_1$ and $e_2$ are also parallel edges of $G$ (that is, they do not become parallel edges because of the contraction of $(s,w_1)$), then there exist two parallel edges $e_3$ and $e_4$ such that $\ell(e_3,e_4)$ is a multilens and such that $\ell(e_3,e_4)\subset \ell(e_1,e_2)$ in $G^*$ as the same is true in $G$, given that $G$ satisfies Condition WF4. Otherwise, $e_1$ and $e_2$ are not parallel in $G$. Let $x_1, x_2, \dots, x_k$ be the common neighbors of $s$ and $w_1$ in $G$, listed in the order in which they appear in a clockwise visit of the adjacency list of $w_1$, starting at the vertex $x_1$ that belongs to the internal face $f$ of $G$ incident to $(s, w_1)$. Let $e^\circ=(s, x_1)$ and $e^\diamond=(w_1, x_1)$ be the edges incident to $f$ and different from $(s, w_1)$. Consider any pair $(e_1, e_2)$ of parallel edges of $G^*$ that are not parallel in $G$ and such that $\ell(e_1, e_2)$ is not a multilens. Then we have $e_1=(s, x_i)$ and $e_2=(s, x_i)$, with $2\leq i\leq k$. However, in $G^*$, it holds that the edges $e^\circ$ and $e^\diamond$ define a multilens and that $\ell(e^\circ, e^\diamond)\subset\ell(e_1, e_2)$. Therefore Condition WF4 holds for $G^*$. 
\end{itemize}
This concludes the proof that $G^*$ is well-formed, and the proof of the lemma.
\end{proof}

\begin{lemma} \label{le:remove-well-formed}
Suppose that $G$ contains parallel edges and let $j\in\{1,\dots,m-1\}$ be the smallest index such that $e_{j}$ and $e_{j+1}$ define a multilens of $G$. Suppose also that either $j=1$, or $j>1$ and $v_2,\dots,v_j$ are not incident to the outer face of $G$. %Then the removal of $e_1,e_2,\dots,e_j$ from $G$ results in a well-formed biconnected plane multigraph $G^-$ with poles $s$ and $t$.
Then the graph $G^-$ is a well-formed biconnected plane \multigraph with poles $s$ and $t$.

\end{lemma}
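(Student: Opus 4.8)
The plan is to first pin down the combinatorial structure of $G^-$, and then read off biconnectivity and Conditions WF1--WF4 from it.

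First I would analyze the faces that are ``peeled off'' by the removal of $e_1,\dots,e_j$. For every $i\in\{1,\dots,j-1\}$ the edges $e_i$ and $e_{i+1}$ are not parallel, since otherwise $\ell(e_i,e_{i+1})$ would be a bigonal internal face of $G$, hence a multilens, contradicting the minimality of $j$. Therefore, by Condition WF2 and since every face of the biconnected plane graph $G$ is bounded by a cycle~\cite{w-nspg-32}, the face $f_i$ of $G$ lying in the wedge between $e_i$ and $e_{i+1}$ at $s$ is a triangle with vertex set $\{s,v_i,v_{i+1}\}$ (so that $(v_i,v_{i+1})$ is an edge of $G$), while $f_j:=\ell(e_j,e_{j+1})$ is a bigonal internal face; moreover $f_1,\dots,f_j$ are pairwise distinct internal faces of $G$, because each face of $G$, being a cycle, occupies a single wedge at $s$ and the outer face occupies the wedge between $e_m$ and $e_1$. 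Next I would show that $v_2,\dots,v_j$ are pairwise distinct: if $v_a=v_c$ for some $2\le a<c\le j$ with $c\ge a+2$, then $e_a$ and $e_c$ are parallel, the vertex $v_{a+1}$ lies in the interior of $\ell(e_a,e_c)$ (note $v_{a+1}\neq v_a$, as $a<j$), so $\ell(e_a,e_c)$ is not a multilens, and by Condition WF4 there is a multilens $\ell(h'_1,h'_2)\subset\ell(e_a,e_c)$ with $h'_1,h'_2$ joining $s$ to some $y\neq v_a$; since $h'_1,h'_2$ are incident to $s$ and contained in the closure of $\ell(e_a,e_c)$, they lie among $e_{a+1},\dots,e_{c-1}$, and taking the innermost parallel pair inside $\ell(h'_1,h'_2)$ yields consecutive parallel edges $e_r$ and $e_{r+1}$ with $a<r<c\le j$, contradicting the minimality of $j$. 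Together with the hypothesis---which, when $j>1$, guarantees that $v_2,\dots,v_j$ are not incident to the outer face of $G$ and hence differ from $s$ and from the right-path vertices $w_1,\dots,w_k$---this shows that the vertex sequence $C^-:=(s,v_j,v_{j-1},\dots,v_2,v_1=w_1,w_2,\dots,w_k=t,s)$ is a simple cycle of $G^-$.

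Next I would describe $G^-$ and prove that it is biconnected. Removing $e_1,\dots,e_j$ one at a time, in this order, a direct inductive argument shows that, for $i=1,\dots,j$, after the removal of $e_1,\dots,e_{i-1}$ the edge $e_i$ is incident to the current outer face and to $f_i$---two distinct faces, hence $e_i$ is not a bridge---so its removal merges them. Consequently, in $G^-$ the outer face is bounded by $C^-$, whereas every internal face of $G^-$ is an internal face of $G$ left untouched by the removals and is therefore bounded by a cycle, since $G$ is biconnected. By \cite[Theorem~10.7]{bondy2011graph}---exactly as in the proof of \cref{obs:split-biconnected}---it follows that $G^-$ is biconnected. (In the degenerate case where $G$ consists of two parallel edges, $G^-$ is a single edge, which is biconnected by convention.)

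Finally I would verify Conditions WF1--WF4 for $G^-$ with poles $s$ and $t$. For WF1, the edge $e_m=(s,t)$ is not removed, and inspecting the rotation at $s$ in $G^-$ (whose incident edges are $e_{j+1},\dots,e_m$ in counter-clockwise order) shows that $s$ still immediately precedes $t$ in clockwise order along $C^-$. For WF2, the internal faces of $G^-$ are internal faces of $G$, so they have two or three incident vertices. For WF3, $E(G^-)\subseteq E(G)$ and edge removals create no new parallel edges, so the multiple edges of $G^-$, being multiple edges of $G$, are all incident to $s$. For WF4, two parallel edges of $G^-$ are parallel edges of $G$ incident to $s$, say $e_a$ and $e_c$ with $j+1\le a<c\le m$; since $G$ and $G^-$ have the same vertex set, $\ell(e_a,e_c)$ is a multilens in $G^-$ if and only if it is one in $G$, and if it is not, the multilens $\ell(h'_1,h'_2)\subset\ell(e_a,e_c)$ witnessing Condition WF4 in $G$ has its two edges among $e_{a+1},\dots,e_{c-1}$, hence present in $G^-$, and it is still a multilens in $G^-$. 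This settles the four conditions.

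I expect the main obstacle to be the first step---showing that the removal peels off $f_1,\dots,f_j$ cleanly and that $C^-$ is a \emph{simple} cycle. This is exactly where the hypothesis on $v_2,\dots,v_j$ is used, together with the somewhat delicate consequence of Condition WF4 forcing $v_2,\dots,v_j$ to be pairwise distinct; once the structure of $C^-$ is understood, biconnectivity and Conditions WF1--WF4 follow with little extra effort.
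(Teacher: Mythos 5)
Your proof is correct and follows essentially the same route as the paper's. Both proofs observe that, by minimality of $j$ together with Condition WF2, the wedge faces $f_1,\dots,f_{j-1}$ at $s$ are triangles with $(v_i,v_{i+1})\in E(G)$ and $f_j$ is a bigon, and both deduce that $C^-=(s,v_j,\dots,v_1=w_1,w_2,\dots,w_k=t,s)$ is a simple cycle bounding the outer face of $G^-$; your argument that $v_2,\dots,v_j$ are pairwise distinct (parallel $e_a,e_c$ would, via WF4, force a consecutive multilens strictly before index $j$) is the same combinatorial fact that the paper establishes via its Eulerian-visit argument on $P$. The only genuine presentational differences are that the paper handles $j=1$ as a separate, shorter case while yours treats it uniformly, and that for biconnectivity the paper invokes \cref{obs:split-biconnected} (with $C^-$ as the cycle) whereas you argue directly that each $e_i$ is not a bridge so each removal merely merges faces and then apply~\cite[Theorem~10.7]{bondy2011graph}; both variants are equally valid. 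Your checks of Conditions WF1--WF4 match the paper's reasoning closely, including the index bookkeeping $j+1\le a<c\le m$ in WF4.
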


\begin{proof}
\begin{figure}[tb!]
\centering
\begin{subfigure}{.9\textwidth}
\centering
	\includegraphics[page=3]{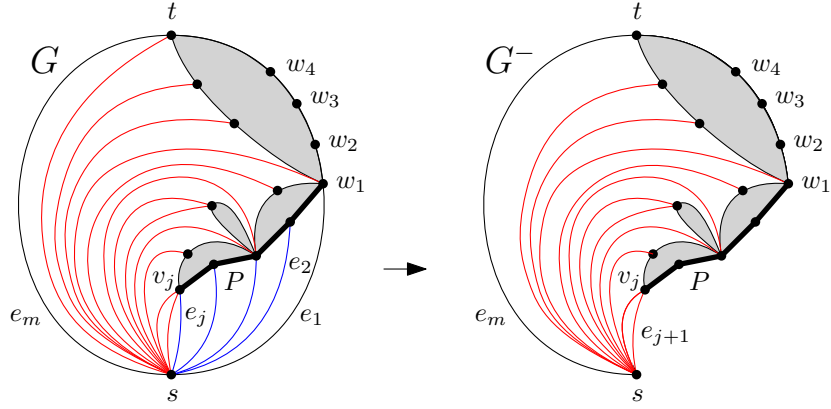}
    \caption{\label{fig:remove-well-formed-parallel-to-e1-exist}There exist parallel edges between $s$ and $w_1$.}
\end{subfigure}
\\
\begin{subfigure}{.9\textwidth}
\centering
	\includegraphics[page=16]{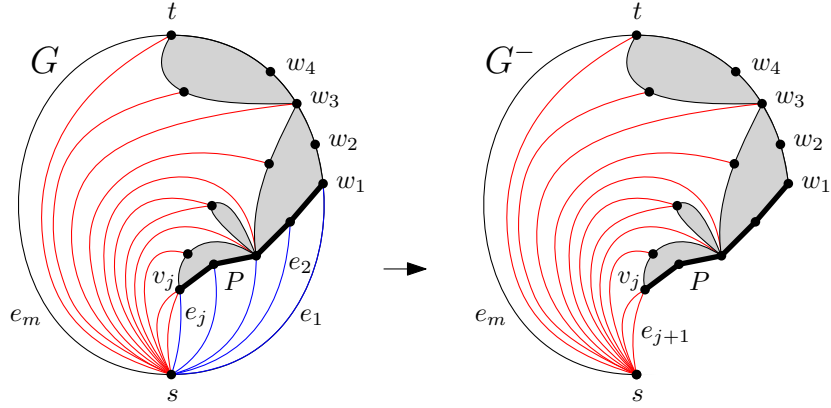}
\caption{\label{fig:remove-well-formed-parallel-to-e1-do-not-exist}There exist no parallel edges between $s$ and $w_1$.}
\end{subfigure}
\caption{\label{fig:remove-well-formed}Illustration for the removal of the edges $e_1,\dots,e_j$.
}
\end{figure}
    {\em We start with the proof for the case in which $j=1$.}
    In this case, we have that $e_2$ is also an edge between $s$ and $w_1$. Then $G^-$ clearly is a well-formed biconnected plane graph with poles $s$ and $t$. In particular, although $G^-$ does not contain the multilenses of $G$ which have $e_1$ on their boundary, no bounded region delimited by two parallel edges contains such multilenses in its interior in $G$, given that $e_1$ is incident to the outer face of $G$. 

    {\em We now consider the case in which $j>1$ and $v_2,\dots,v_j$ are not incident to the outer face of $G$}; refer to \cref{fig:remove-well-formed}. We first prove that $G^-$ is biconnected. In order to do that, we just need to prove that its outer face is bounded by a cycle, as the biconnectivity then follows from \cref{obs:split-biconnected}. By the minimality of $j$, for $i=1,\dots j$, the internal face of $G$ delimited by $e_i$ and $e_{i+1}$ is triangular. Consider the plane subgraph $P$ of $G$ formed by the edges of such triangular faces that are not incident to $s$. We argue that $P$ is a path between $w_1$ and $v_j$. Consider a clockwise Eulerian visit of the outer face of $P$ that starts at $w_1$ and ends at $v_j$. Suppose, for a contradiction, that during this visit a vertex is encountered more than once. This implies the existence of two parallel edges $e_a$ and $e_b$ with $a,b\in \{1,\dots,j\}$ and with $v_a=v_b$, which contradicts the minimality of $j$, either directly (if $e_a$ and $e_b$ define a multilens) or by Condition WF4 (otherwise). Observe that no vertex $v_i$ with $2\leq i\leq j$ is incident to the outer face of $G$, by hypothesis. Thus, the union of the path $P\cup e_{j+1}$ and of the path $C_o-e_1$ is a cycle $C^-_o$, which bounds the outer face of $G^-$.

    We next prove that $G^-$ is well-formed. 
    \begin{itemize}
        \item Condition WF1 follows from the fact that the left path of $G^-$, as well as the left path of $G$, is the edge $(s,t)$, given that $j<m$ and that $e_m$ is the left path of $G^-$. 
        \item Condition WF2 follows from the fact that $G$ satisfies Condition WF2  and that every internal face of $G^-$ is also an internal face of $G$, given that the edges $e_1,e_2,\dots,e_j$ all lie outside $C^-_o$. 
        \item Condition WF3 follows from the fact that $G$ satisfies Condition WF3 and that the edge set of $G^-$ is a subset of the edge set of $G$. 
        \item Finally, we prove Condition WF4. Consider any pair $(e_p,e_q)$ of parallel edges of $G^-$, where w.l.o.g.\ $p<q$, such that $\ell(e_p,e_q)$ is not a multilens. We prove that $\ell(e_p,e_q)$ contains a multilens in its interior in $G^-$. Since $G^-$ is a subgraph of $G$, we have that the edges $e_p$ and $e_q$ also belong to $G$. Since $G$ satisfies Condition WF4, it contains two edges $e_r$ and $e_s$ such that $\ell(e_r,e_s)$ is a multilens and such that $\ell(e_r,e_s)\subset \ell(e_p,e_q)$, which implies that $p\leq r\leq q$ and $p\leq s\leq q$. Since $e_p$ belongs to $G^-$, it follows that $p>j$. This implies that $r>j$ and that $s>j$, hence $e_r$ and $e_s$ also belong to $G^-$ and thus $\ell(e_p,e_q)$ contains a multilens in its interior in $G^-$. 
    \end{itemize}
This concludes the proof that $G^-$ is well-formed, and the proof of the lemma.
\end{proof}

Inner-canonical orientations of $G^*$ and $G^-$ can be used to construct inner-canonical orientations of $G$, as in the following two lemmata.

\begin{lemma} \label{le:expand-contraction}
Let $\mathcal D^*$ be an inner-canonical orientation of $G^*$. The orientation $\mathcal D$ of $G$ that is obtained from $\mathcal D^*$ by orienting the edge $(s,w_1)$ away from $s$ and by keeping the orientation of all other edges unchanged is inner-canonical. 
\end{lemma}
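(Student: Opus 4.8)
The plan is to verify that $\mathcal D$ is an $st$-orientation of $G$ and that every internal vertex of $G$ has at least two incoming edges in $\mathcal D$. First I would record the bijective correspondence between the edges of $G$ and the edges of $G^*$: every edge of $G$ except $(s,w_1)$ survives in $G^*$, possibly as one of a pair of parallel edges created by the contraction, while $(s,w_1)$ is the single edge of $G$ that disappears. Under this correspondence, for every vertex $x\neq s,w_1$ the set of incoming edges of $x$ in $\mathcal D$ coincides with the set of incoming edges of $x$ in $\mathcal D^*$ (here the merged vertex of $G^*$ plays the role of both $s$ and $w_1$); and the edges incident to $s$ or to $w_1$ in $\mathcal D$, other than $(s,w_1)$ itself, are exactly the edges incident to the merged vertex in $\mathcal D^*$, with the same orientations.

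Next I would check acyclicity. Suppose $\mathcal D$ contained a directed cycle. Since $\mathcal D^*$ is acyclic, such a cycle must use the edge $(s,w_1)$, hence must contain a directed path from $w_1$ back to $s$ in $\mathcal D$ that avoids $(s,w_1)$; contracting $(s,w_1)$ turns this path into a directed closed walk through the merged vertex of $\mathcal D^*$, contradicting acyclicity of $\mathcal D^*$ (a directed closed walk in an acyclic digraph would yield a directed cycle). Then I would identify the sources and sinks: $s$ is a source of $\mathcal D$ because the merged vertex is a source of $\mathcal D^*$ (so no edge of $G^*$ enters it, hence no edge of $G$ other than possibly $(s,w_1)$ enters $s$ or $w_1$; and $(s,w_1)$ is oriented away from $s$) — in particular $w_1$ receives the incoming edge $(s,w_1)$, so $w_1$ is not a source. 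Symmetrically, $t$ is the unique sink, since the merged vertex is not $t$ (as $t\neq w_1$, because $G$ has no parallel edges between $s$ and $w_1$ and $s$ immediately precedes $t$ on $C_o$), and a vertex $x\neq s,w_1,t$ keeps all its outgoing edges and has at least one by the sink-uniqueness in $\mathcal D^*$; the new vertex $w_1$ has at least one outgoing edge in $\mathcal D$ because the merged vertex of $\mathcal D^*$, being distinct from $t$, has at least one outgoing edge, and every such edge is incident to $s$ or $w_1$ in $\mathcal D$, but the ones incident to $s$ other than $(s,w_1)$ would make $s$ not a source — so at least one is outgoing at $w_1$. Uniqueness of the source and sink then follows because any other source or sink of $\mathcal D$ would, via the correspondence, be a source or sink of $\mathcal D^*$ distinct from the merged vertex and from $t$.

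Finally I would verify the two-incoming-edges condition. Let $x$ be an internal vertex of $G$. If $x\notin\{s,w_1\}$, then $x$ is internal in $G^*$ as well (contraction does not move a non-incident vertex to the outer face, and $x$ is not the merged vertex), so $x$ has at least two incoming edges in $\mathcal D^*$, and by the correspondence the same two edges are incoming at $x$ in $\mathcal D$. The vertex $s$ is a pole, hence not internal, so there is nothing to check there. The only remaining case is $x=w_1$: if $w_1$ is internal in $G$ then it has at least two incoming edges in $\mathcal D$ — one of them is $(s,w_1)$, and a second one is supplied as follows. In $\mathcal D^*$ the merged vertex is a source, so all its incident edges are outgoing; the edges incident to $w_1$ in $\mathcal D$ other than $(s,w_1)$ are in bijection with a subset of the edges incident to the merged vertex, but that would make them outgoing at $w_1$, contradicting that $w_1$ needs two incoming edges. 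To resolve this I would instead argue directly from $G$: since $G$ is well-formed and biconnected with $w_1$ internal, and the hypothesis of the lemma (no parallel edges between $s$ and $w_1$) guarantees the contraction is legitimate, I claim $w_1$ has a neighbour $w_0'$ on the right path or an internal neighbour whose edge to $w_1$ is oriented into $w_1$; more cleanly, I will observe that any internal vertex $w_1$ of $G$ that is an endpoint of $(s,w_1)$ must, because $w_1$ is internal in $G$ but the merged vertex is a source of $\mathcal D^*$, have had all of its $\mathcal D^*$-incident edges outgoing, which is impossible unless $w_1$ is in fact \emph{not} internal in $G$. Hence, when $w_1$ is internal, this configuration does not arise, and I reduce to showing $w_1$ is never internal in $G$ under the lemma's hypothesis — which follows from Condition WF1 together with the fact that $w_1$ lies on the right path $p_r$ of $G$, hence is incident to the outer face. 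So in fact $w_1$ is always an outer vertex and the condition is vacuous for it.

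The main obstacle is the case analysis around $w_1$: one must be careful that the contraction does not create a directed cycle through $(s,w_1)$ and that $w_1$, which is the only vertex whose ``incoming-edge'' status genuinely changes (it gains the edge $(s,w_1)$), does not need to be treated as internal — the clean resolution is the observation that $w_1\in p_r$ forces $w_1$ to be incident to the outer face of $G$, so the two-incoming-edges requirement never applies to it.
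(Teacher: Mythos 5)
Your proof takes a genuinely different route from the paper's. The paper invokes \cref{obs:st-face} and verifies that every face of $\mathcal D$ is a \stface; you instead verify acyclicity and the uniqueness of the source and sink directly through the contraction correspondence. Both routes work. Both ultimately rest on the same pivotal observation for the two-incoming-edges condition, namely that $w_1$ lies on the right path $p_r$ and hence is incident to the outer face of $G$, so every internal vertex of $G$ is also internal in $G^*$. Your exposition of that point is unnecessarily circuitous --- you first try to manufacture two incoming edges at $w_1$, derive a contradiction, and only then notice that $w_1$ is an outer vertex so the condition is vacuous for it --- but the observation you land on is correct and is exactly what the paper uses.

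There is one genuine flaw, in the step showing that $w_1$ is not a sink of $\mathcal D$. You write that the outgoing edges of the merged vertex of $\mathcal D^*$ that are ``incident to $s$ other than $(s,w_1)$ would make $s$ not a source --- so at least one is outgoing at $w_1$.'' This is a non sequitur: an edge that is outgoing at the merged vertex and whose $G$-endpoint on that side is $s$ is simply outgoing at $s$ in $\mathcal D$, which is perfectly consistent with $s$ being a source, and nothing in this forces any edge to be outgoing at $w_1$. The conclusion you want is true, but the correct justification is elementary: since $G$ is biconnected, $\deg_G(w_1)\ge 2$, so $w_1$ has at least one incident edge $e'\ne (s,w_1)$ (for instance $(w_1,w_2)\in p_r$); under the contraction, $e'$ corresponds to an edge of $G^*$ incident to the merged vertex, which is the unique source of $\mathcal D^*$, so $e'$ is outgoing at the merged vertex and therefore outgoing at $w_1$ in $\mathcal D$. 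With that repair the proof is sound.
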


\begin{proof}
In view of \cref{obs:st-face}, in order to prove that $\mathcal D$ is an $st$-orientation, it suffices to show that all its faces are \stfaces. First observe that every face of $G$, except for the internal face $f$ incident to $(s,w_1)$ and for the outer face, is also a face of $G^*$ and that its incident edges are oriented in the same way in $\mathcal D$ and $\mathcal D^*$. Hence each such a face is a \stface. The face $f$ is bounded in $G$ by the edges $(s, x_1)$ and $(s, w_1)$, which are both outgoing $s$ in $\mathcal D$, and by the edge $(w_1, x_1)$, which is outgoing $w_1$ in $\mathcal D$. Therefore $f$ is a \stface with source $s$ and sink $x_1$. The outer face of $G$ is bounded  by the edge $(s, t)$ and the directed path $s, w_1, w_2, \dots, t$, which are both outgoing $s$ in $\mathcal D$. Therefore the outer face is a \stface with source $s$ and sink $t$. Thus $\mathcal D$ is a plane $st$-graph. Each internal vertex $w$ of $G$ is also an internal vertex of $G^*$, and thus it has two incoming edges in $\mathcal D$ since the same property is true in $\mathcal D^*$. 
\end{proof}

\begin{lemma} \label{le:expand-removal}
Let $\mathcal D^-$ be an inner-canonical orientation of $G^-$. The orientation $\mathcal D$ of $G$ that is obtained from $\mathcal D^-$ by orienting the edges $e_1,e_2,\dots,e_j$ away from $s$ and by keeping the orientation of all other edges unchanged is inner-canonical. \end{lemma}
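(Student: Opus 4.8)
The plan is to mirror the structure of the proof of \cref{le:expand-contraction}. That is, I would first verify that the resulting orientation $\mathcal D$ is a plane $st$-graph by checking, via \cref{obs:st-face}, that every face of $G$ is an \stface; then separately check that every internal vertex of $G$ has at least two incoming edges in $\mathcal D$.

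For the faces, the key observation is that removing $e_1,\dots,e_j$ from $G$ deletes exactly the multilenses $\ell(e_1,e_2),\dots,\ell(e_{j-1},e_j)$ — which by the minimality of $j$ are all triangular internal faces (as argued in the proof of \cref{le:remove-well-formed}) — together with the region between $e_1$ and the rest of the outer boundary. So every face of $G$ other than these $j-1$ triangular faces, the internal face on the inner side of $e_j$ that the removed edges used to ``cut into'' (more precisely, the face $f$ of $G^-$ that, in $G$, gets subdivided by reinserting $e_1,\dots,e_j$), and the outer faces of $G$ and $G^-$, is common to $G$ and $G^-$ with identical edge orientations, hence is an \stface because $\mathcal D^-$ is an $st$-orientation. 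For the new faces: each triangular face $\ell(e_i,e_{i+1})$ is bounded by the two edges $e_i=(s,v_i)$ and $e_{i+1}=(s,v_{i+1})$, both oriented away from $s$, plus the third edge $(v_i,v_{i+1})$ of the path $P$ from the proof of \cref{le:remove-well-formed}; since $P$ is a directed subpath (its edges inherit orientations from $\mathcal D^-$ and, being part of the boundary of a face of $G^-$ that is an \stface, they are consistently directed), each such triangle is an \stface with source $s$. The face $f$ of $G^-$ incident to $e_{j+1}$ on its inner side becomes, in $G$, the triangular face $\ell(e_j,e_{j+1})$ together with the remaining part of $f$; since the boundary of $f$ in $G^-$ is an \stface and we are only adding edges out of $s$, the split pieces remain \stfaces. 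Finally the outer face of $G$ is bounded by $(s,t)$ and by the directed path $s,w_1,\dots,t$; as in \cref{le:expand-contraction} this is an \stface with source $s$ and sink $t$. (If $j=1$ the argument degenerates: we merely reinsert a parallel edge $e_1$ between $s$ and $w_1$ next to $e_2$, creating one new $2$-vertex face bounded by $e_1,e_2$, which is trivially an \stface.)

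For the indegree condition, note that every internal vertex $w$ of $G$ is also a vertex of $G^-$; I must check it is internal in $G^-$ and that no incoming edge of $w$ was removed. The removed edges $e_1,\dots,e_j$ are all incident to $s$, so the only vertices that could lose incoming edges are the $v_i$; but each $e_i$ is oriented \emph{away} from $s$ in $\mathcal D$, so it is outgoing at $s$ and incoming at $v_i$ — wait, that means $v_i$ does lose an incoming edge. The resolution is the hypothesis of \cref{le:remove-well-formed}: for $2\le i\le j$ the vertex $v_i$ is internal in $G$ but, crucially, $v_i$ has the edge $e_{i+1}$ still present (since $e_{i+1}$ is removed only if $i+1\le j$, and $e_{j+1}$ survives) — actually the cleanest route is: $v_1=w_1,\dots,v_j$ all lie on the path $P\cup e_{j+1}$, which is part of the outer boundary $C_o^-$ of $G^-$, so $v_2,\dots,v_j$ are \emph{outer} in $G^-$ and hence not required to have two incoming edges in $\mathcal D^-$; meanwhile in $G$ these same vertices regain the incoming edges $e_2,\dots,e_j$ (and $e_1$ for $w_1$), which together with their $G^-$-indegree must be shown to total at least two. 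Here I would use that $w_1,\dots,v_j$ each have at least one incoming edge already in $\mathcal D^-$ from being non-source, non-sink internal-path vertices of an $st$-graph (each has the preceding path edge incoming), so adding the new incoming edge $e_i$ gives indegree at least two; for any internal vertex of $G$ that is already internal in $G^-$, the indegree is unchanged and at least two by inner-canonicity of $\mathcal D^-$.

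The main obstacle I anticipate is the bookkeeping around which vertices among $v_1,\dots,v_j$ are outer versus internal in $G$ and in $G^-$, and correspondingly accounting for their indegrees precisely — in particular making rigorous the claim that each $v_i$ (for $2\le i\le j$) already has an incoming edge in $\mathcal D^-$ before we re-add $e_i$. This requires identifying, for each such $v_i$, a concrete incoming edge in $\mathcal D^-$: the natural candidate is the edge of the path $P$ entering $v_i$, and one must argue this path edge is oriented toward $v_i$, which follows because $P\cup e_{j+1}$ bounds a face of $G^-$ on one side (the side where the $e_i$'s used to be) and on the other the edge $e_{j+1}=(s,v_j)$ forces the path to be directed from $w_1$ toward $v_j$ by the \stface property and acyclicity — but pinning this down cleanly, including the boundary cases $i=1$ and $i=j$, is the delicate part. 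The face-classification part, by contrast, should go through essentially verbatim as in \cref{le:expand-contraction}.
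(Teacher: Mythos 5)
Your proposal is correct in essence and lands on the same final structure as the paper's proof, but it is considerably more laborious than necessary in two places, and it talks itself into treating a non-issue as "the delicate part."

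For the $st$-orientation part, the paper dispenses with it in one sentence: since all the edges $e_1,\dots,e_j$ are oriented away from the \emph{unique source} $s$ of $\mathcal D^-$, acyclicity is preserved (any new directed cycle would have to pass through $s$, which has no incoming edges), $s$ remains the unique source, and $t$ remains the unique sink (no vertex loses outgoing edges, and each $v_i$ is a non-sink in $\mathcal D^-$). Your face-by-face verification via \cref{obs:st-face}, mirroring \cref{le:expand-contraction}, would also work, but it requires carefully classifying the new faces and tracking the orientation of the path $P$; the contraction case needed that machinery because contraction perturbs a face's boundary more subtly, while here the removed edges all emanate from the single source, which makes a direct global argument available. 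Your description of the face topology is also slightly muddled (the face of $G^-$ that gets subdivided when reinserting $e_1,\dots,e_j$ is the \emph{outer} face of $G^-$, not an internal face incident to $e_{j+1}$), though this would come out in the wash.

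For the indegree part, you correctly split into the two cases — internal vertices of $G$ that are already internal in $G^-$ (indegree preserved, $\geq 2$ by inner-canonicity) versus the vertices $v_2,\dots,v_j$ that are outer in $G^-$ but internal in $G$ — and you correctly note that the latter gain the incoming edge $e_i$. But you then declare that "making rigorous the claim that each $v_i$ already has an incoming edge in $\mathcal D^-$" is the main obstacle and propose to trace a concrete path edge entering $v_i$. This is a non-obstacle: $v_i\neq s$, and in an $st$-orientation the source is, by definition, the \emph{unique} vertex with no incoming edge, so every other vertex — $v_i$ in particular — already has at least one. No appeal to the path $P$, to the right path of the outer face, or to \cref{obs:st-face} is needed; this is exactly how the paper phrases it ("namely $e_i$ and at least one incoming edge it also has in $\mathcal D^-$"). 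So your plan would produce a valid proof, but with substantial extra work for what the paper settles with a definition.
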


\begin{proof}
Clearly, $\mathcal D$ is an $st$-orientation, given that $\mathcal D^-$ is an $st$-orientation and that all the edges $e_1,e_2,\dots,e_j$ are oriented away from the single source $s$ of $\mathcal D^-$. Every internal vertex $w$ of $G$ is either an internal vertex of $G^-$, and thus it has two incoming edges in $\mathcal D$ since the same property is true in $\mathcal D^-$, or is an end-vertex of an edge $e_i$, for some $i\in\{2,\dots,j\}$. In the latter case, $w$ has at least two incoming edges in $\mathcal D$, namely $e_i$ and at least one incoming edge it also has in $\mathcal D^-$.     
\end{proof}

A crucial consequence of  \cref{le:contract-well-formed,le:remove-well-formed,le:expand-contraction,le:expand-removal} is the following.

\begin{lemma}\label{le:at-least-one-inner-canonical}
Every well-formed biconnected plane \multigraph $G$ with poles $s$ and $t$ has at least one inner-canonical orientation.
\end{lemma}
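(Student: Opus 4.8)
The plan is to argue by strong induction on the number of edges of $G$. For the base case, if $G$ has only two vertices then these are $s$ and $t$ and, by Condition WF3, all edges of $G$ join them; orienting them all from $s$ to $t$ gives an acyclic orientation with unique source $s$ and unique sink $t$, i.e.\ an $st$-orientation, which is inner-canonical because $G$ has no internal vertex. So suppose $G$ has at least three vertices. The strategy is to pass to a strictly smaller well-formed biconnected plane \multigraph with poles $s$ and $t$, to which the inductive hypothesis applies, and then to pull an inner-canonical orientation back to $G$. Concretely, I will prove that at least one of the following two situations occurs: (a) $G$ has no parallel edges between $s$ and $w_1$, so that by \cref{le:contract-well-formed} the \multigraph $G^*$ is well-formed, biconnected, has poles $s$ and $t$, and has one edge fewer than $G$; or (b) $G$ has parallel edges and $j$ satisfies the hypothesis of \cref{le:remove-well-formed}, so that $G^-$ is well-formed, biconnected, has poles $s$ and $t$, and has fewer edges than $G$. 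In case (a), an inner-canonical orientation of $G^*$ exists by induction, and \cref{le:expand-contraction} turns it into one of $G$ by orienting $(s,w_1)$ away from $s$; in case (b), an inner-canonical orientation of $G^-$ exists by induction, and \cref{le:expand-removal} turns it into one of $G$ by orienting $e_1,\dots,e_j$ away from $s$.

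To prove this dichotomy, assume (a) fails: some edge $e_l$, with $l\ge 2$ chosen minimum, joins $s$ and $w_1$ besides $e_1=(s,w_1)$. Then $G$ has parallel edges, so $j$ is defined (indeed, starting from any pair of parallel edges, all incident to $s$ by Condition WF3, an innermost such pair is formed by two consecutive edges around $s$ and bounds a multilens). If $l=2$, then $e_1$ and $e_2$ are consecutive parallel edges around $s$, so the region $\ell(e_1,e_2)$ they bound is a multilens and $j=1$, so the hypothesis of \cref{le:remove-well-formed} is met. Suppose $l\ge 3$ and consider the bounded region $\ell(e_1,e_l)$ enclosed by $e_1$ and $e_l$. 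Each edge $e_i$ with $2\le i\le l-1$ leaves $s$ into $\ell(e_1,e_l)$ and cannot cross $e_1\cup e_l$, so $v_i\in\overline{\ell(e_1,e_l)}$; since $v_i\notin\{s,w_1\}$ by the minimality of $l$, in fact $v_i$ is interior to $\ell(e_1,e_l)$. Hence $\ell(e_1,e_l)$ is not a multilens, so by Condition WF4 it contains in its interior a multilens $\ell(h'_1,h'_2)$, where $h'_1$ and $h'_2$ are parallel edges between $s$ and some vertex $y\neq w_1$; being incident to $s$ and lying in $\ell(e_1,e_l)$, these are edges $e_a$ and $e_b$ with $2\le a<b\le l-1$, and since $\ell(e_a,e_b)$ is a multilens the edges $e_a,\dots,e_b$ are all parallel copies of $(s,y)$, so $\ell(e_a,e_{a+1})\subseteq\ell(e_a,e_b)$ is a multilens as well. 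Therefore $j\le a\le l-1$ (and $j\ge 2$, since $j=1$ would make $e_1,e_2$ parallel and force $l=2$). Finally, $\ell(e_1,e_l)$ is bounded while the outer face of $G$ is unbounded, and any curve escaping to infinity from a point of $\ell(e_1,e_l)$ would have to cross $e_1\cup e_l\subseteq G$; hence no vertex interior to $\ell(e_1,e_l)$ is incident to the outer face, and in particular $v_2,\dots,v_j$ are not. So the hypothesis of \cref{le:remove-well-formed} is met, completing the proof of the dichotomy, hence of the inductive step and of the lemma.

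I expect the main obstacle to be the case $l\ge 3$ above: one must recognise that, even in the presence of parallel edges between $s$ and $w_1$, the first consecutive multilens $\ell(e_j,e_{j+1})$ is nested inside the lens $\ell(e_1,e_l)$, and that this single fact simultaneously bounds $j\le l-1$ and shields every vertex $v_2,\dots,v_j$ from the outer face --- precisely the two conditions that \cref{le:remove-well-formed} requires. The remaining ingredients (the two-vertex base case, the well-definedness of $j$ once $G$ has parallel edges, and the reduction to $j=1$ when $e_2$ is parallel to $e_1$) are routine, and the transfer of an inner-canonical orientation from the smaller \multigraph back to $G$ is handled verbatim by \cref{le:expand-contraction,le:expand-removal}.
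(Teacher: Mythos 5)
Your proof is correct and follows the same inductive strategy the paper uses: induct on the number of edges, contract $(s,w_1)$ when it has no parallel copy, otherwise remove $e_1,\dots,e_j$, invoking Condition~WF4 inside the lens bounded by $e_1$ and a parallel edge to show that $j$ exists with $j\le l-1$ and that $v_2,\dots,v_j$ lie interior to that lens and hence off the outer face. Your write-up is if anything slightly more explicit than the paper's --- splitting off the $l=2$ case and observing that a multilens bounded by non-consecutive parallel edges yields a consecutive one --- but the underlying argument is the same.
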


\begin{proof}
The proof is by induction on the number of edges of $G$. In the base case, $G$ is a single edge between $s$ and $t$. Then the orientation of such an edge from $s$ to $t$ trivially is inner-canonical. For the inductive case, we distinguish two cases.

{\bf There exist no parallel edges between $s$ and $w_1$}; refer to \cref{fig:contraction}. By \cref{le:contract-well-formed}, the plane \multigraph $G^*$  obtained by the contraction of $e_1=(s,w_1)$ in $G$ is biconnected and well-formed (with poles $s$ and $t$). Thus, by induction, it admits an inner-canonical orientation~$\mathcal D^*$. By \cref{le:expand-contraction}, orienting the edge $e_1$ away from $s$ and keeping the orientation of all other edges unchanged turns $\mathcal D^*$ into an inner-canonical orientation $\mathcal D$ of $G$.

\textbf{There exist parallel edges between $s$ and $w_1$}; refer to \cref{fig:remove-well-formed-parallel-to-e1-exist}. In order to prove that $G$ admits an inner-canonical orientation, it suffices to prove that 
%\textcolor{realred}{an index $j$, as defined as in the statement of \cref{le:remove-well-formed}, exists.} 
the index $j$, defined in \cref{le:remove-well-formed} as the smallest index such that $e_j$ and $e_{j+1}$ define a multilens, exists.
Indeed, if such an index exists, we have that, by \cref{le:remove-well-formed}, the plane \multigraph $G^-$ obtained from~$G$ by removing the edges $e_1,e_2,\dots,e_j$ is well-formed and thus, by induction, it admits an inner-canonical orientation $\mathcal D^-$. Also, by \cref{le:expand-removal}, orienting the edges $e_1,e_2,\dots,e_j$ away from $s$ turns $\mathcal D^-$ into an inner-canonical orientation $\mathcal D$ of $G$, which proves the statement.

We now show that $j$ exists. By hypothesis, there exist two parallel edges between $s$ and $w_1$. Since $e_1$ connects $s$ and $w_1$, it follows that $e_1$ is one of such edges. Let $e_h$ be a distinct edge also connecting $s$ and $w_1$. By Condition WF4, there exist two edges $e_p$ and $e_{p+1}$ that define a multilens and such that $1\leq p<p+1\leq h$. We show that choosing $j$ as the smallest index $p$ such that $e_p$ and $e_{p+1}$ define a multilens allows \cref{le:remove-well-formed} to be applied. If $j=1$, then \cref{le:remove-well-formed} trivially applies. If $j>1$, consider any edge $e_i$ such that $1<i\leq j$. We argue that $v_i$ is not incident to the outer face of $G$, which allows \cref{le:remove-well-formed} to be applied. Suppose the contrary, for a contradiction. We have $v_i\neq w_1$, as otherwise $e_1$ and $e_i$ would be parallel edges, and thus, by Condition WF4, there would exist two edges $e_p$ and $e_{p+1}$ that define a multilens and such that $1\leq p<p+1\leq i$, contradicting the minimality of~$j$. Since $v_i\neq w_1$, we have that $v_i$ lies in the exterior of $\ell(e_1, e_h)$. From this and from the fact that $e_i$ appears between $e_h$ and $e_1$ in left-to-right order around $s$, we have that $e_i$ crosses the cycle composed of $e_1$ and $e_h$, contradicting the planarity of $G$.
\end{proof}

\cref{sec:alg-ICE,sec:alg-ICE-fast} are devoted to the proof of the following main result.

\begin{theorem}\label{th:innercanonical-orientation-plane-uv}
	Let $G$ be a well-formed biconnected plane \multigraph with $\varphi$ edges. There exists an algorithm with $\bigoh(\varphi)$ setup time and $\bigoh(\varphi)$ space usage that lists all the inner-canonical orientations of $G$ with $\bigoh(\varphi)$ delay.
\end{theorem}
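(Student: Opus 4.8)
The plan is to turn the recursive existence argument of \cref{le:at-least-one-inner-canonical} into a polynomial-delay enumeration algorithm, following the paradigm of de Fraysseix, Ossona de Mendez, and Rosenstiehl and of Setiawan and Nakano. The recursion tree has the well-formed multigraph $G$ at its root; at each node we branch on the edges incident to the pole $s$ that lie at the start of the right path. Concretely, if there are no parallel edges between $s$ and $w_1$, the only move is to contract $e_1$, recurse on $G^*$ (well-formed by \cref{le:contract-well-formed}), and lift each returned orientation by orienting $(s,w_1)$ away from $s$ (inner-canonical by \cref{le:expand-contraction}); if there are parallel edges between $s$ and $w_1$, we have \emph{two} children: one obtained by contracting $e_1$ in the sub-multigraph where we first delete all but one copy of the $s$--$w_1$ edges (or, more simply, we branch on whether $v_1$ becomes a successor of $s$ at the next level or the bundle $e_1,\dots,e_j$ is removed), and one obtained by removing $e_1,\dots,e_j$ and recursing on $G^-$ (well-formed by \cref{le:remove-well-formed}), lifting via \cref{le:expand-removal}. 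The crucial point, already isolated inside the proof of \cref{le:at-least-one-inner-canonical}, is that \emph{whenever} parallel edges between $s$ and $w_1$ exist, the index $j$ exists and the hypothesis of \cref{le:remove-well-formed} is met, so \emph{every} branch we enter is guaranteed to contain at least one inner-canonical orientation in its subtree. This is what makes the delay bounded: no branch is ``wasted''.

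The key steps, in order, are: (1) prove that the two lifting maps of \cref{le:expand-contraction,le:expand-removal} are injective and that their images partition the inner-canonical orientations of $G$ according to the orientation/contraction status of the $s$--$w_1$ edges; combined with \cref{le:at-least-one-inner-canonical} on the children, this shows the recursion enumerates every inner-canonical orientation exactly once, so correctness (completeness and no duplicates) follows by induction on the number of edges. (2) Bound the depth of the recursion: each move removes at least one edge, so the depth is $O(\varphi)$, and since each recursive call either shrinks the graph or produces output, a standard alternating-output argument (outputting on the way down and on the way up, or using the classical trick of charging each output to the $O(\varphi)$ work on the root-to-leaf path) gives $O(\varphi)$ delay — \emph{provided} each single recursive step can be executed in $O(1)$ amortized, or $O(\varphi)$ worst-case, time. (3) Design the data structures: maintain $G$ as a doubly connected edge list augmented with (i) the right path $p_r = (w_0=s,\dots,w_k=t)$ as a list, (ii) the counter-clockwise order $e_1,\dots,e_m$ of edges at $s$ with pointers to their other endpoints $v_i$, (iii) for each maximal bundle of parallel edges at $s$ its loose/nonloose structure and the smallest index $j$ defining a multilens, and (iv) a flag per $v_i$ recording incidence to the outer face, so that the test ``do parallel $s$--$w_1$ edges exist?'' and the test in the hypothesis of \cref{le:remove-well-formed} are $O(1)$, and so that performing a contraction or a block removal, together with updating all of the above, costs time proportional to the number of edges touched — which, amortized over the whole recursion, is $O(\varphi)$ per level. (4) Show the $O(\varphi)$ setup time (build the DCEL, the right path, the adjacency order at $s$, and the bundle structure) and $O(\varphi)$ space (the recursion stack stores only the $O(\varphi)$ ``undo'' records along one root-to-leaf path, each of size $O(1)$, since we restore $G$ on backtracking rather than copying it).

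The main obstacle I expect is step~(3): executing a contraction or a bundle-removal \emph{and} restoring the graph on backtracking in a way that keeps both the per-step cost and the total space under control. Contracting $e_1=(s,w_1)$ merges $w_1$ into $s$, which can splice a long adjacency list and can change the right path and the set of parallel bundles at $s$ in a non-local way; removing $e_1,\dots,e_j$ replaces a prefix of the right path by the path $P\cup e_{j+1}$ of \cref{le:remove-well-formed} and changes which $v_i$ are outer. To get $O(\varphi)$ delay we cannot afford to pay for the full size of the spliced structures at every node; the resolution is the standard amortization — each edge is contracted or removed at most once along any root-to-leaf path, so the total restructuring cost on a path is $O(\varphi)$, and by outputting an orientation at the leaf and charging the path cost to it we get $O(\varphi)$ worst-case delay between consecutive outputs. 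Making this precise — in particular giving the reversible update procedures and verifying that the invariants (WF1–WF4, the right-path list, the bundle indices, the outer-face flags) are maintained exactly and in the claimed time — is the technical heart of the proof, and is deferred to \cref{sec:alg-ICE,sec:alg-ICE-fast}.
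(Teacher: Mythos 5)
The central gap is in the branching scheme. You propose to branch on whether there exist parallel edges between $s$ and $w_1$, contracting $e_1$ in the negative case and trying both moves in the positive case. This is not the right criterion, and it fails in both directions. First, when parallel edges exist somewhere in $G$ but not between $s$ and $w_1$, and none of $v_2,\dots,v_j$ lies on the outer face (the paper's Case~2), your algorithm contracts only and therefore misses every inner-canonical orientation in which $(v_1,v_2)$ is outgoing $v_2$; by \cref{le:remove-well-formed}, \cref{le:at-least-one-inner-canonical} applied to $G^-$, and \cref{le:expand-removal}, at least one such orientation exists, so the enumeration is incomplete. Second, when parallel edges between $s$ and $w_1$ do exist (the paper's Case~4), the argument inside the proof of \cref{le:all-inner} shows that \emph{every} inner-canonical orientation of $G$ has $(v_1,v_2)$ outgoing $v_2$, so the removal branch alone is complete; your proposed second child (delete all but one $s$--$w_1$ copy, then contract) lies outside the framework of \cref{le:contract-well-formed} (whose hypothesis explicitly excludes parallel $s$--$w_1$ edges), is not shown to yield a well-formed multigraph, and at best risks double output.

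The missing idea is the refined case analysis: whether to contract, remove, or do both depends not only on parallelism at $s$--$w_1$ but also on whether some $v_i$ with $2\le i\le j$ is incident to the outer face, which is exactly the hypothesis of \cref{le:remove-well-formed}. Correspondingly, completeness and non-duplication are governed by the orientation of $(v_1,v_2)$ in the target orientation $\mathcal D$: it must be shown that this edge is forced outgoing $v_1$ in the contract-only cases, forced outgoing $v_2$ in the remove-only case, and free in the branching case. That structural argument is the substance of the paper's \cref{le:all-inner,le:all-inner-once} and is precisely what your step~(1) glosses over when it asserts the images of the two lifting maps ``partition the inner-canonical orientations of $G$ according to the orientation/contraction status of the $s$--$w_1$ edges'' -- that is not the correct partition. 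Your steps (2)--(4), on reversible updates, a DCEL-style structure augmented at $s$, and charging root-to-leaf path cost against outputs to get $\bigoh(\varphi)$ worst-case delay, are in the spirit of the paper's \cref{sec:alg-ICE-fast}; but without the correct case analysis the resulting algorithm does not enumerate all inner-canonical orientations exactly once.
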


%The following establishes the identity between inner-canonical orientations and canonical orientations of maximal plane graphs.

%\begin{lemma}\label{lem:inner-equal-canonical}
%Any inner-canonical orientation of a maximal plane graph $G$ is also a canonical orientation of $G$, and vice versa. 
%\end{lemma}

%\todo[inline,color=yellow]{Strano che quello qua sopra sia un teorema e questo qua sotto un lemma}
Provided that \cref{th:innercanonical-orientation-plane-uv} holds, we can prove the following.

\begin{lemma} \label{th:canonical-orientation-plane-uv}
	Let $G$ be an $n$-vertex maximal plane graph and let $(u,v,z)$ be the cycle delimiting its outer face, where $u$, $v$, and $z$ appear in this counter-clockwise order along the outer face of $G$. There exists an algorithm with $\bigoh(n)$ setup time
 and $\bigoh(n)$ space usage that lists all the canonical orientations of $G$ with first vertex $u$ with $\bigoh(n)$ delay.
\end{lemma}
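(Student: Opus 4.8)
The plan is to recognize the canonical orientations of $G$ with first vertex $u$ as exactly the inner-canonical orientations of a suitable well-formed biconnected plane \multigraph, and then invoke \cref{th:innercanonical-orientation-plane-uv} essentially verbatim, with no post-processing of the listed orientations.

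First I would verify that $G$, viewed as a plane \multigraph with poles $s := u$ and $t := z$, satisfies the hypotheses of \cref{def:well-formed}. Since $G$ is a maximal plane graph on $n \ge 3$ vertices, it is $3$-connected when $n \ge 4$ and a triangle when $n = 3$; in either case it is biconnected. Condition WF1 holds because the outer face of $G$ is bounded by the triangle $(u,v,z)$ traversed counter-clockwise, so in clockwise order $u$ immediately precedes $z$, and both $u$ and $z$ are incident to the outer face. Condition WF2 holds because every internal face of a maximal plane graph is a triangle. Conditions WF3 and WF4 are vacuously satisfied because $G$ is simple. Hence $(G;u,z)$ is a well-formed biconnected plane \multigraph, and $|E(G)| = 3n-6 = \bigoh(n)$.

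Next I would observe that the two relevant notions of ``canonical'' coincide on $G$. A vertex of $G$ is internal in the plane-\multigraph sense (not incident to the outer face) precisely when it is not one of $u,v,z$, which is exactly the meaning of ``internal vertex'' in \cref{th:canonical-orientation-characterization}. Therefore, by \cref{def:good-orientation}, an orientation of $G$ is an inner-canonical orientation of the well-formed \multigraph $(G;u,z)$ if and only if it is a $uz$-orientation in which every internal vertex has at least two incoming edges; and by \cref{th:canonical-orientation-characterization} this holds if and only if it is a canonical orientation of $G$ with first vertex $u$. Thus the set of inner-canonical orientations of $(G;u,z)$ equals the set of canonical orientations of $G$ with first vertex $u$, as sets of orientations of the same underlying graph.

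Finally, I would apply \cref{th:innercanonical-orientation-plane-uv} to $(G;u,z)$ with $\varphi = |E(G)| = \bigoh(n)$: it provides an algorithm with $\bigoh(n)$ setup time, $\bigoh(n)$ space usage, and $\bigoh(n)$ delay that lists all inner-canonical orientations of $(G;u,z)$, i.e., all canonical orientations of $G$ with first vertex $u$, each exactly once. There is no genuinely hard step; the only points requiring care are the bookkeeping match between the ``internal vertex'' conventions of \cref{th:canonical-orientation-characterization} and \cref{def:good-orientation}, and the fact that the clockwise order prescribed by Condition WF1 forces the pair of poles to be $(u,z)$ — which is consistent with wanting the first vertex to be $u$ and hence the unique sink to be $z$.
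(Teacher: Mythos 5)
Your proposal is correct and follows essentially the same route as the paper: the paper's proof of \cref{th:canonical-orientation-plane-uv} likewise sets $s=u$, $t=z$, observes that $G$ is a (triconnected, hence biconnected) well-formed plane graph with these poles, equates inner-canonical orientations of $(G;u,z)$ with canonical orientations of $G$ with first vertex $u$ via \cref{th:canonical-orientation-characterization} and \cref{def:good-orientation}, and then invokes \cref{th:innercanonical-orientation-plane-uv} with $\varphi=\bigoh(n)$. You are merely a bit more explicit in checking Conditions WF1--WF4, which the paper asserts without spelling out.
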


\begin{proof}
Since $G$ is a biconnected, in fact triconnected, well-formed plane graph with poles $u$ and $z$, it suffices to prove that any inner-canonical orientation of $G$ is also a canonical orientation of $G$ with first vertex $u$, and vice versa. Namely, this and the fact that $G$ has $\bigoh(n)$ edges imply that the algorithm in \cref{th:innercanonical-orientation-plane-uv} enumerates all canonical orientations of $G$ within the stated bounds.

By \cref{th:canonical-orientation-characterization}, any canonical orientation of $G$ with first vertex $u$ is a $uz$-orientation such that every internal vertex has at least two incoming edges, hence it is an inner-canonical orientation of $G$.  Conversely, any inner-canonical orientation $\cal D$ of $G$ is also canonical. Indeed, by definition $\cal D$ is a $uz$-orientation such that every internal vertex has at least two incoming edges. By \cref{th:canonical-orientation-characterization}, we have that $\cal D$ is a canonical orientation with first vertex $u$.
\end{proof}

%\green{we only need to show that $v$ has exactly one incoming edge in $\cal D$. 
%
%Suppose the contrary, i.e., $v$ has at least two incoming edges in $\cal D$. 
%Let $v=v_1,v_2,\dots,v_k = z$ be the right-to-left order of the neighbors of $u$. 
%Since $\cal D$ is an $uz$-orientation, the edge $(u,v)$ is outgoing $v$ and the edge $(v,z)$ is outgoing $v$. Thus, by hypothesis and by \cref{obs:bimodality}, the edge $(v_2,v_1)$ must be one of the (at least two) edges incoming $v_1$. Consider the path $P$ between $v_k$ and $v_1$ formed by the edges of the internal faces of $G$ incident to $u$.
%Let $P'$ be the maximal directed subpath $v_x,v_{x-1},\dots,v_1$ of $P$ such that the edge $(v_i,v_{i-1})$ is outgoing $v_i$, for $i=x,\dots,2$. Clearly, $x>k$ since the edge $(v_k = z, v_{k-1})$ is outgoing $v_{k-1}$, as $\cal D$ is an $uz$-orientation.
%Therefore, $x$ is an internal vertex of $G$.
%Suppose that $x=k-1$. Then, $(v_x,v_{x-1})$ is outgoing $v_x$ and, as discussed above, $(v_x, v_{x+1}=z)$ is outgoing $v_x$. 
%Similarly, if $1<x<k$, t
%Moreover, by the definition of $P'$, the edges $(v_x,v_{x-1})$ and  $(v_x,v_{x+1})$ are both outgoing $v_x$. 
%Therefore, by \cref{obs:bimodality}, the edge $(u,v_x)$ is the unique edge incident to $v_x$ that is incoming $v_x$, a contradiction.}
%\todo[color=cyan]{Gio: ho aggiunto questa provetta. Un argomento simile sul comportamento di questo path vicino alla sorgente l'abbiamo già usato. Potremmo forse estrarlo in una property.}

\begin{theorem} \label{th:canonical-orientation-plane-and-planar}
	Let $G$ be an $n$-vertex maximal plane (resp.\ planar) graph. There exists an algorithm $\mathcal{A}_1$ (resp.\ $\mathcal{A}_2$) with $\bigoh(n)$ setup time and $\bigoh(n)$ space usage that lists all canonical orientations of $G$ with $\bigoh(n)$ delay.
\end{theorem}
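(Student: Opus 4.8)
The plan is to reduce both algorithms to \cref{th:canonical-orientation-plane-uv}, which enumerates all canonical orientations of a maximal plane graph with a prescribed first vertex within the claimed bounds. For $\mathcal A_1$, recall that a canonical orientation of the plane graph $G$ is a canonical orientation with first vertex $x$ for some $x\in\{u,v,z\}$, where $(u,v,z)$ is the outer triangle in counter-clockwise order. I would run the algorithm of \cref{th:canonical-orientation-plane-uv} three times, applying it with $u$, with $v$, and with $z$ in the role of the first vertex (for $v$ and $z$ one simply re-reads the outer triangle in counter-clockwise order starting at that vertex, so the lemma's hypotheses hold verbatim), and concatenate the three output streams. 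No orientation is listed twice: by \cref{th:canonical-orientation-characterization} a canonical orientation with first vertex $x$ has $x$ as its unique source, so the three runs produce pairwise disjoint sets. Each run has $\bigoh(n)$ setup time and $\bigoh(n)$ delay and can reuse the same $\bigoh(n)$ space; moreover each run produces at least one orientation, since $G$ with any of the three pole choices is a triconnected (for $n\ge 4$), hence well-formed, plane graph, so \cref{le:at-least-one-inner-canonical} applies. Hence the transition from one run to the next costs only $\bigoh(n)$ (tear-down, setup, and time to the first output of the new run), and $\mathcal A_1$ meets the stated bounds. (The case $n=3$, where $G=K_3$, is trivial.)

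For $\mathcal A_2$, I would first compute a planar embedding of $G$ in $\bigoh(n)$ time. Since a maximal planar graph on $n\ge 4$ vertices is triconnected, this embedding is unique up to reflection; thus the maximal plane graphs isomorphic to $G$ are exactly the $\bigoh(n)$ plane graphs obtained by choosing one of the two rotation systems and one of the $2n-4$ (triangular) faces as the outer face. By definition, a canonical orientation of $G$ is a canonical orientation of one of these plane graphs with respect to some first vertex among its three outer vertices; I call such a triple (rotation system, outer face, first vertex) a \emph{configuration}, and there are $\bigoh(n)$ of them. The algorithm enumerates the faces (and hence the configurations) in $\bigoh(n)$ time and then processes the configurations one at a time, running on each the algorithm of \cref{th:canonical-orientation-plane-uv} with the designated first vertex in the role of $u$, and concatenating the output streams.

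The crux is that each canonical orientation $\mathcal D$ of $G$ is produced by exactly one configuration, so $\mathcal A_2$ lists no orientation twice. The unique source $s$ and unique sink $t$ of $\mathcal D$ are intrinsic to the orientation and must coincide with the first and the last vertex of the configuration producing it; moreover the configuration's outer face is a triangle containing both $s$ and $t$, hence it is one of the two triangular faces incident to the edge $st$ (which is therefore an edge of $G$). To decide which of these two faces it is, I would invoke the fact, established inside the proof of \cref{th:canonical-orientation-characterization}, that the second vertex $v_2$ of a canonical ordering has exactly one incoming edge in the corresponding canonical orientation, whereas every internal vertex has at least two: the third vertex of the outer face is $v_2$, so it is the unique one, among the two vertices that complete the two triangular faces at $st$, having exactly one incoming edge in $\mathcal D$. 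This pins down the outer face, and then the rotation system is forced, since it must be the one traversing that face counter-clockwise as $s, v_2, t$. Hence the configuration is uniquely determined by $\mathcal D$.

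It remains to bound the complexity of $\mathcal A_2$. The global setup (embedding and face enumeration) is $\bigoh(n)$; setting up a single configuration takes $\bigoh(n)$ time, a configuration's run uses $\bigoh(n)$ space (reused across configurations) and has $\bigoh(n)$ delay, and — crucially — produces at least one orientation by \cref{le:at-least-one-inner-canonical}, since each configuration is a triconnected, hence well-formed, plane graph with the chosen poles. Therefore, although there are $\Theta(n)$ configurations, any two consecutive outputs are separated by at most one configuration switch, whose cost is $\bigoh(n)$, and the first output appears within $\bigoh(n)$ time; the bounds follow. I expect the main obstacle to be precisely this duplicate-freeness analysis for $\mathcal A_2$ — identifying the unique configuration behind a given orientation — together with the observation that no configuration is empty, which is what keeps the worst-case delay $\bigoh(n)$ in spite of the linear number of configurations.
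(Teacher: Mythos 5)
Your proof takes essentially the same approach as the paper: $\mathcal A_1$ runs the prescribed-first-vertex enumerator for each of the three outer vertices, and $\mathcal A_2$ runs $\mathcal A_1$ over the $4n-8$ maximal plane graphs isomorphic to $G$ (you enumerate them as $(2\text{ rotation systems})\times(2n-4\text{ faces})\times(3\text{ first vertices})$ configurations, which is the same count). You go somewhat beyond the paper's terse argument in two places, both correctly: you justify the paper's bare assertion of duplicate-freeness by pinning down the outer face from $\mathcal D$ alone (source, sink, and the unique neighbor of $(s,t)$ with exactly one incoming edge), and you explicitly invoke \cref{le:at-least-one-inner-canonical} to guarantee every configuration is non-empty, which is in fact needed to keep the worst-case delay at $\bigoh(n)$ despite $\Theta(n)$ configuration switches.
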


\begin{proof}
The algorithm $\mathcal{A}_1$ uses the one for the proof of \cref{th:canonical-orientation-plane-uv} three times, namely once for each choice of the first vertex among the three vertices incident to the outer face of $G$. 
The algorithm $\mathcal{A}_2$ uses the algorithm $\mathcal{A}_1$ applied $4n-8$ times; this is because there are $4n-8$ maximal plane graphs that are isomorphic to $G$. Namely, the cycle $(u,v,z)$ delimiting the outer face of a maximal plane graph isomorphic to $G$ can be chosen among the $2n-4$ facial cycles (of any planar drawing) of $G$, and the vertices can appear in counter-clockwise order $u,v,z$ or $u,z,v$ along the boundary of the outer face. Note that any two orientations produced by different applications of algorithm $\mathcal{A}_2$  differ on the source of the orientation, or on the sink of the orientation, or on the non-source and non-sink vertex incident to~the~outer~face.\end{proof}

\subsection{The Inner-Canonical Enumerator Algorithm}\label{sec:alg-ICE}

\begin{figure}[tb!]
    \centering
\includegraphics[scale =.8]{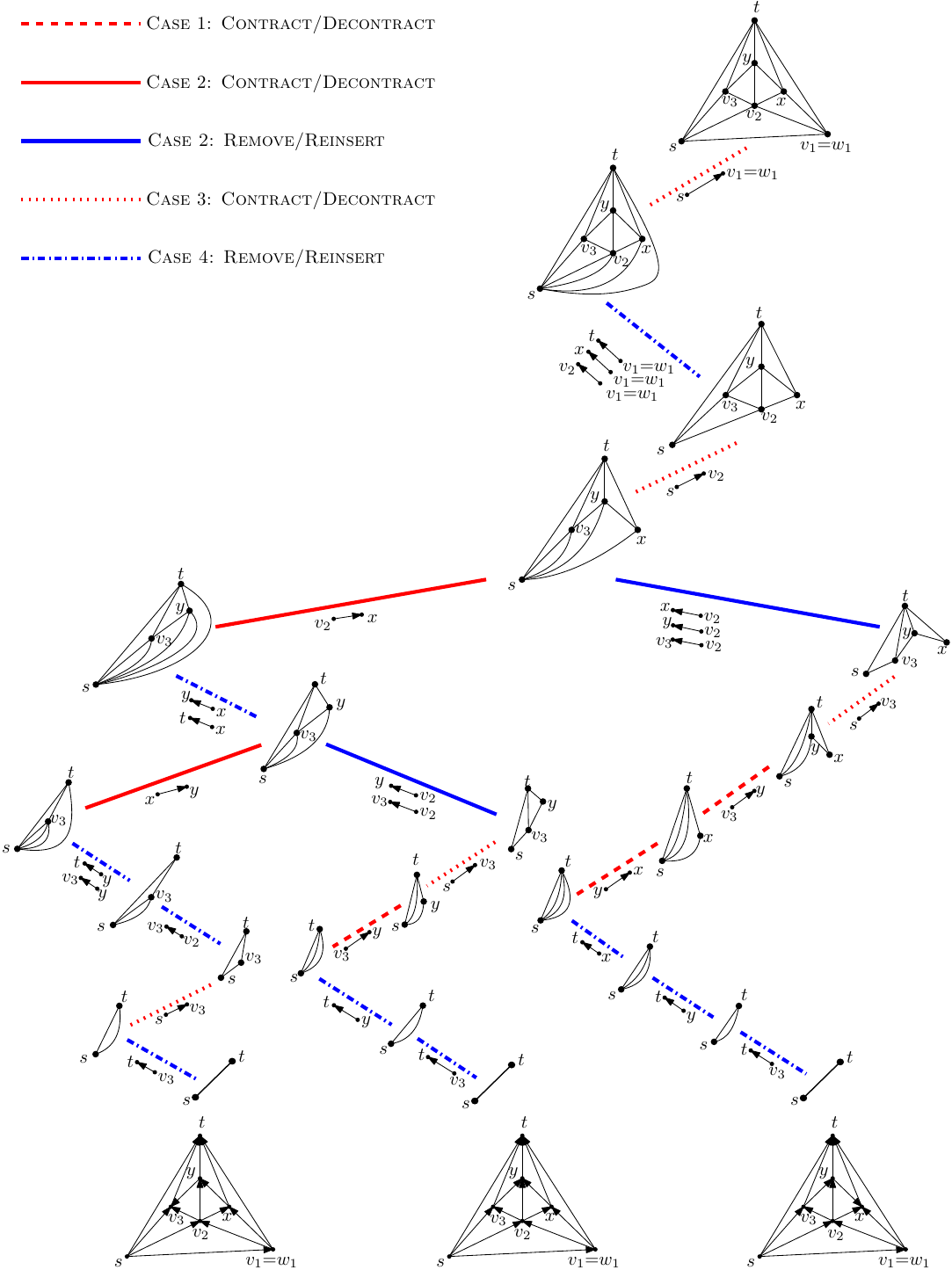}
    \caption{Illustration of the call tree of an execution of the {\sc ICE} algorithm. Each of the three inner-canonical orientations of the graph in the root node is shown below the~corresponding~leaf~node.}
    \label{fig:computation}
\end{figure}

We are now ready to describe an algorithm that takes in input a well-formed biconnected plane \multigraph $G$ with poles $s$ and $t$, and enumerates all its inner-canonical orientations (confr.\ \cref{th:innercanonical-orientation-plane-uv}). %We will later show how such an algorithm can be used in order to enumerate all the canonical orientations of a maximal plane graph. 
%Denote by $w_0=s,w_1,\dots,w_k=t$ the right path $p_r$ of the outer face of $G$. Let $e_1,e_2,\dots,e_m$ be the counter-clockwise order of the edges incident to $s$, where $e_1$ is the first edge of $p_r$. Let $v_1,\dots,v_m=t$ be the end-vertices of $e_1,e_2,\dots,e_m$ different from $s$, respectively. 
The algorithm, which we call {\sc Inner-Canonical Enumerator} ({\sc ICE}, for short), works recursively; refer to \cref{fig:computation}. In the base case, $G$ is the single edge $e_m=(s,t)$, and its unique inner-canonical orientation is the one in which the edge $e_m$ is directed from $s$ to $t$. Otherwise, the algorithm distinguishes four cases. In Cases 1 and 2, $G$ contains parallel edges and $e_1$ is the unique edge between $s$ and $w_1$. Let $j\in\{2,\dots,m-1\}$ be the smallest index such that $e_{j}$ and $e_{j+1}$ define a multilens of $G$; note that $j>1$ by the above assumption. In Case 1, there exists an index $i\in \{2,\dots,j\}$ such that $v_i$ is incident to the outer face of $G$, while in Case 2 such an index does not exist. In Case 3, $G$ does not contain parallel edges. Finally, in Case 4, $G$ contains parallel edges between $s$ and $w_1$. Note that exactly one of Cases~1--4~applies~to~$G$. 

\begin{itemize}
    \item In Cases 1 and 3, {\bf we contract} the edge $(s,w_1)$. Let $G^*$ be the resulting plane \multigraph and note that, by \cref{le:contract-well-formed}, $G^*$ is biconnected and well-formed. Thus, the  {\sc ICE} algorithm can be applied recursively in order to enumerate all the inner-canonical orientations of $G^*$. The {\sc ICE} algorithm then obtains all the inner-canonical orientations of $G$ as follows: For every inner-canonical orientation $\mathcal D^*$ of $G^*$, the algorithm constructs one inner-canonical orientation of $G$ by orienting the edge $(s,w_1)$ away from $s$ and by keeping the orientation of all other edges unchanged, where some edges that are incident to $s$ in $G^*$ are instead incident to $w_1$ in $G$; these are all outgoing $s$ in $\mathcal D^*$ and all outgoing $w_1$ in~$\mathcal D$. 

    \item In Case 4, {\bf we remove} the edges $e_1,e_2,\dots,e_j$. Let $G'$ be the resulting plane \multigraph and note that, by \cref{le:remove-well-formed}, $G'$ is biconnected and well-formed. Thus, the {\sc ICE} algorithm can be applied recursively in order to enumerate all the inner-canonical orientations of $G'$. The {\sc ICE} algorithm then obtains all the inner-canonical orientations of $G$ as follows: For every inner-canonical orientation $\mathcal D'$ of $G'$, the algorithm constructs one inner-canonical orientation of $G$ by orienting the edges $e_1,e_2,\dots,e_j$ away from $s$ and by keeping the orientation of all other edges unchanged. 

    \item In Case 2, the {\sc ICE} algorithm branches and applies {\bf both the contraction and the removal operations}. More formally, first we contract the edge $(s,w_1)$, obtaining a well-formed biconnected plane \multigraph $G^*$, by \cref{le:contract-well-formed}. From every inner-canonical orientation $\mathcal D^*$ of $G^*$, the algorithm constructs one inner-canonical orientation of $G$, same as in Cases 1 and 3. After all the inner-canonical orientations of $G^*$ have been used to produce inner-canonical orientations of $G$, we remove the edges $e_1,e_2,\dots,e_j$ from $G$, obtaining  a well-formed biconnected plane \multigraph $G'$, by \cref{le:remove-well-formed}. From every inner-canonical orientation $\mathcal D'$ of $G'$, the algorithm constructs one inner-canonical orientation of $G$, same as in Case 4.
\end{itemize}

We remark that the {\sc ICE} algorithm outputs an inner-canonical orientation every time the base case applies. The next three lemmata prove the correctness of the algorithm. We will later describe, in \cref{sec:alg-ICE-fast},  how to efficiently implement it. 

\begin{lemma} \label{le:every-inner}
Every orientation of $G$ listed by the {\sc ICE} algorithm is inner-canonical. 
\end{lemma}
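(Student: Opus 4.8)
The statement to prove is Lemma~\ref{le:every-inner}: every orientation of $G$ that the {\sc ICE} algorithm outputs is inner-canonical. The plan is a straightforward induction on the number of edges of $G$, mirroring the case analysis that defines the algorithm. In the base case $G$ is the single edge $e_m=(s,t)$, and the algorithm outputs the orientation of $e_m$ from $s$ to $t$; this is an $st$-orientation with no internal vertices, hence vacuously inner-canonical. For the inductive step I would observe that the algorithm only ever produces an output of $G$ by taking an output of a strictly smaller graph $G^*$ or $G'$, extending it by orienting the contracted edge $(s,w_1)$ away from $s$ (Cases~1, 3, and the contraction branch of Case~2) or by orienting the removed edges $e_1,\dots,e_j$ away from $s$ (Case~4 and the removal branch of Case~2).

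\textbf{Key steps.} First I would fix an output $\mathcal D$ of {\sc ICE} on $G$ and identify which of the four cases was applied at the top-level call. In every case, the recursion is applied to a graph ($G^*$ in Cases~1, 2, 3; $G'$ in Cases~2, 4) which is well-formed and biconnected with poles $s$ and $t$ by \cref{le:contract-well-formed} or \cref{le:remove-well-formed}, and which has strictly fewer edges than $G$; hence the inductive hypothesis applies and the recursively produced orientation $\mathcal D^*$ (resp.\ $\mathcal D'$) is inner-canonical. Then I invoke \cref{le:expand-contraction} (for contraction) or \cref{le:expand-removal} (for removal): these lemmata state exactly that extending an inner-canonical orientation of $G^*$ (resp.\ $G^-=G'$) by orienting $(s,w_1)$ (resp.\ $e_1,\dots,e_j$) away from $s$ yields an inner-canonical orientation of $G$. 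Since this is precisely the extension performed by {\sc ICE}, the output $\mathcal D$ is inner-canonical. For Case~2 one simply notes that both sub-branches fall under one of the two extension lemmata, so every output of either branch is inner-canonical.

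\textbf{Main obstacle.} There is essentially no deep obstacle here, since all the technical work has been front-loaded into Lemmata~\ref{le:contract-well-formed}--\ref{le:expand-removal}. The one point requiring a little care is bookkeeping: one must check that in each case the hypotheses of the well-formedness lemmata are met so that the inductive hypothesis is applicable. For Case~3 this is the hypothesis of \cref{le:contract-well-formed} that $G$ has no parallel edges between $s$ and $w_1$, which holds because Case~3 assumes $G$ has no parallel edges at all; for Case~1 the same hypothesis holds because Case~1 assumes $e_1$ is the unique edge between $s$ and $w_1$; and for Cases~1, 2, 4 one must verify that the index $j$ used is the smallest index with $e_j,e_{j+1}$ defining a multilens and that the ``either $j=1$ or $v_2,\dots,v_j$ are internal'' hypothesis of \cref{le:remove-well-formed} holds --- which is exactly the condition distinguishing the cases (and, in Case~4, is guaranteed by the argument in the proof of \cref{le:at-least-one-inner-canonical} that $j$ exists and that every $v_i$ with $1<i\le j$ is internal). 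Once these case hypotheses are lined up, the proof is a two-line application of the appropriate pair of lemmata per case.
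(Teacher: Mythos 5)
Your proposal is correct and follows essentially the same route as the paper's proof: induction on the size of $G$, trivial base case, and then for each case invoking \cref{le:contract-well-formed} or \cref{le:remove-well-formed} (to get well-formedness of the recursive instance so the inductive hypothesis applies) followed by \cref{le:expand-contraction} or \cref{le:expand-removal} (to conclude the extended orientation is inner-canonical). The only small slip is that you list Case~1 among the cases needing the removal-hypothesis check, but removal only occurs in Cases~2 and~4; this does not affect the substance of the argument.
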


\begin{proof}
The proof is by induction on the size of $G$. The statement is trivial in the base case, hence suppose that one of Cases 1--4 applies. 

In Cases 1, 2, and 3, by \cref{le:contract-well-formed}, the graph $G^*$ constructed by the algorithm is well-formed. Hence, by induction, every orientation $\mathcal D^*$ that is an output of the recursive call to {\sc ICE} with input $G^*$ is inner-canonical. Starting from $\mathcal D^*$, the {\sc ICE} algorithm constructs one orientation $\mathcal D$ of $G$ by orienting the edge $(s,w_1)$ away from $s$ and by keeping the orientation of all other edges unchanged. By \cref{le:expand-contraction}, we have that $\mathcal D$ is inner-canonical.
    
In Cases 2 and 4, by \cref{le:remove-well-formed}, the graph $G^-$ constructed by the algorithm is well-formed. Hence, by induction, every orientation $\mathcal D^-$ that is an output of the recursive call to {\sc ICE} with input $G^-$ is inner-canonical. Starting from $\mathcal D^-$, the {\sc ICE} algorithm constructs one orientation $\mathcal D$ of $G$ by orienting the edges $e_1,e_2,\dots,e_j$ away from $s$ and by keeping the orientation of all other edges unchanged. By \cref{le:expand-removal}, we have that $\mathcal D$ is inner-canonical.
  
This completes the induction and hence the proof of the lemma.
\end{proof}

\begin{lemma} \label{le:all-inner}
The {\sc ICE} algorithm outputs all the inner-canonical orientations of $G$. 
\end{lemma}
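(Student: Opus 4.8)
The plan is to prove the statement by induction on the number of edges of $G$, mirroring the structure of Lemma~\ref{le:at-least-one-inner-canonical} and Lemma~\ref{le:every-inner} but running the implication in the opposite direction: rather than showing that every listed orientation is inner-canonical, I must show that every inner-canonical orientation $\mathcal D$ of $G$ arises from some branch of the recursion. The base case is immediate, since when $G$ is the single edge $e_m=(s,t)$ its only $st$-orientation directs $e_m$ from $s$ to $t$, and this is exactly what {\sc ICE} outputs. For the inductive step I would fix an inner-canonical orientation $\mathcal D$ of $G$ and split on which of Cases~1--4 applies to $G$; in each case the goal is to identify the child graph ($G^*$ or $G^-$) together with an inner-canonical orientation of it that the recursive call is guaranteed (by induction) to list, and whose ``expansion'' performed by {\sc ICE} reproduces exactly $\mathcal D$.

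The heart of the argument is a structural dichotomy on $\mathcal D$ itself: I would show that in any inner-canonical orientation $\mathcal D$ of $G$, the edge $e_1=(s,w_1)$ of the right path is outgoing at $s$ (this is automatic, since $s$ is the unique source), and then argue about whether $w_1$ has further incoming edges in $\mathcal D$ beyond those among $e_1,\dots,e_j$. Concretely, in Cases~1, 2, and~3 I want to recover $\mathcal D$ from the contraction branch: let $\mathcal D^*$ be the orientation of $G^*$ obtained from $\mathcal D$ by contracting $e_1$ (the edges incident to $w_1$ become incident to $s$, keeping directions). I must check that $\mathcal D^*$ is an inner-canonical orientation of $G^*$ — that it remains an $st$-orientation (acyclicity and the source/sink conditions are preserved by contracting an edge out of the source, and all faces stay \stfaces{} by the local analysis already used in Lemma~\ref{le:expand-contraction}), and that every internal vertex of $G^*$ still has at least two incoming edges (the merged vertex $s$ is a pole, so the only thing to verify is that no former internal vertex loses incoming edges, which it does not). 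By induction {\sc ICE} lists $\mathcal D^*$ on input $G^*$, and its expansion step re-inserts $e_1$ oriented away from $s$, yielding back $\mathcal D$. In Case~2 I additionally need the symmetric statement for the removal branch — but here the key observation is that whether $\mathcal D$ comes from contraction or from removal is dictated by $\mathcal D$: if the edges $e_1,\dots,e_j$ are the \emph{only} incoming edges at the vertices $v_2,\dots,v_j$ and $w_1$ in the relevant sense, the removal branch applies, otherwise the contraction branch does; since Case~2 runs \emph{both} branches exhaustively, $\mathcal D$ is captured either way. In Case~4 I use only the removal branch: I form $\mathcal D^-$ from $\mathcal D$ by deleting $e_1,\dots,e_j$, check it is an inner-canonical orientation of $G^-$ (every internal endpoint of a deleted $e_i$ retains at least one other incoming edge, as argued in Lemma~\ref{le:expand-removal}, so it still has $\ge 1$ incoming edge in $G^-$; wait — I need $\ge 2$, so here I must be careful and show that such a vertex in fact has $\ge 2$ incoming edges not among the deleted ones, or that it becomes a pole, which is the subtle point), invoke induction, and observe the expansion reproduces $\mathcal D$.

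The main obstacle, as flagged above, is the bookkeeping in the removal cases (Cases~2 and~4): I must verify that after deleting $e_1,\dots,e_j$ the resulting orientation $\mathcal D^-$ is genuinely inner-canonical on $G^-$, i.e.\ that the vertices $w_1,v_2,\dots,v_j$ that lay on the boundary of the removed multilens region either become incident to the outer face of $G^-$ (hence are no longer ``internal'' and need not satisfy the two-incoming-edge condition) or retain at least two incoming edges from within $G^-$. This is exactly where the hypotheses of Lemma~\ref{le:remove-well-formed} — that $v_2,\dots,v_j$ are \emph{not} incident to the outer face of $G$ — interact with the combinatorics of inner-canonical orientations, and I expect to need the description of the path $P$ between $w_1$ and $v_j$ from the proof of Lemma~\ref{le:remove-well-formed} to see which vertices surface on the new outer boundary $C_o^-$. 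A secondary, more routine obstacle is confirming that the contraction map $\mathcal D\mapsto\mathcal D^*$ and the deletion map $\mathcal D\mapsto\mathcal D^-$ are well-defined inverses of the expansion operations performed by {\sc ICE}, so that no inner-canonical orientation of $G$ is missed and the correspondence with recursive outputs is exact; this is a direct check against the definitions of $G^*$, $G^-$, and the two expansion steps, using Observation~\ref{obs:st-face} to keep the $st$-graph property under control throughout.
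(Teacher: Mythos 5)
Your skeleton matches the paper's proof: induction on the number of edges, fixing an arbitrary inner-canonical $\mathcal D$ of $G$, forming the candidate preimage $\mathcal D^*$ on $G^*$ or $\mathcal D^-$ on $G^-$, showing it is inner-canonical, invoking the inductive hypothesis, and observing that the expansion step reproduces $\mathcal D$. You also correctly resolve the two-incoming-edge issue in the removal branch (the vertices $w_1,v_2,\dots,v_j$ lie on the new outer boundary of $G^-$ and are no longer internal), which is indeed how the paper handles it.

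However, you have misplaced where the real difficulty lies, and what you wave off as ``a direct check against the definitions'' and ``routine'' is the technical core of the paper's argument. Contracting $e_1$ in $\mathcal D$ only yields an $st$-orientation if the edge $(v_1,v_2)$ bounding the triangular face next to $e_1$ is outgoing $v_1$: after identification of $v_1$ with $s$, that face becomes a bigon whose two edges must both leave $s$, otherwise you destroy the \stface{} property (and hence the $st$-orientation, by Observation~\ref{obs:st-face}). Symmetrically, the removal branch produces an $st$-orientation only if $(v_1,v_2)$ is outgoing $v_2$, because this is what makes the new right path $p_r^-$ of $G^-$ a directed path from $s$ to $t$; in particular one must show that the whole path $P$ from $v_j$ to $v_1$ (the triangle fan next to $e_1,\dots,e_j$) is consistently oriented from $v_j$ toward $v_1$. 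Neither of these is given: the paper proves the correct orientation of $(v_1,v_2)$ in Cases~1, 3, and 4 from inner-canonicity of $\mathcal D$ by a nontrivial argument — take the maximal directed subpath of $P$ ending at $v_1$, and observe that if it stops short, its starting vertex $v_x$ has two outgoing edges on $P$ and, by bimodality (Observation~\ref{obs:bimodality}), only one incoming edge, contradicting inner-canonicity since $v_x$ is internal. In Case~2 this orientation is not forced by the structure of $G$; that is precisely why both branches are needed, and the paper's case split into Case~A versus Case~B is exactly the dichotomy on the orientation of $(v_1,v_2)$ in $\mathcal D$, not the vaguer ``whether $e_1,\dots,e_j$ are the only incoming edges at $v_2,\dots,v_j$ and $w_1$ in the relevant sense'' criterion you propose. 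Without identifying and proving the forced orientation of $(v_1,v_2)$, the claim that $\mathcal D^*$ or $\mathcal D^-$ is an $st$-orientation does not go through, so the proposal as written has a genuine gap.
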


\begin{proof}
The proof is by induction on the size of $G$. The statement is trivial in the base case, when $G$ is the single edge $(s,t)$. Otherwise, suppose that one of Cases 1--4 applies. 

Suppose, for a contradiction, that there exists an inner-canonical orientation $\mathcal D$ of $G$ that is not generated by the {\sc ICE} algorithm. We distinguish two cases based on the structure of $G$ and on the orientation of the edges in $\mathcal D$. In Case A, we have that $G$ satisfies Case 1 of the algorithm, or $G$ satisfies Case 3 of the algorithm, or $G$ satisfies Case 2 of the algorithm and the edge $(v_1,v_2)$ is outgoing $v_1$ in $\mathcal D$; recall that $e_1,\dots,e_m$ is the counter-clockwise order of the edges incident to $s$, where $e_1$ is the edge in the right path of $G$, and that $v_1,\dots,v_m$ are the end-vertices of $e_1,\dots,e_m$ different from $s$, respectively. In Case B, we have that $G$ satisfies Case 4 of the algorithm, or $G$ satisfies Case 2 of the algorithm and the edge $(v_1,v_2)$ is outgoing $v_2$ in $\mathcal D$.
In Case A, consider the orientation $\mathcal D^*$ of $G^*$ resulting from the contraction of $e_1$ in $\mathcal D$. We prove below that $\mathcal D^*$ is inner-canonical. Then, by induction, it is generated by the algorithm. Therefore, since by expanding the edge $e_1$ (as in Cases 1, 2, and 3 of the algorithm) we obtain $\mathcal D$, we get a contradiction.
Analogously, in Case B, consider the orientation $\mathcal D^-$ of $G^-$ resulting from the removal of the edges $e_1, e_2, \dots,e_j$ in $\mathcal D$, where $j$ is the smallest index such that $e_{j}$ and $e_{j+1}$ define a multilens of $G$. We prove below that $\mathcal D^-$ is inner-canonical. Then, by induction, it is generated by the algorithm. Therefore, since by reinserting the edges $e_1, e_2,\dots e_j$ (as in Cases 2 and 4 of the algorithm) we obtain $\mathcal D$, we get a contradiction.
It remains to prove that $\mathcal D^*$ in Case A and $\mathcal D^-$ in Case B are inner-canonical orientations.

{\bf The orientation $\mathcal D^*$ is inner-canonical in Case A.} 
%Consider the orientation $\mathcal D^*$ resulting from the contraction of $e_1$ in $\mathcal D$. We prove that $\mathcal D^*$ is inner-canonical. Then, by induction, it is generated by the algorithm. Since expanding the edge $e_1$, in the Cases 1, 2, and 3 of the algorithm, we obtain $\mathcal D$, we get a contradiction.
First, every internal vertex of $G^*$ has the same incident edges in $\mathcal D$ and in $\mathcal D^*$ (up to renaming the end-vertex of some of these edges from $v_1$ to $s$), hence every internal vertex has at least two incoming edges in $\mathcal D^*$ since the same is true in $\mathcal D$. Second, we show that $\mathcal D^*$ is an $st$-orientation of $G^*$. In the following, we first assume that the edge $(v_1,v_2)$ is outgoing $v_1$ in $\mathcal D$ and show that $\mathcal D^*$ is an $st$-orientation of $G^*$ under this assumption. We will then show that the edge $(v_1,v_2)$ is indeed outgoing $v_1$ in $\mathcal D$. 

In view of \cref{obs:st-face}, in order to prove that $D^*$ is an $st$-orientation, it suffices to show that all its faces are \stfaces. Let $f$ be the internal face of $G$ incident to $e_1$. Since we are not in Case 4 of the algorithm, we have that $f$ is a triangular face delimited by the cycle $(s,v_1,v_2)$. Observe that every face of $G^*$, except for $f$ and for the outer face, is also a face of $G$ and that its incident edges are oriented in the same way in $\mathcal D^*$ and $\mathcal D$. Hence, each such a face is a \stface of $\mathcal D^*$. The face $f$ is bounded in $G^*$ by two parallel edges between $s$ and $v_2$: One of them is $e_2$ and the other one is the edge $(v_1, v_2)$ after the contraction that identifies $v_1$ and $s$. Both these edges are outgoing $s$ in $\mathcal D^*$, since $e_2$ is outgoing $s$ in $\mathcal D$ and since $(v_1,v_2)$ is outgoing $v_1$ in $\mathcal D$, by hypothesis. Therefore, $f$ is a \stface in $\mathcal D^*$ with source $s$ and sink $v_2$. The outer face is bounded in $G^*$ by the edge $(s, t)$ and the directed path $s, w_2, \dots, t$, which are both outgoing $s$ in $\mathcal D^*$. Thus, the outer face of $\mathcal D^*$ is a \stface with source $s$ and sink $t$. This concludes the proof that $\mathcal D^*$ is an $st$-orientation.

It remains to prove that the edge $(v_1,v_2)$ is outgoing $v_1$ in $\mathcal D$. If $G$ satisfies Case 2 of the algorithm, then the statement trivially holds true by the hypotheses of Case A. In Cases 1 and 3, let $k$ be the smallest index such that $v_k$ is a neighbor of $s$ incident to the outer face; such an index exists as $v_m=t$ is incident to the outer face of $G$. Since in Case 3 there are no parallel edges in $G$ and in Case 1 there are no two edges $e_h$ and $e_{h+1}$ defining a multilens, for any $h\in \{1,\dots,k\}$, we have that the internal faces of $G$ delimited by $e_i$ and $e_{i+1}$, for $i = 1,\dots, k-1$, are triangular. Consider the plane subgraph $P$ of $G$ formed by the edges of such triangular faces that are not incident to $s$. Such a graph is a path between $v_1$ and $v_k$ (confr.\ with the proof of \cref{le:remove-well-formed}). Suppose, for a contradiction, that the edge $(v_1, v_2)$ is outgoing $v_2$ in $\mathcal D$. Consider the maximal directed subpath $v_x,v_{x-1},\dots,v_1$ of $P$ such that the edge $(v_i,v_{i-1})$ is outgoing $v_i$, for $i=x,\dots,2$.
If $x=k$, we have that $\mathcal D$ contains a directed cycle formed by the path $P$ and the subpath of $p_r$ between $v_1$ and $v_k$, which contradicts the fact that $\mathcal D$ is an $st$-orientation.
If $x<k$, consider the edges $(v_x,v_{x-1})$ and $(v_x,v_{x+1})$, that are both outgoing $v_x$ in $\mathcal D$.
First, $(s,v_x)$ is incoming $v_x$ in $\mathcal D$ and it is the only edge incident to $v_x$ that follows $(v_x,v_{x+1})$ and precedes $(v_x,v_{x-1})$ in counter-clockwise order around $v_x$.
Second, by \cref{obs:bimodality}, all the edges of $G$ incident to $v_x$ that follow $(v_x,v_{x-1})$ and precede $(v_x,v_{x+1})$ in counter-clockwise order around $v_x$ are outgoing $v_x$. Hence, $v_x$ has only one incoming edge. Since $v_x$ is an internal vertex of $G$, we have a contradiction to the fact that $\mathcal D$ is inner-canonical.
This concludes the proof that the edge $(v_1,v_2)$ is outgoing $v_1$ in $\mathcal D$ in Case~A.

{\bf The orientation $\mathcal D^-$ is inner-canonical in Case B.} First, all the internal vertices of $G^-$ are also internal vertices of $G$, and thus they have two incoming edges in $\mathcal D^-$ as they also do in $\mathcal D$. It remains to prove that $\mathcal D^-$ is an $st$-orientation. With this aim, in view of \cref{obs:st-face}, it suffices to show that all the faces of $\mathcal D^-$ are \stfaces. Since each internal face of $\mathcal D^-$ is also an internal face of $\mathcal D$, we have that it is a \stface. We now show that the outer face of $\mathcal D^-$ is a \stface. The left path of the outer face of $G^-$ coincides with the edge $(s,t)$, hence it is a directed path from $s$ to $t$. We now need to prove that the right path $p^-_r$ of  $G^-$ is also a directed path from $s$ to $t$ in $\mathcal D^-$. Part of $p^-_r$ is the subpath of $p_r$ between $v_1$ and $t$; this is a directed path from $v_1$ to $t$ in $\mathcal D^-$, since $p_r$ is a directed path from $s$ to $t$ in $\mathcal D$. 
If the edges $e_1$ and $e_2$ define a multilens (which implies that Case 4 applies), then $p^-_r$ is completed with the edge $e_2$, which is directed from $s$ to $v_1$ in $\mathcal D^-$. 
Otherwise, we have that the internal faces of $G$ delimited by the edges $e_i$ and $e_{i+1}$, for $i = 1,\dots, j-1$, are triangular. Consider the plane subgraph $P$ of $G$ formed by the edges of such triangular faces that are not incident to $s$. Such a graph is a path between $v_1$ and $v_j$ (confr.\ with the proof of \cref{le:remove-well-formed}).
In order to prove that $p^-_r$ is a directed path from $s$ to $t$ in $\mathcal D^-$, it suffices to prove that that $P$ is oriented from $v_j$ to $v_1$ in $\mathcal D$ (and thus also in $\mathcal D^-$).
First, we show that $P$ is oriented from $v_j$ to $v_1$ in $\mathcal D$ under the assumption that the edge $(v_1,v_2)$ is outgoing $v_2$. Then, we show that such an edge is outgoing $v_2$ in Case B.
Consider the maximal directed subpath $v_x,\dots,v_1$ of $P$ such the edge $(v_i,v_{i-1})$ is outgoing $v_i$, for $i=x,\dots,2$.
If $x=j$, we have that such a subpath coincides with $P$, and thus $P$ is directed from $v_j$ to $v_1$ in $\mathcal D$, as desired.
Otherwise (i.e., when $1<x<j$), we have that both the edges $(v_x,v_{x-1})$ and $(v_x,v_{x+1})$ are outgoing $v_x$. As in the discussion for Case~A, this implies that $v_x$ has only one incoming edge in $\mathcal D$, which is not possible since $\mathcal D$ is inner-canonical. This concludes the proof that the path $P$ is directed from $v_j$ to $v_1$ in $\mathcal D$, under the assumption that the edge $(v_1,v_2)$ is outgoing $v_2$.
%  $(v_1,v_2)$ is outgoing $v_2$ in Case B

Next, we prove that the edge $(v_1,v_2)$ is outgoing $v_2$ in Case B. In Case~2, this is true by hypothesis. In Case~4, $G$ contains parallel edges between $s$ and $v_1$. 
Let $e_k$ be the edge parallel to $e_1$ with the smallest index. We have that, by the construction of $P$, the edge $(v_2,v_1)$ follows $e_1$ and precedes $e_k$ in clockwise order around $v_1$. If $v_1 = t$, then $(v_2,v_1)$ is outgoing $v_2$ since $\mathcal D$ is an $st$-orientation. Otherwise, we have that the edge $(v_1=w_1,w_2)$ exists and is outgoing $v_1$. Also, such an edge follows $e_1$ and precedes $e_k$ in counter-clockwise order around $v_1$. Therefore, by \cref{obs:bimodality}, all the edges of $G$ incident to $v_1$ that follow $e_1$ and precede $e_k$ in clockwise order around $v_1$ are incoming $v_1$. This concludes the proof that the edge $(v_1,v_2)$ is outgoing $v_2$ in $\mathcal D$ in Case B, and the proof of the lemma.
\end{proof}

\begin{lemma} \label{le:all-inner-once}
The {\sc ICE} algorithm outputs every inner-canonical orientation of $G$~once. 
\end{lemma}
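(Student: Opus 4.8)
The plan is to prove \cref{le:all-inner-once} by induction on the size of $G$, mirroring the structure of \cref{le:all-inner} but arguing uniqueness instead of completeness. In the base case $G$ is a single edge and the algorithm produces exactly one orientation. For the inductive step, I would distinguish the four cases of the {\sc ICE} algorithm. In Cases 1, 3, and 4 the algorithm makes a single recursive call (on $G^*$ or on $G^-$), and the map that expands a recursively produced orientation $\mathcal D^*$ (resp.\ $\mathcal D^-$) into an orientation $\mathcal D$ of $G$ by orienting $(s,w_1)$ (resp.\ $e_1,\dots,e_j$) away from $s$ is clearly injective: the orientation of all edges of $G^*$ (resp.\ $G^-$) is recovered by contracting $(s,w_1)$ (resp.\ by deleting $e_1,\dots,e_j$) in $\mathcal D$, so two distinct outputs of $G$ would come from two distinct recursive outputs, contradicting the inductive hypothesis applied to $G^*$ (resp.\ $G^-$).

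The crux of the proof is Case 2, where the algorithm branches: it first produces one orientation of $G$ from each inner-canonical orientation of $G^*$ (orienting $(v_1,v_2)=(s,w_1)$ away from $s$), and then one orientation of $G$ from each inner-canonical orientation of $G^-$ (orienting $e_1,\dots,e_j$ away from $s$). Within each branch, injectivity follows from the inductive hypothesis exactly as above. So the only way the same orientation $\mathcal D$ of $G$ could be output twice is if it is produced both by the contraction branch and by the removal branch. But these two branches produce orientations with different orientations of the edge $(v_1,v_2)$: every orientation coming from the contraction branch has $(v_1,v_2)$ outgoing $v_1=s$, while every orientation coming from the removal branch has $e_1$ (hence $(v_1,v_2)$, which equals $e_1$) outgoing $s$ as well --- wait, one must be careful: in the removal branch $e_1$ is also oriented away from $s$. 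The distinction is rather in where $w_1$ ends up: I would instead observe that in the contraction branch the resulting $\mathcal D$ has $w_1=v_1$ with at least two incoming edges (it is treated as an internal vertex of $G^*$), whereas I should argue the branches are distinguished by the status of the multilens edges $e_1,\dots,e_j$. The clean invariant to use is the one already extracted in the proof of \cref{le:all-inner}: the contraction branch yields precisely the orientations of $G$ in which $(v_1,v_2)$ is outgoing $v_1$, and the removal branch yields precisely those in which $(v_1,v_2)$ is outgoing $v_2$. Hence the two branches produce disjoint sets of orientations, and no orientation is output twice.

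Therefore I would structure the Case 2 argument as two sub-claims: (a) every orientation $\mathcal D$ produced by the contraction branch has the edge between $s$ and $w_1$ oriented away from $s$, and conversely every orientation produced by the removal branch has some edge $e_i$ with $i\le j$ —in particular $e_1$, the edge between $s$ and $w_1$— present; since in Case 2 there are parallel edges but $e_1$ is the unique edge between $s$ and $w_1$, the true discriminator is whether $w_1$ (i.e.\ $v_1$) has incoming edges among $e_2,\dots,e_j$ or not. Concretely: after contraction, $v_1$ is merged with $s$, so in the reconstructed $\mathcal D$ the vertex $w_1$ has its $G^*$-neighbours as out-neighbours via edges that were incident to $s$; after removal, $e_2,\dots,e_j$ are reinserted as outgoing $s$, so $v_2,\dots,v_j$ receive an incoming edge from $s$ that they do not receive in the contraction branch. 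This gives a structural property of $\mathcal D$ (the orientation of $(v_1,v_2)$, equivalently whether $v_2$ has an incoming edge directly from $s$) that differs between the two branches, so the branches are disjoint.

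I expect the main obstacle to be pinning down a single clean invariant that genuinely separates the contraction branch from the removal branch in Case 2 and that is easy to verify from the explicit reconstruction recipes. Once that invariant is identified —and the proof of \cref{le:all-inner} already isolates exactly this dichotomy, namely the orientation of the edge $(v_1,v_2)$— the rest is a routine induction: within each branch, distinct recursive outputs give distinct outputs of $G$ by injectivity of edge-expansion (reversible by contraction/deletion), and the inductive hypothesis forbids repeats inside the recursion. The remaining minor care is Case 2's reliance on the fact, established in \cref{le:all-inner}, that $(v_1,v_2)$ outgoing $v_1$ forces the contraction path and $(v_1,v_2)$ outgoing $v_2$ forces the removal path, so that no inner-canonical orientation escapes and none is doubly counted.
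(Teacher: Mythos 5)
Your proof is correct and takes essentially the same approach as the paper: induction on the size of $G$, injectivity of the expansion map within each branch (so the inductive hypothesis rules out duplicates coming from a single recursive call), and the orientation of the edge $(v_1,v_2)$ --- outgoing $v_1$ in every output of the contraction branch, outgoing $v_2$ in every output of the removal branch --- as the discriminator in Case~2. The paper verifies this discriminator directly rather than importing it from \cref{le:all-inner}: in $\mathcal D^*_1$ the vertex $v_1$ is identified with $s$, forcing $(v_1,v_2)$ to be outgoing $v_1$ after decontraction, while in $\mathcal D^-_2$ the edge $(v_1,v_2)$ lies on the right path of the outer face of $G^-$ with $s,v_2,v_1,t$ in this order, forcing it to be outgoing $v_2$ after reinsertion.

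One caution about a side remark near the end of your Case~2 discussion. You write that after removal ``$v_2,\dots,v_j$ receive an incoming edge from $s$ that they do not receive in the contraction branch,'' and that the discriminator is ``equivalently whether $v_2$ has an incoming edge directly from $s$.'' This is false: the contraction branch contracts only $e_1$, so $e_2,\dots,e_j$ remain incident to $s$ in $G^*$; since $s$ is the source of any inner-canonical orientation, they are outgoing $s$ in $\mathcal D^*$ and hence in the reconstructed $\mathcal D$ as well. Thus $v_2,\dots,v_j$ receive an incoming edge from $s$ in \emph{both} branches, and ``does $v_2$ have an incoming edge from $s$'' fails to separate them. The correct discriminator is the one you stated first, the orientation of $(v_1,v_2)$; the alternative characterizations in that paragraph should be dropped, since if you leaned on them the disjointness argument would fail.
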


\begin{proof}
The proof is by induction on the size of $G$. The statement is trivial in the base case, when $G$ is the single edge $(s,t)$; indeed, in this case no recursion is applied and hence the algorithm outputs the (unique) inner-canonical orientation of $G$ only once. 

Suppose now that $G$ contains more than one edge. Also suppose, for a contradiction, that the algorithm produces (at least) twice the same inner-canonical orientation $\mathcal D$ of $G$. We distinguish three cases.

First, suppose that the algorithm produces $\mathcal D$ both by a ``decontraction'' of an inner-canonical orientation $\mathcal D^*_1$ of $G^*$ and by a decontraction of an inner-canonical orientation $\mathcal D^*_2$ of $G^*$. We show that $\mathcal D^*_1$ and $\mathcal D^*_2$ are the same orientation. Indeed, $\mathcal D$ is obtained (from each of $\mathcal D^*_1$ and $\mathcal D^*_2$) by orienting the edge $(s,w_1)$ away from $s$ and by keeping the orientation of all other edges unchanged. Hence, if $\mathcal D^*_1$ and $\mathcal D^*_2$ were different, then also the orientations of $G$ resulting from the decontractions of $\mathcal D^*_1$ and $\mathcal D^*_2$ would be different, while they are both equal to $\mathcal D$. Since $\mathcal D^*_1$ and $\mathcal D^*_2$ are the same orientation, by induction, the algorithm outputs such an orientation only once, hence  the algorithm outputs $\mathcal D$ only once, as well, a contradiction.

Second, suppose that the algorithm produces $\mathcal D$ both by a ``reinsertion'' of directed edges in an inner-canonical orientation $\mathcal D^-_1$ of $G^-$ and by a reinsertion of directed edges in an inner-canonical orientation $\mathcal D^-_2$ of $G^-$. We show that $\mathcal D^-_1$ and $\mathcal D^-_2$ are the same orientation. Indeed, $\mathcal D$ is obtained (from each of $\mathcal D^-_1$ and $\mathcal D^-_2$) by orienting the edges $e_1,e_2,\dots,e_j$ away from $s$ and by keeping the orientation of all other edges unchanged. Hence, if $\mathcal D^-_1$ and $\mathcal D^-_2$ were different, then also the orientations of $G$ resulting from the reinsertion of $e_1,e_2,\dots,e_j$ in $\mathcal D^-_1$ and $\mathcal D^-_2$ would be different, while they are both equal to $\mathcal D$. Since $\mathcal D^-_1$ and $\mathcal D^-_2$ are the same orientation, by induction, the algorithm outputs such an orientation only once, hence the algorithm outputs $\mathcal D$ only once, as well, a contradiction.

Finally, suppose that the algorithm produces $\mathcal D$ both by a decontraction of an inner-canonical orientation $\mathcal D^*_1$ of $G^*$ and by a reinsertion of directed edges in an inner-canonical orientation $\mathcal D^-_2$ of $G^-$. We show that the edge $(v_1,v_2)$ of $G$ is oriented differently in the inner-canonical orientation $\mathcal D^*$ of $G$ resulting from the decontraction of $\mathcal D^*_1$ and in the inner-canonical orientation $\mathcal D^-$ of $G$ resulting from the reinsertion of directed edges in $\mathcal D^-_2$. This contradicts the fact that $\mathcal D^*$ and $\mathcal D^-$ are both equal to $\mathcal D$. On the one hand, in $\mathcal D^*_1$, the vertex $v_1$ is identified with $s$, hence the edge $(v_1,v_2)$ is outgoing $v_1$. On the other hand, in $\mathcal D^-_2$, the edge $(v_1,v_2)$ of $G$ belongs to the right path of the outer face of $G^-$, with $s,v_2,v_1,t$ in this order along such a path. Hence, the edge $(v_1,v_2)$ is outgoing $v_2$. This completes the induction and hence the proof of the lemma.
\end{proof}

\cref{le:every-inner,le:all-inner,le:all-inner-once} complete the proof of correctness of the {\sc ICE} algorithm. 

\subsection{Efficient Implementation of the ICE Algorithm}\label{sec:alg-ICE-fast}
%We now prove that the {\sc ICE} algorithm can be implemented efficiently (confr.\ \cref{th:innercanonical-orientation-plane-uv}).
%
%We prove \cref{th:canonical-orientation-plane-uv}.  
By \cref{le:every-inner,le:all-inner,le:all-inner-once}, the {\sc ICE} algorithm outputs all and only the inner-canonical orientations of $G$ once. In the following, 
we show how to efficiently implement the {\sc ICE} algorithm in order to achieve the stated bounds (confr.\ \cref{th:innercanonical-orientation-plane-uv}).
The pseudocode of the algorithm is given in \cref{alg:InnerCanonicalEnumerator}. 

In the following, we call \emph{left-to-right order} around $s$ the linear order of the edges incident to $s$ obtained by visiting in clockwise order such edges starting from $e_m$ and ending at $e_1$. Analogously, we call \emph{right-to-left order} around $s$ the linear order of the edges incident to $s$ obtained by visiting in counter-clockwise order such edges starting from $e_1$ and ending at $e_m$. This allows us to properly refer to an edge incident to a neighbor $v_i$ of $s$ as to the \emph{rightmost} (resp.\ \emph{leftmost}) \emph{edge} incident to $v_i$ of a specific type (e.g., the leftmost nonloose parallel edge incident to $v_i$ or the rightmost parallel edge incident to $v_i$).

\begin{algorithm}[tb!]
\SetKwData{G}{G}
\SetKwData{GG}{G(V,E)}
\SetKwData{V}{V} %semantic info
\SetKwData{info}{$\mathbf{A}$} %semantic info
\SetKwData{thresh}{$\tau$} %similarity threshold
\SetKwData{lsh}{$S$} %lsh table
\SetKwData{bbls}{$T$} %bubble table
\SetKwData{comms}{C} %communities
\SetKwData{com}{$C_{s}$} %communities
\SetKwData{node}{s} %node from v
\SetKwData{cnode}{v} %node from q
\SetKwData{buck}{B} %node from v
\SetKwData{cand}{y} %node from v
\SetKwData{enq}{\textsc{enqueue}} %node from v
\SetKwData{deq}{\textsc{dequeue}} %node from v
\SetKwFunction{dist}{$d_{ST}$}
\DontPrintSemicolon
\SetKwInOut{Input}{Input}
\SetKwInOut{Output}{Output}
\SetKwInOut{Global}{Global}
%\Input{}
\Global{\Vertex $S$ \codecomment{(the pole $s$ of a well-formed graph $G$)}\\ 
\Edge{[}\ {]} EDGES  \codecomment{(the array of the edges of $G$)}
}
\Output{A sequence of inner-canonical orientations of $G$}
case $\leftarrow$ \texttt{DetectCase()}\\
\If{case = CONTRACT or case = CONTRACT\&REMOVE}{
    Edge $e_1 \leftarrow$ \texttt{Contract()}\\
    \texttt{InnerCanonicalEnumerator()}\\
    \texttt{Decontract($e_1$)}}
\If{case = REMOVE or case = CONTRACT\&REMOVE}{Edges{[}\ {]} removedEdges $ \leftarrow$ \texttt{Remove()}\\
    \texttt{InnerCanonicalEnumerator()}\\
    \texttt{Reinsert(removedEdges)}}
\If{case = BASE}{\texttt{Output()}}
\BlankLine
% \Switch{case}{
%  \Case{Base}{
%     \texttt{Output()}}
%  \Case{Contract}{
%     Edge* $e_1$ = \texttt{Contract()}\\
%     \texttt{recursiveEnumerate()}\\
%     \texttt{decontract($e_1$)}
%  }
%  \Case{Remove}{
%     Edge* removedEdges = \texttt{Remove()}\\
%     \texttt{RecursiveEnumerate()}\\
%     \texttt{Reinsert(removedEdges)}
%     }
%  \Case{Contract\&Remove}{
%     Edge* $e_1$ = \texttt{Contract()}\\
%     \texttt{recursiveEnumerate()}\\
%     \texttt{decontract($e_1$)}
%      Edge* removedEdges = \texttt{Remove()}\\
%     \texttt{RecursiveEnumerate()}\\
%     \texttt{Reinsert(removedEdges)}
%     }
 
%     $v \gets \deq(Q)$\;
%     \ForEach{$x \in N(v)$}{
%         \If{$x\not\in C$ and $d_{ST}(s,x) \leq \tau$}{
%             $C \gets C \cup \{x\}$\;
%             $\enq(Q,x)$\;
%         }
%     }
% }
% \Return $C$;
%\BlankLine
\caption{Algorithm {\sc InnerCanonicalEnumerator}}
\label{alg:InnerCanonicalEnumerator}\vspace{-0.2cm}
\end{algorithm}

\begin{algorithm}[h!]
  \caption{{Data Structures.} 
  \label{algo:datastructures}
  \underline{Underlined} pointers of a record of type \Vertex might be different from \NULL only if the vertex is $s$.
  \underline{Underlined} pointers of~a record of type \Edge might be different from \NULL only if one of the end-vertices of the edge is $s$.
  } 
  \Struct{Vertex}{
    \Int{} degree;
    \codecomment{integer representing the degree of the vertex}\\
    \Boolean{} is\_outer;  \codecomment{\texttt{True} if the vertex is incident to the outer face, \texttt{False} otherwise}\\
    \Edge{} first\_incident\_to\_s;  \codecomment{reference to the rightmost edge incident to the current vertex and to $s$; %that is not $e_1$ (the latter condition is relevant only if the current vertex is $w_1$); 
    \NULL if the vertex is $s$ or if the vertex is not adjacent to $s$}\\
    \Edge{} \underline{$e_1$}; \codecomment{reference to the rightmost edge incident to the vertex and to $s$}\\
    \Edge{} \underline{first\_chord}; \codecomment{reference to the rightmost chord incident to $s$}\\
    \Edge{} \underline{first\_parallel}; \codecomment{reference to the rightmost parallel edge incident to $s$}\\
    \Edge{} \underline{first\_lens}; \codecomment{reference to the rightmost edge of a multilens incident to $s$}
 
    }
  \Struct{Edge}{
    \Vertex{} x, y; \codecomment{references to the end-vertices of the edge}\\
    \Boolean{} oriented\_from\_x\_to\_y; \codecomment{\texttt{True} if the edge is oriented from $x$ to $y$, \texttt{False} otherwise}\\
    \Boolean{} is\_outer; \codecomment{\texttt{True} if the edge is an outer edge, \texttt{False} otherwise}\\
    \Edge{} next\_around\_x, next\_around\_y; 
    \codecomment{Reference to the edge that follows the current edge in counter-clockwise order around $x$ (around $y$) in $G$.}\\
    \Int{} \underline{ord}; \codecomment{integer representing the position of the edge in the left-to-right order around $s$}\\
    \Edge{} \underline{next\_parallel\_with\_me}; \codecomment{reference to the edge $(x,y)$ that follows the current edge in right-to-left order around $s$; not \NULL only if the current edge is one of the parallel edges between $x$ and $y$}\\
    \Edge{} \underline{next\_chord}; 
    \codecomment{reference to the chord that  follows the current edge in the right-to-left order around $s$; not \NULL only if the current edge is a chord}\\
    \Edge{} \underline{next\_nonloose\_parallel}; 
    \codecomment{reference to the nonloose parallel edge that follows the current edge in right-to-left order around $s$; not \NULL only if the current edge is parallel and nonloose}\\
    \Edge{} \underline{next\_nonloose\_lens}; 
    \codecomment{reference to the nonloose edge belonging to a multilens that follows the current edge in right-to-left order around $s$; not \NULL only if the current edge belongs to a multilens}
  }
  \label{apx:data-structures}
\end{algorithm}

\subparagraph{Data structures.} We start by describing the data structures exploited by the algorithm. Vertices and edges of $G$ are modelled by means of the following records (see also \cref{apx:data-structures}).
\begin{description}
\item[Record of type {\bf Vertex}:] For each vertex $x$, the following information is stored: 
\begin{itemize}
\item an integer $\deg_G(x)$;
\item whether $x$ is incident to the outer face or not;
\item a reference to the rightmost edge between $s$ and $x$, if any; 
\item if $x=s$, we also store the following information:
\begin{itemize}
\item a reference to the rightmost edge $e_1$ incident to $x$;
\item a reference to the rightmost chord incident to $x$, if any; 
\item a reference to the rightmost parallel edge incident to $x$, if any; %that is not $e_1$, 
and
\item a reference to the rightmost edge belonging to a multilens, if any.
\end{itemize}  
\end{itemize}

\item[Record of type {\bf Edge}:] For each edge $e$, the following information is stored: 
\begin{itemize}
\item a reference to the two end-vertices $x$ and $y$ of $e$;
\item whether $e$ is oriented from $x$ to $y$, or vice versa (this is initialized arbitrarily); 
\item whether $e$ is an outer edge or not; 
\item a reference to the edge incident to $x$ that follows $e$ in counter-clockwise order around $x$ and a reference to the edge incident to $y$ that follows $e$ in counter-clockwise order around $y$ (this information represents the rotation system around $x$ and $y$);
\item if $e$ is incident to $s$, we also store the following information:
\begin{itemize}
\item an integer representing the position of $e$ in left-to-right order around $s$ (we assume that the leftmost edge $(s,t)$ has position $1$);
\item if $e$ is parallel, a reference to the parallel edge $(x,y)$ that follows $e$ in right-to-left order around $s$, if any;
\item if $e$ is a chord, a reference to the chord that follows $e$ in right-to-left order around $s$, if any;
\item if $e$ is parallel and nonloose, a reference to the parallel nonloose edge that follows $e$ in right-to-left order around $s$, if any; and
\item if $e$ belongs to a multilens, a reference to the nonloose edge belonging to a multilens that follows $e$ in right-to-left order around $s$, if any.
\end{itemize}  
\end{itemize}
\end{description}

\noindent
The algorithm exploits the following (global) data structures:

\begin{itemize} 
    \item The input well-formed biconnected plane graph $G = (V,E)$ with poles $s$ and $t$ is represented using records of type {\bf Vertex} for the vertices in $V$ and records of type {\bf Edge} for the edges in $E$. In particular, we maintain the reference to a record $S$ of type {\bf Vertex} corresponding to the vertex $s$.
    \item In order to efficiently output the orientation of all the edges of an inner-canonical orientation, we use an array {\sc EDGES} whose elements are records of type \Edge whose $i$-th entry contains a reference to the edge with id equal to $i$. 
\end{itemize}

\subparagraph{Procedures.} The algorithm builds upon the procedures described below. %A pseudocode description of these procedures can be found in \cref{apx:pseudocode}. 
In the remainder, we denote the type-\Vertex record for a vertex $x$ by $\nu(x)$ and the type-\Edge record for an edge $e$ by $\varepsilon(e)$. Clearly, $S = \nu(s)$.
\smallskip

%G contains parallel edges
%G contains parallel edges su s,w1
%if G contains parallel edges and G does not contain parallel edges su s,w1,  contains a chord before the first multilens edge

\begin{algorithm}[h!]
\DontPrintSemicolon
\SetKwInOut{Input}{Input}
\SetKwInOut{Output}{Output}
\SetKwInOut{Global}{Global}
%\Input{}
\Global{\Vertex $S$; \codecomment{the pole $s$ of a well-formed graph $G$}\\ 
\Edge{[}\ {]} EDGES;  \codecomment{the array of the edges of $G$}}
\Output{A label in \{\texttt{BASE, CONTRACT, REMOVE, CONTRACT\&REMOVE}\} encoding the case of the {\sc ICE} algorithm that applies to $G$}
\BlankLine
$e_1$ $\leftarrow$ $S$.$e_1$ \codecomment{The edge $e_1$ of $G$; for simplicity of description, assume $e_1.x$ is $S$.}\\
\If{$e_1$.next\_around\_x is NULL}{
    \Return \texttt{BASE}; \codecomment{Base Case}
}
\If{$S$.first\_parallel is NULL}{
    \Return \texttt{CONTRACT}; \codecomment{Case 3}
}
\If{$e_1$.next\_parallel\_with\_me is not NULL}{
    \Return \texttt{REMOVE}; \codecomment{Case 4}
}

\If{$S$.first\_chord.ord $\geq$ $S$.first\_lens.ord}{
    \Return \texttt{CONTRACT}; \codecomment{Case 1}
}
\Else{
    \Return \texttt{CONTRACT\&REMOVE}; \codecomment{Case 2}
}

\BlankLine
\caption{\textbf{Procedure {\sc DetectCase}}}
\label{alg:DetectCase}\vspace{-0.2cm}
\end{algorithm}

\noindent{\sc Detect Case.}
This procedure allows us to efficiently determine which of the cases of the {\sc ICE} algorithm applies to $G$; refer to the pseudocode of \cref{alg:DetectCase}.
%
%For simplicity of description, we assume that the endpoint of $e_1=(x,y)$ corresponding to $s$ is $x$. Furthermore, 
We perform checks in the following order, assuming that previous checks have not concluded which case of the {\sc ICE} algorithm we are in.
\begin{itemize}
\item We access $e_1$ via $S$ and check if the reference in $\varepsilon(e_1)$ to the edge incident to $s$ that follows $e_1$ in counter-clockwise order around $s$ is \NULL. In the positive case, we are in the {\em Base Case}. 
\item We check if the reference in $S$ to the rightmost parallel edge incident to $s$ is \NULL. In the positive case, we are in {\em Case 3}. 
\item We access $e_1$ via $S$ and check if the reference in $\varepsilon(e_1)$ to the parallel edge with the same end-vertices as $e_1$ that follows $e_1$ in right-to-left order around $s$ is \NULL. In the negative case, we are in {\em Case 4}. 
\item Finally, we access via $S$ the type-\Edge record $\varepsilon(c)$ referenced by $S$ for the rightmost chord $c$ incident to $s$ and the type-\Edge record $\varepsilon(r)$ referenced by $S$ for the rightmost edge $r$ belonging to a multilens. If $\varepsilon(c)$ is \NULL, or if the integer in $\varepsilon(c)$ representing the position of $c$ in the left-to-right order around $s$ is smaller than the integer in $\varepsilon(r)$ representing the position of $r$ in the left-to-right order around $s$, we are in {\em Case 2}, otherwise we are in {\em Case 1}.
\end{itemize}
Clearly, we can perform all these checks in $\bigoh(1)$ time.
\smallskip

\noindent{\sc Output.}
This procedure allows us to efficiently output an inner-canonical orientation of $G$. With this aim, it suffices to scan the array EDGES, printing for each edge its orientation. 
Clearly, this takes $\bigoh(\varphi)$ time.
\smallskip
\begin{figure}[htb]
    \centering
    \includegraphics[page=13,scale =.7]{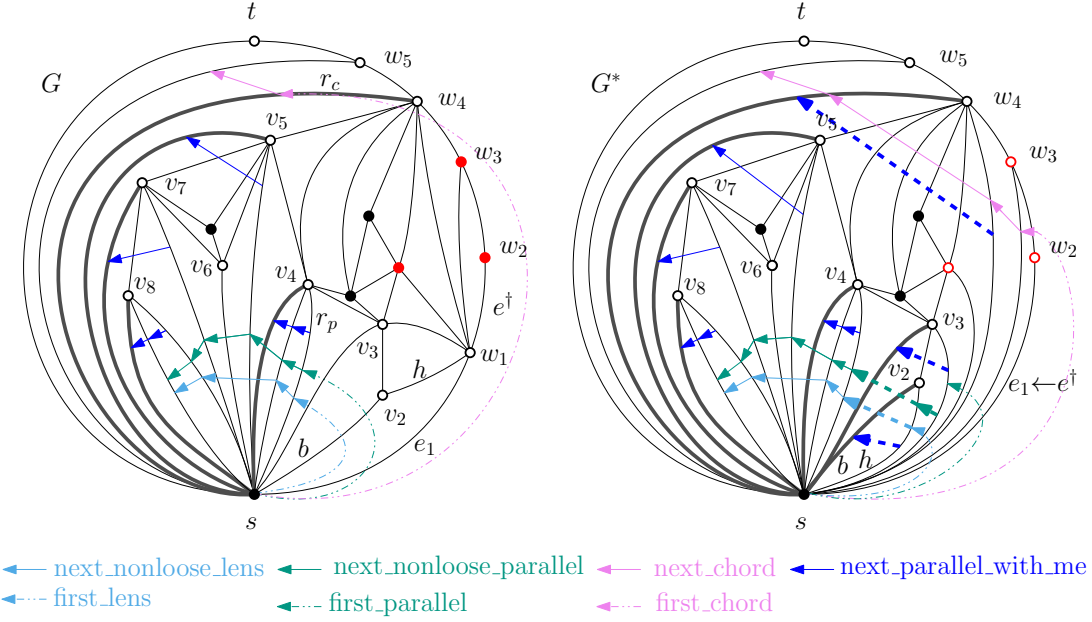}
\caption{\label{fig:contract-procedure}
Illustration for the {\sc Contract} procedure. The input graph $G$ is on the left, whereas the graph $G^*$ resulting from the contraction of $e_1$ in $G$ is on the right. Empty and filled circles represent vertices adjacent and not adjacent to $s$, respectively.
The neighbors of $w_1$ that are not incident to $S$ in $G$ are red. The arrowed curves illustrate the references from type-\Edge records to type-\Edge records that correspond to the edges incident to $s$ (solid or dashed) as well as the references from the type-\Vertex record $S$ to type-\Edge records that correspond to the edges incident to $s$ (dashed-dotted). To reduce clutter, the references \texttt{next\_around\_x} and \texttt{next\_around\_y} (see \cref{algo:datastructures})
used to encode the counter-clockwise order of the edges incident to $s$ are omitted.
The references on the right that are not present on the left are dashed and thick. Loose edges are solid and thick.
}
\end{figure}

\smallskip
\noindent{\sc Contract.}
This procedure allows us to efficiently perform the contraction of the edge $e_1$ in~$G$ in order to construct the graph $G^*$ and to update the data structures in such a way as to support the recursive calls of the {\sc ICE} algorithm; refer also
to the pseudocode description of \cref{alg:Contract} and to \cref{fig:contract-procedure}. 
%\textcolor{realred}{For simplicity of description, we assume that the endpoint of $e_1$ that corresponds to $s$ is $x$.}
The procedure works as follows.

First, we access the reference to the rightmost edge $e_1$ incident to $s$ via $S$ and set the Boolean value in $\varepsilon(e_1)$ representing the orientation of such an edge to \texttt{True} if the end-vertex $x$ of $e_1$  is $s$ and to \texttt{False} otherwise. Also, we set the integer value of $S$ representing the degree $\deg_{G^*}(s)$ of $s$ in $G^*$ to $\deg_G(s) + \deg_G(w_1) - 2$. Indeed, the degree of the vertex resulting from the contraction of $e_1$ is the sum of the degrees of the end-vertices $s$ and $w_1$ of $e_1$, minus two, as $e_1$ is incident to both $s$ and $w_1$ in $G$ and is not part of $G^*$.
Second, the reference to the rightmost edge incident to $s$ is set to point to the successor $(w_1,w_2)$ of $e_1$ in the counter-clockwise order of the edges incident to $w_1$. 
Third, we store for future use a reference to the rightmost parallel edge $r_p$ incident to $s$ in $G$ and a reference to the rightmost chord $r_c$ incident to $s$ in $G$; this information can be accessed via $S$.

Next, we perform a visit of the edges incident to $w_1$ in the counter-clockwise order in which they appear around $w_1$ in $G$ starting at $(w_1,w_2)$. Throughout the visit, we keep track of the lastly visited edge $l_c$ incident to $w_1$ that becomes a chord in~$G^*$ and of the lastly visited edge $l_p$ incident to $w_1$ that results in a nonloose parallel edge in $G^*$. We execute the following actions for each encountered edge $e$. 
\begin{description}
    \item[{Current edge's updates:}] We set the reference to the end-vertex of $e$ corresponding to $w_1$ to point to $S$ and the integer in $\varepsilon(e)$ representing the position of $e$ in left-to-right order around $s$ to be $\deg_{G^*}(s)-i+1$, if $e$ is the $i$-th edge considered in the visit. 
    \item[{Handling new chords:}] 
    We test whether the end-vertex $u$ of $e$ different from $w_1$ (in fact, different from $s$, after the previous update) is incident to the outer face (this information is stored in $\nu(u)$). If that is the case and if $e\neq (w_1,w_2)$, we have encountered a chord of $G^*$. If this is the first encountered edge incident to $w_1$ that is a chord in $G^*$, then it is also the rightmost chord of $G^*$, hence we set the reference in $S$ to the rightmost chord incident to $s$ to point to $\varepsilon(e)$. Otherwise, we have already encountered an edge incident to $w_1$ that is a chord in $G^*$, and the last encountered edge of this type is stored in $l_c$, hence we set the reference in $\varepsilon(l_c)$ to the chord that follows $l_c$ in the right-to-left order around $s$ to $\varepsilon(e)$. In either case, we update $l_c$ to $e$.
    \item[{Handling new lenses:}] We test whether the edge that follows $e$ in the counter-clockwise order of the edges incident to $w_1$ in $G$ is $e_1$. If this is the case, then the edge $e$ is the edge $(v_2,w_1)$, labeled $h$ in \cref{fig:contract-procedure}, and becomes the rightmost edge of a multilens of $G^*$ composed of parallel edges between $s$ and $v_2$.
    Therefore, we update the reference in $S$ to the rightmost nonloose edge of a multilens incident to $s$ to $\varepsilon(e)$. Notice that $e$ is the only nonloose edge that is involved in a multilens in $G^*$ and not in $G$. Also, if the reference in $S$ to the rightmost nonloose edge of a multilens incident to $s$ used to point to an edge $r_\ell$, then we update the reference in $\varepsilon(e)$ to the nonloose edge belonging to a multilens that follows $e$ in right-to-left order around $s$ to point to $\varepsilon(r_\ell)$.
    \item[{Handling new parallel edges:}] 
    We test if the end-vertex $x$ of $e$ different from $w_1$ is already adjacent to $s$, i.e., if $x$ is one of the vertices $v_2,\dots, v_m$; this information is stored in $\nu(x)$ as the reference to the rightmost edge $e_j$ between $x$ and $s$, for some $j \in \{2,\dots,m\}$. 
     If this is the case, then edges between $s$ and $x$ exist in $G$, thus the contraction of $e_1$ turns $e$ into an edge parallel to such edges; also, $e$ is a nonloose edge, as it is to the right of the edges that already exist between $s$ and $x$ in $G$. We set the reference in $\varepsilon(e)$ to the edge parallel to $e$ that follows $e$ in right-to-left order around $s$ to point to $\varepsilon(e_j)$.
    If $e$ is the first encountered edge incident to $w_1$ that is a parallel edge in $G^*$, then it is also the rightmost parallel edge of $G^*$, hence we set the reference in $S$ to the rightmost parallel edge incident to $s$ to point to $\varepsilon(e)$. Otherwise, we have already encountered an edge incident to $w_1$ that is a nonloose parallel edge in $G^*$, and the last encountered edge of this type is stored in $l_p$, hence we set the reference in $\varepsilon(l_p)$ to the nonloose parallel edge that follows $l_p$ in right-to-left order around $s$ to $\varepsilon(e)$. In either case, we update $l_p$ to $e$.
    
    Both if edges between $s$ and $x$ exist in $G$ and if they do not, the edge $e$ is the rightmost edge incident to $s$ and $x$ in $G^*$, hence we set the reference in $\nu(x)$ to the rightmost edge incident to $x$ and $s$ to point to $\varepsilon(e)$. 
\end{description}

\begin{algorithm}[h!]
\DontPrintSemicolon
\SetKwInOut{Input}{Input}
\SetKwInOut{Output}{Output}
\SetKwInOut{Global}{Global}
\SetKwInOut{SideEffects}{Side Effects}
%\Input{}
\Global{\Vertex $S$; \codecomment{The pole $s$ of a well-formed graph $G$.}\\ 
\Edge{[}\ {]} EDGES;  \codecomment{The array of the edges of $G$.}}
\Output{A reference to the edge $e_1$ of $G$.}
\SideEffects{Sets the orientation of $e_1$ and contracts $e_1$ in $G$.}
\BlankLine
$e_1$ $\leftarrow$ $S$.$e_1$; \codecomment{The edge $e_1$ of $G$.}\\
\codecomment{\underline{\bf Orient $e_1$:}}\\
\If{$\bf e_1.x$ is $\bf S$}
{$e_1$.oriented\_from\_x\_to\_y $\leftarrow\; \texttt{True}$;}
\Else{$e_1$.oriented\_from\_x\_to\_y $\leftarrow\; \texttt{False}$;} 
\codecomment{In what follows, for simplicity, assume $e_1.x=s$ and $e_1.y=w_1$.}\\
$S.e_1 \leftarrow$ $e_1$.next\_around\_y; \codecomment{Update $S$'s $e_1$ to be $(w_1,w_2)$.}\\
$S$.degree $\leftarrow$ $S$.degree + $e_1$.y.degree $-\ 2$; \codecomment{Update $S$'s degree.}\\
\codecomment{\underline{\bf Counter-clockwise visit of the edges incident to $w_1$:}}\\
$l_c$, $l_p$  $\leftarrow \NULL$; $r_p$ $\leftarrow$ $S$.first\_parallel; $r_c$ $\leftarrow$ $S$.first\_chord; \codecomment{Auxiliary references used in the visit.}\\
%\While{e.next\_around\_y is not $e_1$}{
$e \leftarrow S.e_1$;\\
\For{$\bf i \leftarrow 1$; $\bf i < e_1.y.degree$; $\bf i \leftarrow i+1$}{
    %$e.y.$first\_incident\_to\_s $\leftarrow e$\\
    \codecomment{\underline{\bf Current edge $e$ udpates} (in what follows, for simplicity, assume $e.y=w_1$):}\\
    $e.y \leftarrow S$ ; \codecomment{Identify $s$ and $w_1$.}\\
    $e$.ord $\leftarrow$ $S$.degree $-\ i + 1$\\
    % \If{e.is\_outer is \texttt{False}}{
    % \codecomment{The test fails only when $e=e^\dagger$.} \\  
    % e.y.first\_incident\_to\_s $\leftarrow e$; 
    % }
    %\codecomment{This reference may be already not \NULL only if endpoint of $e$ different from $w_1$ is adjacent to $s$ in $G$}
    \codecomment{\underline{\bf Handling new chords:}}\\
    \If{$\bf e.x$.is\_outer is \texttt{True} and $\bf e$.isOuter is \texttt{False}}{
        \If{$\bf l_c$ is \NULL}{
            $S$.first\_chord  $\leftarrow e$; \codecomment{Update $S$.first\_chord.}
        }\Else{
            $l_c$.next\_chord $\leftarrow e$;\\
        }
        $l_c$ $\leftarrow e$;
    }
    \codecomment{\underline{\bf Handling new lenses:}}\\

    \If{$\bf e$.next\_around\_y is $\bf e_1$}{
        $r_\ell \leftarrow $ $S$.first\_lens;\\
        $S$.first\_lens $\leftarrow e$;\\
        $e$.next\_nonloose\_lens $\leftarrow$ $r_\ell$;\\
    }

    \codecomment{\underline{\bf Handling new parallel edges:}}\\
    \If{$\bf e.x$.first\_incident\_to\_s is not \NULL}{
        $e$.next\_parallel\_with\_me $\leftarrow e.x$.first\_incident\_to\_s;\\

    \If{$\bf l_p$ is \NULL}{
            $S$.first\_parallel  $\leftarrow e$; \codecomment{Update $S$.first\_parallel.}
        }\Else{
        $l_p$.next\_nonloose\_parallel $\leftarrow e$;\\
        }
        $l_p \leftarrow e$;
    }
    $e.x$.first\_incident\_to\_s $\leftarrow e$; \\
    \If{$\bf i < e_1.y.degree-1$}{
    $e \leftarrow$ $e$.next\_around\_y; \codecomment{The next edge to consider in the visit.}}
}
\codecomment{At this point, $e$ is the predecessor $h$ of $e_1$ in counter-clockwise order around $w_1$ in $G$.}\\
\If{$\bf l_c$ is not \NULL}{
$l_c$.next\_chord $\leftarrow r_c$;}
\If{$\bf l_p$ is not \NULL}{
$l_p$.next\_nonloose\_parallel $\leftarrow r_p$;}
e.next\_around\_y $\leftarrow$ $e_1$.next\_around\_x;\\ %\codecomment{Update $S$'s ccw edge incidence list.} 
%\While{e.next\_around\_y is not $e_1$}{
%    \codecomment{(find new parallel edges)}\\
%    p $\leftarrow$ e.y.first\_incident\_to\_s \codecomment{(the edge that will become parallel to e)}\\
%    \If{p is not NULL}{
%        e.next\_parallel\_with\_me $\leftarrow$ p\\
%        \If{first\_parallel is NULL}{
%            first\_parallel $\leftarrow$ e\\
%            last\_parallel $\leftarrow$ e\\
%        }
%        \Else{
%            last\_parallel.next\_non\_left\_parallel\_on\_s $\leftarrow$ e;\\
%            last\_parallel $\leftarrow$ e;
 %       }
%        \codecomment{update delle corde, update delle info sull'arco (endpoint,..); e si esce dal ciclo}
%    }
%\codecomment{update delle info su s e sul primo arco dopo $e_1$, che diventa la prima lente}
%}
\Return{$e_1$};
\BlankLine
\caption{\textbf{Procedure {\sc Contract}}}
\label{alg:Contract}\vspace{-0.2cm}
\end{algorithm}

%\todo[inline]{Corretto last\_chord\_on\_previous\_s con first\_chord\_of\_previous\_s nel penultimo if\\ mancava un "a capo" sull'ultimo if del ciclo \\
%FORSE NELL'ULTIMA RIGA E'e.next\_around\_y !!!}

Finally, when all the edges incident to $w_1$ have been visited, three more actions are performed.

First, suppose that new chords have been introduced by the contraction of $e_1$. Recall that $l_c$ is the leftmost among such chords. We update the reference in $\varepsilon(l_c)$ to the next chord in right-to-left order around $s$ to point to $\varepsilon(r_c)$, which is the type-\Edge record corresponding to the rightmost chord incident to $s$ in $G$. This was stored before visiting the edges incident to $w_1$. In this way, we link together all the chords incident to $s$.

Second, suppose that new nonloose parallel edges have been introduced by the contraction of $e_1$. Recall that $l_p$ is the leftmost among such edges. We update the reference in $\varepsilon(l_p)$ to the next nonloose parallel edge in right-to-left order around $s$ to point to $\varepsilon(r_p)$, which is the type-\Edge record corresponding to rightmost parallel edge incident to $s$ in $G$. This was stored before visiting the edges incident to $w_1$. In this way, we link together all non-loose parallel edges incident to $s$.

Third, let $h$ be the predecessor of $e_1$ in counter-clockwise order around $w_1$ in $G$ and let $b$ be the successor of $e_1$ in $G$ in counter-clockwise order around $s$; refer to \cref{fig:contract-procedure}. We set the reference in $\varepsilon(h)$ to the edge incident to $s$ that follows $h$ in counter-clockwise order around $s$ in $G^*$ to point to $\varepsilon(b)$. In this way, we restore the rotation system around $s$. This concludes the description of the {\sc Contract} procedure. 
Since we visit all the edges incident to $w_1$ once and perform for each of them only checks and updates that take $\bigoh{(1)}$ time, the overall procedure runs in $\bigoh{(\deg_G(w_1))}$ time.

\smallskip
\noindent{\sc Decontract.} This procedure allows us to efficiently perform the ``decontraction'' of the edge $e_1$ in $G^*$, in order to obtain the graph $G$ and the record $S$ back. The corresponding data structures need to be updated accordingly. We omit the description of the steps of such procedure, as they can be easily deduced from the ones of the {\sc Contract} procedure\remove{, and include its pseudocode in \cref{alg:decontract}}. In particular, the edges incident to $s$ in $G^*$ that are incident to $w_1$ in $G$ consist of the rightmost edge belonging to a multilens of $G^*$ (whose reference is stored in $S$) and of all the edges that precede such an edge in the right-to-left order around $s$.
Analogously as for the {\sc Contract} procedure, the {\sc Decontract} procedure can be implemented to run in $\bigoh{(\deg_G(w_1))}$ time. A pseudocode description of the procedure is provided in \cref{alg:decontract}.

	\begin{algorithm}[h!]
	\caption{\textbf{Procedure {\sc Decontract}}}\label{alg:decontract}
	\SetKwInOut{Input}{Input}
	\SetKwInOut{Output}{Output}
	\SetKwInOut{SideEffects}{Side Effects}
	\SetKwInOut{Global}{Global}
	\Global{\Vertex $S$; \codecomment{The pole $s$ of a well-formed graph $G$.}\\ 
		\Edge{[}\ {]} EDGES;  \codecomment{The array of the edges of $G$.}}
	\Input{An edge $e_1$}
	\SideEffects{Sets the orientation of $e_1$ to \NULL and decontracts $e_1$ in $G^*$ in order to obtain $G$}
	\BlankLine
	$e_1$.oriented\_from\_x\_to\_y $\leftarrow$ \NULL;\\
	\codecomment{$S$.first\_lens, that is $h$, is not part of a multilens in $G$.}\\
	$h \leftarrow S$.first\_lens;\\
	$b \leftarrow$ $h$.next\_parallel\_with\_me;\\
	leftmost\_chord, leftmost\_parallel;\\
	\codecomment{Visit the edges incident to $s$ in $G^*$ that are not incident to $s$ in $G$.}\\
	$e\leftarrow S.e_1$;\\
	\While{$e\neq b$}{
		\codecomment{We assume, for simplicity, $e.x=s$, $e_1.x=s$, and $e_1.y=w_1$.}\\
		$e.x \leftarrow e_1.y$;\\
		$e.y$.first\_incident\_to\_s $\leftarrow$ $e$.next\_parallel\_with\_me; \codecomment{$e$ is not incident to $s$ in $G$, edges parallel to $e$ in $G^*$ are.}\\
		\codecomment{Find the leftmost chord and the leftmost parallel edge that are not incident to $s$ in $G$.}\\
		\If{$e$.is\_outer is \texttt{False} and $e.y$.is\_outer is \texttt{True}}{
		leftmost\_chord $\leftarrow$ $e$;}
		\If{$e$.next\_parallel\_with\_me is not \NULL}{
		leftmost\_parallel $\leftarrow e$;}
		\codecomment{Current edge $e$ is not incident to $s$ in $G$.}\\
		$e$.ord $\leftarrow$ \NULL;\\
		$e$.next\_chord $\leftarrow$ \NULL;\\
		$e$.next\_parallel\_with\_me $\leftarrow$ \NULL;\\
		$e$.next\_nonloose\_parallel $\leftarrow$ \NULL;\\
		$e$.next\_nonloose\_lens $\leftarrow$ \NULL;\\
		$S$.degree $\leftarrow$ $S$.degree$-1$;\\
		$e\leftarrow e.$next\_around\_x;\\
	}
	$h$.next\_around\_x $\leftarrow$ $e_1$;\\
	$S$.first\_lens $\leftarrow h$.next\_nonloose\_lens;\\
	$S$.first\_chord $\leftarrow$ leftmost\_chord.next\_chord;\\
	$S$.first\_parallel $\leftarrow$ leftmost\_parallel.next\_nonloose\_parallel;\\
	$S.e_1 \leftarrow e_1$;\\
\end{algorithm}

\smallskip
\noindent{\sc Remove.} This procedure allows us to efficiently perform the removal of the edges $e_1,\dots,e_j$  (as defined in \cref{le:remove-well-formed}) in $G$ in order to construct the graph $G^-$ and to update the data structures in such a way as to support the recursive calls of the {\sc ICE} algorithm; refer 
to the pseudocode description of \cref{alg:Remove} and to \cref{fig:remove-procedure}.
The procedure works as described next and returns an array \texttt{REMOVED}, whose entry with index $i$ points to the type-\Edge record $\varepsilon(e_{i+1})$ of the removed edge $e_{i+1}$, for $i=0,\dots,j-1$. 

First, pointers to the type-\Edge records corresponding to $e_j$ and $e_{j+1}$ are retrieved. The first one is indeed stored in the record $S$ as the reference to the rightmost edge belonging to a multilens. The second one is instead stored in the record $\varepsilon(e_j)$ as the reference to the edge incident to $s$ that follows $e_j$ in counter-clockwise order around $s$.
%%%%%%
Second, the following updates are performed on $S$.
\begin{enumerate}[{\bf (i)}] 
\item The degree of $s$ is updated in $S$ to be the integer stored in $\varepsilon(e_{j+1})$ representing the position of $e_{j+1}$ in the left-to-right order around $s$.
\item The reference in $S$ to the rightmost edge belonging to a multilens is updated to the reference stored in $\varepsilon(e_j)$ to the nonloose edge belonging to a multilens that follows $e_j$ in counter-clockwise order around $s$. Observe that, if multilenses exist in $G^-$, then such an edge is $e_{j+1}$ if $e_j$ belongs to a multilens consisting of more than two parallel edges (see \cref{fig:remove-multilens}), whereas it is different from $e_{j+1}$ if $e_j$ belongs to a multilens consisting of two parallel edges (see \cref{fig:remove-lens}).
\item The reference in $S$ to the rightmost parallel edge incident to $s$ is updated to the reference stored in $\varepsilon(e_j)$ to the nonloose parallel edge that follows $e_j$ in counter-clockwise order around $s$.
If parallel edges exist in $G^-$, then an analogous observation to the one given for (ii) holds also in this case; refer again to \cref{fig:remove-multilens,fig:remove-lens}.
%Note that, if parallel edges exist in $G^-$, then such an edge is $e_{j+1}$ if $e_j$ belongs to a multilens consisting of more than two parallel edges (see \cref{fig:remove-multilens}), whereas it is different from $e_{j+1}$ if $e_j$ belongs to a multilens consisting of two parallel edges (see \cref{fig:remove-lens}).
\item The reference in $S$ to the rightmost edge incident to $s$ is updated to point to $\varepsilon(e_{j+1})$.
%%%%%%
\end{enumerate}

Third, we perform a counter-clockwise visit of the edges incident to $s$ to be removed, starting from $e_1$ and ending at $e_j$ (both extremes are considered in the visit), and execute the following actions for each encountered edge $e_i$. Let $x$ and $y$ be the end-vertices of $e_i$, where $x=s$. For future use, we store the reference $r_c$ to the rightmost chord incident to $s$ in $G$, if any. 

%%%%%%%%%%
%
%
\begin{figure}[!]
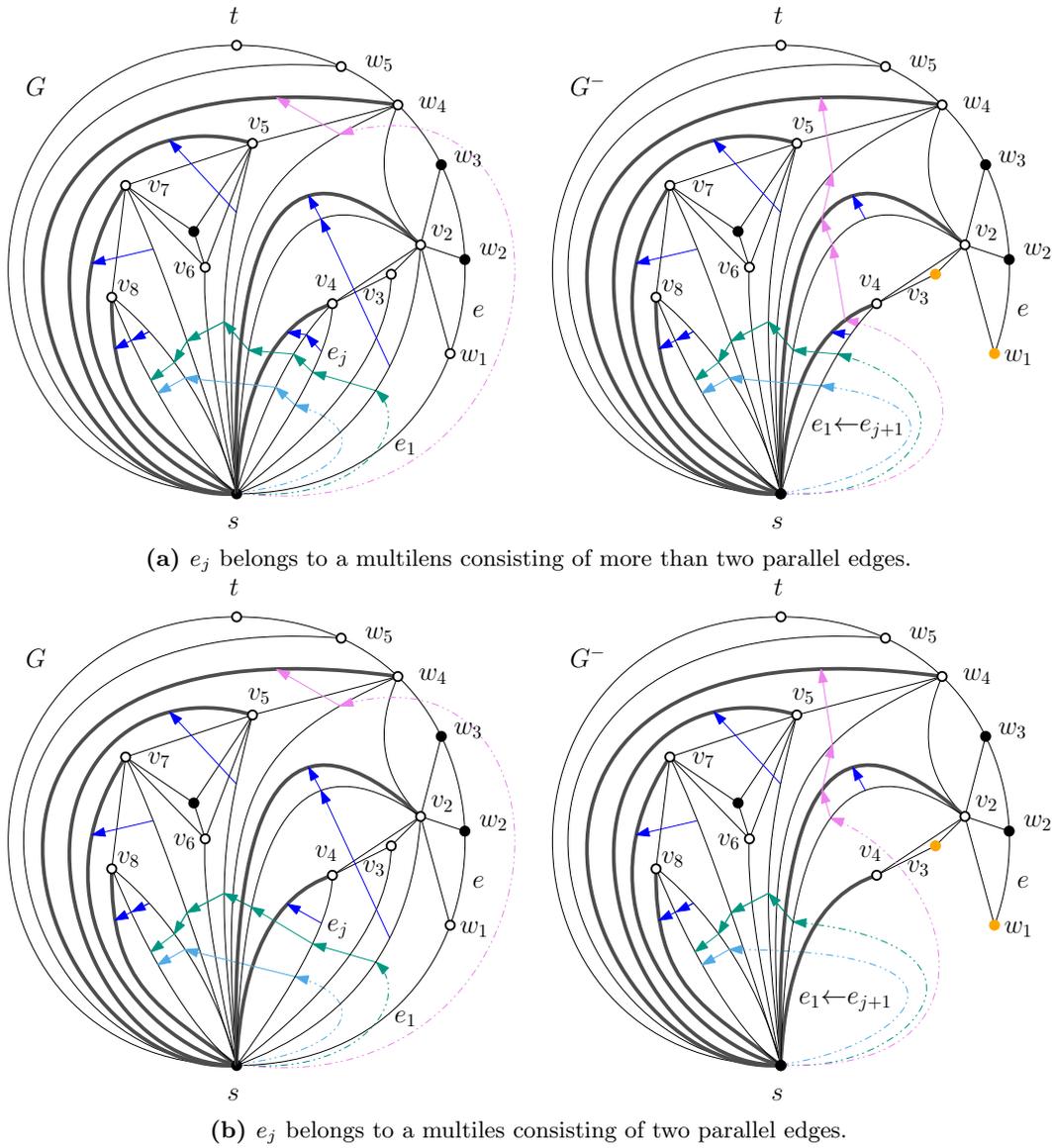

\centering
    \begin{subfigure}{\textwidth}
    \centering
        \includegraphics[page=14,scale =.7]{img/canonical-orderings-canonical-orientations.pdf}
        \caption{\label{fig:remove-multilens}$e_j$ belongs to a multilens  consisting of more than two parallel edges.}
    \end{subfigure}
    \begin{subfigure}{\textwidth}
    \centering
        \includegraphics[page=15,scale =.7]{img/canonical-orderings-canonical-orientations.pdf}
        \caption{\label{fig:remove-lens}$e_j$ belongs to a multiles consisting of two parallel edges.}
    \end{subfigure}
\caption{\label{fig:remove-procedure}
Illustrations for the {\sc Remove} procedure. The input graph $G$ is on the left, whereas the graph $G^-$ resulting from the removal of $e_1,\dots,e_j$ in $G$ is on the right. 
The meaning of arrowed curves, of the thickness of the edges, and of the shape of the vertices is as in \cref{fig:contract-procedure}.
%Empty and filled circles represent vertices adjacent and not adjacent to $s$, respectively.
The vertices $v_1=w_1$ and $v_3$ that are neighbors of $s$ in $G$ but not in $G^-$ are orange.}
\end{figure}
%
%%%%%%%%%%%
\begin{description}
\item[Current edge's updates:]  We set the Boolean value in $\varepsilon(e_i)$ representing the orientation of such an edge to \texttt{True} (i.e., $e_i$ is oriented from $x=s$ to $y=v_i$), the Boolean value in $\nu(v_i)$ representing the fact that this vertex is incident to the outer face of $G^-$ to \texttt{True}, and the integer value in $\nu(v_i)$ representing the degree $\deg_{G^-}(v_i)$ of $v_i$ in $G^-$ to $\deg_G(v_i) - 1$. Furthermore, we add $e_i$ to the array \texttt{REMOVED} in the position with index $i-1$. 

%This index can be computed as the difference between the integer representing the position of $e_i$ and the integer representing the position of $e_j$ in left-to-right order around $s$. 

\item[Updating new outer edges and outer vertices:] 
For $h=1,\dots,j-1$, denote by $e_{h,h+1}$ the edge $(v_h,v_{h+1})$ and by $e_{0,1}$ the edge $(v_1 = w_1,w_2)$ of $G$.
We consider two cases based on whether $i \neq j$ or $i=j$.

If $i \neq j$, then $e_i$, $e_{i+1}$, and $e_{i,i+1}$ bound a triangular face of $G$, and $(v_i,v_{i+1})$ is an internal edge of $G$. The edges $e_i$ and $e_{i+1}$ do not belong to $G^-$, while the edge $e_{i,i+1}$ does and is an outer edge of $G^-$. We, therefore, set the Boolean value in $\varepsilon(e_{i,i+1})$ that represents the fact that $e_{i,i+1}$ is incident to the outer face of $G^-$ to \texttt{True}. Also, we set the reference in $\varepsilon(e_{i,i+1})$ to the edge that follows it in counter-clockwise order around $v_i$~to~point~to~$\varepsilon(e_{i-1,i})$. Note that $\varepsilon(e_{i+1})$ and $\varepsilon(e_{i-1,i})$ can be accessed via $\varepsilon(e_i)$ as the record of the edge that follows $e_i$ in counter-clockwise order around $s$ and around $v_i$, respectively, and $\varepsilon(e_{i,i+1})$ can be accessed via $\varepsilon(e_{i+1})$ as the record of the edge that follows $e_{i+1}$ in counter-clockwise order around $v_{i+1}$.

If $i=j$, then $e_j$ and $e_{j+1}$ bound an internal face of $G$, and in particular the edge $e_{j+1}$ is an internal edge of $G$. The edge $e_j$ does not belong to $G^-$, while the edge $e_{j+1}$ does and is an outer edge of $G^-$. We, therefore, set the Boolean value in $\varepsilon(e_{j+1})$ that represents the fact that $e_{j+1}$ is incident to the outer face of $G^-$ to \texttt{True}. Also, we set the reference in $\varepsilon(e_{j+1})$ to the edge that follows it in counter-clockwise order around $v_i$~to~point~to~$\varepsilon(e_{j-1,j})$. Note that $\varepsilon(e_{j+1})$ can be accessed via $\varepsilon(e_i)$ as above, and $\varepsilon(e_{j-1,j})$ can be accessed via $\varepsilon(e_{j})$ as the record of the edge that follows $e_{j}$ in counter-clockwise order around $v_{j}$.

\item[Updating chords:] Note that the applicability of the removal operations requires that there exists no chord $(s,v_h)$ in $G$ with $h\leq j$. Therefore, each chord of $G$ is also a chord of $G^-$. However, there may exist chords in $G^-$ that do not belong to $G$. These chords, if any, are exactly the edges parallel to the removed edges $e_1,\dots,e_j$; refer to \cref{fig:remove-procedure}. If neither of $e_1,\dots,e_{j-1}$ has parallel edges and the unique edge parallel to $e_j$ is $e_{j+1}$, then $G^-$ and $G$ have the same chords. To account for the chords in $G^-$ that do not belong to $G$, we take the steps described below for each visited edge $e_i$.  

For any $1\leq a<b\leq j$, we detect all the chords of $G^-$ stemming from edges parallel to $e_a$ before the chords stemming from $e_b$. All the chords stemming from $e_a$ need to be linked together. Also, if $e_a$ and $e_b$ each have at least one parallel edge in $G$ and there exists no edge $e_c$ with $a<c<b$ such that $e_c$ has at least one parallel edge in $G$, then the leftmost chord of $G^-$ stemming from $e_b$ needs to be linked to the rightmost chord of $G^-$ stemming from $e_a$. Moreover, let $a'$ and $b'$ be the minimum and maximum indices of edges in $e_1,\dots,e_j$ having parallel edges. Then the leftmost chord of $G^-$ stemming from $e_{a'}$ needs to be linked to the rightmost chord $r_c$ of $G$, whereas the rightmost chord stemming from $e_{b'}$ needs to be set at the rightmost chord incident to $s$. 

Below, we provide the details of the actions needed to implement the above updates in the data structures, when processing an edge $e_i$. Throughout, we maintain the record $R$ of the rightmost encountered chord; initially, we set $R=r_c$. 

\begin{itemize}
	\item First, we access the reference to the rightmost edge $r^i_p$ of $G$, if any, different from $e_i$ that is parallel to $e_i$ (i.e., this information is stored in $\varepsilon(e_i)$ as the reference to the edge parallel to $e_i$ that follows $e_i$ in right-to-left order around $s$ in $G$). 
	\item Second, we set the reference in $\nu(v_i)$ to the rightmost edge incident to $v_i$ that is also incident to $s$ to point to $\varepsilon(r^i_p)$. If $\varepsilon(r^i_p)$ is \NULL, we proceed to consider the next edge $e_{i+1}$, otherwise we continue to process $e_i$ as below.
	\item Third, we perform a counter-clockwise visit around $s$ of the list of edges parallel to $e_i$ starting from $r^i_p$ and ending at the leftmost edge $l^i_p$ of such a list. Let $\alpha$ be the currently considered edge in this visit;  initially, $\alpha = r^i_p$.  We set the reference in $\varepsilon(\alpha)$ to the chord that follows $\alpha$ in right-to-left order to point to the record of the edge $\beta$ parallel to $\alpha$ that follows $\alpha$ in right-to-left order, if any. This links together all the chords stemming from $e_i$. When all the edges parallel to $e_i$ have been visited, we set the reference in $\varepsilon(l^i_p)$ to the chord that follows $l^i_p$ in right-to-left order to point to $R$, and then we update $R=r^i_p$. This links the chords stemming from $e_i$ to the chords stemming from $e_1,\dots,e_{i-1}$ and to the chords in $G$.
\end{itemize}
When all the edges $e_1,\dots,e_j$ have been visited, we update the reference in $S$ to the rightmost chord incident to $S$ to point to $R$. Note that, if none of $e_1,\dots,e_{j-1}$ has parallel edges and the unique edge parallel to $e_j$ is $e_{j+1}$, then $G^-$ and $G$ have the same chords and $R = r_c$. 
\end{description}

%	if the reference in $S$ to the rightmost chord incident to $s$ in $G$ points to $r_c$, then $r^i_p$ is the first encountered chord stemming from one of the edges $e_1,\dots,e_j$. Thus, $r^i_p$ is also the rightmost chord incident to $s$ in $G^-$, hence we change $r^i_p$ and the reference in $S$ accordingly.
%	\item Fourth, if $L$ is not \NULL, then we set the reference in $L$ to the chord that follows $\ell$ in right-to-left order to point to $\varepsilon(r^i_p)$. In any case, we update $L$ to $\varepsilon(r^i_p)$. 
%	\item Fifth, 

%When all edges $e_1,\dots,e_j$ have been visited, we update the reference of $S$ to the rightmost chord incident to $s$ in $G$ to be $R$. 

As a final step, the procedure returns the array \texttt{REMOVED}.

Since we visit all the edges $e_1,\dots,e_j$ once and all the edges parallel to $e_2,\dots,e_j$ once, and since we perform for each of these edges only checks and updates that take $\bigoh{(1)}$ time, the overall procedure runs in $\bigoh({j+\sum^j_{i=2}\pi(e_i))}$, where $\pi(e_i)$ denotes the number of edges parallel to $e_i$ in $G$.

\begin{algorithm}[!]{
\DontPrintSemicolon
\SetKwInOut{Input}{Input}
\SetKwInOut{Output}{Output}
\SetKwInOut{Global}{Global}
\SetKwInOut{SideEffects}{Side Effects}
%\Input{}
\Global{\Vertex $S$; \codecomment{The pole $s$ of a well-formed graph $G$.}\\ 
\Edge{[}\ {]} EDGES;  \codecomment{The array of the edges of $G$.}}
\Output{A length-$j$ array REMOVED whose entries point to the type-\Edge records corresponding to the edges $e_1,\dots,e_j$ of $G$ removed as in \cref{le:remove-well-formed}.}
\SideEffects{Sets the orientation of $e_1,\dots,e_j$ and removes these edges from $G$.}
\BlankLine
\codecomment{For simplicity of description, assume $e.x$ points to $S$ for any edge $e$ incident to $s$.}\\
$e_1$ $\leftarrow$ $S$.$e_1$; \\
$e_j$ $\leftarrow$ $S$.first\_lens;\\
$e_{j+1}$ $\leftarrow$ $e_j$.next\_around\_x;\\
%first\_parallel\_of\_previous\_s $\leftarrow$ $S$.first\_parallel;\\

%there\_exist\_parallel\_before\_lens, rightmost\_parallel $\leftarrow$ \texttt{False};\\

\codecomment{\underline{\bf{Updating S:}}}\\

$S$.degree $\leftarrow$ $e_{j+1}$.ord; \codecomment{setting the degree of $S$}\\
$S$.first\_lens $\leftarrow$ $e_j$.next\_nonloose\_lens; \codecomment{setting the pointer to the rightmost  edge of a multilens}\\
$S$.first\_parallel $\leftarrow$ $e_j$.next\_nonloose\_parallel; \codecomment{setting the pointer to the rightmost parallel edge}\\
$S.e_1$ $\leftarrow$ $e_{j+1}$; \codecomment{setting the pointer to the rightmost edge incident to $s$}\\
\codecomment{\underline{\bf{Right-to-left visit of the edges incident to $s$ to be removed:}}}\\
$e \leftarrow e_1$; 
$r_c$ $\leftarrow$ $S$.first\_chord; $R\leftarrow r_c$;\\
\While{$e$ is not $e_{j+1}$}{
    \codecomment{\bf \underline{Setting the orientation of $e = (s,v_i)$ and updating the fields of $\nu(v_i)$.}}\\
    $e$.oriented\_from\_x\_to\_y $\leftarrow\; \texttt{True}$; \codecomment{Orient  $e$}\\
    $e.y$.isOuter $\leftarrow$ \texttt{True};\\
    $e.y$.degree $\leftarrow$ $e.y$.degree $- 1$;\\
    REMOVED[$e_1$.ord-$e$.ord] $\leftarrow$ e; \codecomment{Add $e$ to the array of removed edges}\\
\If{$e$ is not $e_j$}{
	%\codecomment{Updating the incidence list of $(v_i,v_{i+1})$ belonging to the triangular faces of $G$ that do not belong to $G'$.}\\
	\codecomment{\bf \underline{Updating the incidence list of $v_i$ and making $(v_i,v_{i+1})$ an outer edge.}}\\
	$e_{i,i+1} \leftarrow$ $e$.next\_around\_x.next\_around\_y;\\
	$e_{i,i+1}$.isOuter $\leftarrow$ \texttt{True};\\
	$e_{i,i+1}$.next\_around\_y $\leftarrow$ $e$.next\_around\_y; \codecomment{$e_{i-1,i}$ follows $e_{i,i+1}$ in counterclockwise order around $v_i$}\\
%	\codecomment{
		%Note that prev\_around\_y\_of\_e points to a type-\Edge record representing the outer edge $(v_1=w_1,w_2)$, if $e=e_1$, and the edge $(v_i,v_{i-1})$, if $e=e_i$. 
%		For simplicity, assume prev\_around\_y\_of\_e.y points to $\nu(v_i)$.}\\
%	prev\_around\_y\_of\_e.next\_around\_y $\leftarrow$ next\_around\_y\_of\_e;\\
%	\codecomment{Making the edge $(v_i,v_{i+1})$ an outer edge.}\\
%	prev\_around\_y\_of\_e.isOuter $\leftarrow$ \texttt{True};\\
}
\Else{
	\codecomment{\underline{\bf Updating the incidence list of $v_j$ and making $e_{j+1}$ an outer edge.}}\\
	$e_{j+1}$.isOuter $\leftarrow$ \texttt{True};\\
	$e_{j+1}$.next\_around\_y $\leftarrow$  $e_{j}$.next\_around\_y; 
	%\codecomment{Note that $e_{j}$.next\_around\_y.isOuter is already \texttt{True}}\\
}

\codecomment{\underline{\bf{Updating chords:}}}\\

$r^i_p \leftarrow$ $e.$next\_parallel\_with\_me;\\
$e.y$.first\_incident\_to\_s $\leftarrow r^i_p$;\\
\If{$e=e_j$}{
$r^i_p \leftarrow r^i_p$.next\_parallel\_with\_me; \codecomment{ignore $e_{j+1}$, which is not a chord}\\
}

\If{$e$ is not $e_1$ and $r^i_p$ is not \NULL} { %\textcolor{red}{and $r^i_p$.next\_parallel\_with\_me is not \NULL}
\codecomment{there are new chords $(s,v_i)$}\\
$\alpha \leftarrow r^i_p$;\\
$\beta \leftarrow \alpha$.next\_parallel\_with\_me;\\
\While{$\beta$ is not \NULL}{
	$\alpha$.next\_chord $\leftarrow \beta$;\\
	$\alpha \leftarrow \beta$;\\
	$\beta \leftarrow \alpha$.next\_parallel\_with\_me;
}
$\alpha$.next\_chord $\leftarrow R$;\\
$R\leftarrow r^i_p$;\\
}

\If{$e=e_j$}{
	S.first\_chord $\leftarrow R$; \codecomment{Update the first chord}\\
}
$e \leftarrow e$.next\_around\_x;\\
}

\Return REMOVED;
}

\BlankLine
\caption{\textbf{Procedure {\sc Remove}}% Pseudocode of procedure that removes the edges $e_1,\dots,e_j$ from $G$, where $e_j$ is the rightmost edge of a multilens, to obtain $G^-$.
}
\label{alg:Remove}\vspace{-0.2cm}
\end{algorithm}

%\If{$e$ is not $e_1$ and $r^i_p$ is not \NULL and not ($e$ is $e_j$ and next\_parallel\_with\_e.next\_parallel\_with\_me is \NULL}{
	%	new\_chord $\leftarrow$ next\_parallel\_with\_e;\\
	%	\If{$e$ is $e_j$}{
		%		new\_chord $\leftarrow$ new\_chord.next\_parallel\_with\_me;\\
		%	}
	%	\While{new\_chord.next\_parallel\_with\_me is not \NULL}{
		%		new\_chord.next\_chord $\leftarrow$ new\_chord.next\_parallel\_with\_me;\\
		%		new\_chord $\leftarrow$ new\_chord.next\_parallel\_with\_me;\\
		%	}
	%	new\_chord.next\_chord $\leftarrow$ head\_chord;\\
	%	head\_chord $\leftarrow$ next\_parallel\_with\_e;\\
	%}

%    \codecomment{versione vecchia:\\
	%    next\_parallel\_with\_e $\leftarrow$ e.next\_parallel\_with\_me;\\
	%    \If{next\_parallel\_with\_e is not \NULL}{
		%        \If{last\_chord is \NULL}{
			%            S.first\_chord $\leftarrow$ next\_parallel\_with\_e;
			%        }\Else{
			%            last\_chord.next\_chord $\leftarrow$ next\_parallel\_with\_e;
			%        }
		%        last\_chord $\leftarrow$ next\_parallel\_with\_e;\\
		%        e.x.first\_incident\_to\_s $\leftarrow$ next\_parallel\_with\_e;
		%    }
	%    }
%$e_{j+1}$.next\_chord\_on\_s $\leftarrow$ \NULL; \codecomment{Recall that $e_{j+1}$ is not a chord of $G^-$.}\\
%$e_{j+1}$.isOuter $\leftarrow$ \texttt{True};\\   
%\todo[inline]{Forse non ha senso $e_j$ is not $e_1$ come condizione in quanto stiamo valutando $e$ che in quel caso sarebbe entrambi}

\smallskip

\noindent{\sc Reinsert.} This procedure allows us to efficiently perform the ``reinsertion'' of the edges $e_1,\dots,e_j$ in~$G^-$ to reconstruct the graph $G$, and to accordingly update the data structures. For space reasons, we omit the description of the steps of such procedure as they can be easily deduced from the ones of the {\sc Remove} procedure. Analogously as for the {\sc Remove} procedure, the {\sc Reinsert} procedure can be implemented to run in $\bigoh({j+\sum^j_{i=2}\pi(e_i))}$. A pseudocode description of the procedure is provided in \cref{alg:reinsert}.

\begin{algorithm}[h!]
	\caption{\textbf{Procedure {\sc Reinsert }}
	}\label{alg:reinsert}
	\SetKwInOut{Input}{Input}
	\SetKwInOut{Output}{Output}
	\SetKwInOut{Global}{Global}
	\SetKwInOut{SideEffects}{Side Effects}
	%\Input{}
	\Global{\Vertex $S$; \codecomment{The pole $s$ of a well-formed graph $G$.}\\ 
	\Edge{[}\ {]} EDGES;  \codecomment{The array of the edges of $G$.}}	
	\Input{A length-$j$ array \texttt{REMOVED} whose entries point to the type-\Edge records corresponding to the edges $e_1, \dots, e_j$ removed as in \cref{le:remove-well-formed}.}
	\SideEffects{Reinserts the edges $e_1, \dots, e_j$ in $G^-$ to obtain $G$.}
	\BlankLine
	$S$.$e_1$ $\leftarrow$ \texttt{REMOVED}[0]; \codecomment{this is $e_1$}\\
	$S$.first\_lens $\leftarrow$ \texttt{REMOVED}[\texttt{REMOVED}.length-1]; \codecomment{this is $e_j$}\\
	new\_first\_chord $\leftarrow$ $S$.first\_chord;\\
    new\_first\_parallel $\leftarrow$ $S$.first\_parallel;\\
	%\For{\Int{} i=j; i$\geq0$; i$--$}
	\For{\Int{} $i$=\texttt{REMOVED}.length; $i\geq 1$; $i--$}{
		$e_i$ $\leftarrow$ \texttt{REMOVED}[i-1];\\
		\codecomment{We assume, for simplicity, $e_i.x=S$, $e_i.y=v_i$}\\
		\If{$e_i$ is not $e_1$}{
			$v_i$.isOuter $\leftarrow$ \texttt{False};
		}
		$v_i$.first\_incident\_to\_s $\leftarrow$ $e_i$;\\
		$v_i$.degree $\leftarrow$ $v_i$.degree$+1$\\
		\If{$e_i$ is not $e_j$}{
			\codecomment{\bf \underline{Updating the incidence list of $v_i$, and making the edge $(v_i,v_{i+1})$ an internal edge.}}\\
			\codecomment{We assume, for simplicity, $e_{i,i+1}=(v_i, v_{i+1})$, $e_{i,i+1}.x=v_i$, and $e_{i,i+1}.y=v_{i+1}$}\\
    		$e_{i+1}$ $\leftarrow$ \texttt{REMOVED}[i];\\
            $e_{i,i+1}$ $\leftarrow$ $e_{i+1}$.next\_around\_y;\\
			$e_{i,i+1}$.next\_around\_x $\leftarrow$ $e_i$;\\
			$e_{i,i+1}$.isOuter $\leftarrow$ \texttt{False};\\
		}
		\Else{
			\codecomment{\underline{\bf Updating the incidence list of $v_j$ and making $e_{j+1}$ an internal edge.}}\\
			$e_{j+1}$ $\leftarrow$ $e_{j}$.next\_around\_x;\\
			$e_{j+1}$.next\_around\_y $\leftarrow$ $e_{j}$;\\
            \If{ $e_{j+1}$.next\_around\_x is not \NULL}{$e_{j+1}$.isOuter $\leftarrow$ \texttt{False};\\}
			
		}
		\codecomment{\underline{\bf{Updating chords:}}}\\
		\If{$e_i$ is not $e_1$ and $e_i$.next\_parallel\_with\_me is not \NULL}{
			\codecomment{All the edges parallel with $e_i$ are not chords in $G$.}\\
			$e$ $\leftarrow$ $e_i$.next\_parallel\_with\_me;\\
			\While{$e$.next\_parallel\_with\_me is not \NULL}{
				\codecomment{Both $e$.next\_parallel\_with\_me and  $e$.next\_chord reference the next edge $(s,v_i)$.}\\
				$e$.next\_chord $\leftarrow$ \NULL;\\
				$e \leftarrow e$.next\_parallel\_with\_me;\\
			}
		new\_first\_chord $\leftarrow e$.next\_chord;\\
		$e$.next\_chord $\leftarrow$ \NULL;}
        \codecomment{\underline{\bf{Updating first\_parallel:}}}\\
        \If{$e_i$.next\_parallel\_with\_me is not \NULL}{
            new\_first\_parallel $\leftarrow e_i$;\\}
	}
    \codecomment{\underline{\bf Updating $S$'s first chord and first parallel}}\\
	$S$.first\_chord $\leftarrow$ new\_first\_chord;\\
	$S$.first\_parallel $\leftarrow$ new\_first\_parallel;\\
        $S$.degree $\leftarrow$ $S$.degree+ \texttt{REMOVED}.length
\end{algorithm}

\medskip

We are finally ready to prove the bounds stated in \cref{th:innercanonical-orientation-plane-uv}. 
\begin{description}
    \item[Setup time:] Recall that $\varphi$ denotes the number of edges of $G$. Initializing the type-\Vertex and the type-\Edge records requires $\bigoh(\varphi)$ time, assuming that: (i) for each vertex of $G$, a circularly-linked list is provided encoding the counter-clockwise order of the edges incident to $v$ in the planar embedding of $G$, and that (ii) the edge $(s,v_1)$ incident to the outer face of $G$ is specified. Indeed, setting the type-\Vertex and the type-\Edge records up can be easily accomplished by suitably traversing the above lists. In particular, a first visit starting from the edge $(s,v_1)$ allows us to determine the outer vertices and edges of $G$, from which the chords of $G$ can be determined. Parallel edges can be detected easily since they are incident to $s$; indeed, while traversing the list of the edges incident to $s$, one can mark each end-vertex different from $s$ the first time an edge incident to it is encountered, and also keep track of a reference to that edge. Edges incident to an already marked vertex and to $s$ are parallel edges and also make the first edge incident to those two vertices a parallel edge. The array \texttt{EDGES} can clearly be constructed in $\bigoh(\varphi)$ time.

%    In particular, a visit of the above lists allows us to setup in overall $\bigoh(\varphi)$ time the  type-\Edge records for all the edges of $G$, by determining, for each edge~$e$, whether~$e$ is an outer edge, a chord, a parallel edge, an edge belonging to a multilens, its position in the left-to-right order around $s$ (if $e$ is incident to $s$), and the references to the edge following $e$ in counter-clockwise order around each of its two end-vertices, to the chord that follows $e$ in the right-to-left order around $s$ (only if $e$ is a chord incident to $s$), to the edge parallel to $e$ that follows $e$ in right-to-left order around $s$ (only if $e$ is parallel), and to the nonloose edge of a multilens to which $e$ belongs that follows $e$ in right-to-left order around $s$ (only if $e$ belongs to a multilens).   Also, a visit of the above lists allows us to setup in overall $\bigoh(\varphi)$ time the type-\Vertex records for all the vertices of $G$, by determining, for each vertex $v$, whether $v$ is an outer vertex, its degree, the reference to the rightmost chord, to the rightmost parallel edge, and to the rightmost edge of a multilens incident to $v$, if any.

    \item[Space usage:] At any step, the graph considered by the {\sc ICE} algorithm has at most $\varphi$ edges. Thus, the space used to represent such a structure is $\bigoh(\varphi)$. Also, the number of recursive calls to the {\sc ICE} algorithm is $\bigoh(\varphi)$. Therefore, in order to show that the overall space usage of the {\sc ICE} algorithm is also $\bigoh(\varphi)$, we only need to account for the amount of information that needs to be stored, at any moment, in the call stack, i.e., for the size of the activation records of all the calls. The top of the stack either contains the activation record of a call to the {\sc ICE} algorithm on the current graph or of a call to the {\sc Output}, {\sc Contract},  {\sc Decontract},  {\sc Remove},  or {\sc Reinsert} auxiliary procedures. The interior of the stack only contains the activation records of calls to the {\sc ICE} algorithm. Whereas the activation records for each of the five auxiliary procedures are of $\bigoh(1)$ size, the size of the activation record of a call to the {\sc ICE} algorithm is $\bigoh(1)$, if the considered call does not invoke the {\sc Remove} procedure, or is $\bigoh(j)$, if the considered call invokes the {\sc Remove} procedure in order to remove the edges $e_1,\dots,e_j$. In particular, the activation record of each call to the {\sc ICE} algorithm contains either a reference to the contracted edge $e_1$, if the considered call invokes the {\sc Contract} procedure, or references to the removed edges  $e_1,\dots,e_j$, if it invokes the {\sc Remove} procedure. The key observation here is that a reference to an edge of $G$ may appear only once over all the activation records that are simultaneously on the stack during the execution of the {\sc ICE} algorithm, as a contracted or removed edge is not part of the graph considered in the recursive calls. Therefore, the overall space usage of the stack is $\bigoh(\varphi)$. 
    \item[Delay:] We show that the time spent by the {\sc ICE} algorithm to output the first inner-canonical orientation of $G$ is $\bigoh(\varphi)$, and that the time between any two inner-canonical orientations of $G$ that are consecutively listed by the {\sc ICE} algorithm is also $\bigoh(\varphi)$.
    The recursive calls to the {\sc ICE} algorithm determine a rooted binary tree $\cal T$, which we refer to as the \emph{call tree}, defined as follows; see \cref{fig:computation}. 
    The root $\rho$ of $\cal T$ corresponds to the first call on the input graph $G$, each non-root node of $\cal T$ corresponds to the call on a graph obtained starting from $G$ by applying a sequence of {\sc Contract} or {\sc Remove} procedures.
    Let $\nu$ be the parent node of a node $\mu$ of $\cal T$, and let $G_\nu$ and $G_\mu$ be the graphs associated with $\nu$ and $\mu$, respectively.
    The edge $(\nu,\mu)$ either
    corresponds to a {\sc Contract} (and the symmetric {\sc Decontract}) procedure if $G_\mu = G^*_\nu$ or corresponds to a {\sc Remove}  (and the symmetric {\sc Reinsert}) procedure if $G_\mu = G^-_\nu$.

    By \cref{le:at-least-one-inner-canonical}, we have that the leaves of $\cal T$ correspond to calls to the {\sc ICE} algorithm on the single edge $(s,t)$, which is the base case of the {\sc ICE} algorithm that results in a call to the {\sc Output} procedure. We consider the leaves of $\cal T$ as ordered according to their order of creation in the construction of $\cal T$. Therefore, we can refer to the \emph{first leaf} of $\cal T$ and, given a leaf $\lambda$ of $\cal T$, to the leaf of $\cal T$ that \emph{follows} $\lambda$.
    For each edge $e$ of $\cal T$, the \emph{cost} of $e$, denoted by $c(e)$, is the time spent to perform the procedure corresponding to $e$.
    
    The time spent to output the first inner-canonical orientation of $G$ coincides with the sum of the costs of all the edges of the root-to-leaf path $p_\alpha$ in $\cal T$ that connects $\rho$ and the first leaf $\alpha$ of $\cal T$, i.e., $\sum_{e \in p_\alpha} c(e)$, plus the time spent by the {\sc Output} procedure. As the latter is $\bigoh(\varphi)$, we only need to show that the former is also $\bigoh(\varphi)$.
    As already shown in the description of the procedures, for any edge $e=(\nu,\mu)$ of $\cal T$, the cost $c(e)$ is at most $k_1 \cdot \deg_{G_{\nu}}(w_1)$, if $e$ corresponds to a {\sc Contract/Decontract} procedure, and is at most $k_2 \cdot (j + \sum^j_{i=2}\pi(e_i))$, if $e$ corresponds to a {\sc Remove/Reinsert} procedure that removes/reinsert the $j$ rightmost edges incident to the source of $G_\nu$, for suitable constants $k_1,k_2>0$. 
    We have that $\sum_{e \in p_\alpha} c(e) \leq \max(k_1,k_2) \cdot 3 \varphi$. In fact, each edge of $G$: (1) may contribute at most twice to the degree of a vertex $w_1$ that has an incident edge involved in a {\sc Contract/Decontract} procedure; (2) may appear at most once as one of the $j$ edges removed/reinserted by a {\sc Remove/Reinsert} procedure; an (3) might appear at most once as one of the edges parallel to some edge removed/reinserted by a {\sc Remove/Reinsert} procedure. Indeed: (1) an edge $e$ that is incident to $w_1$ in $G$ is incident to $s$ in $G^*$, hence $e$ might only be incident to a ``new'' vertex $w_1$ involved in a second {\sc Contract/Decontract} procedure if it is itself the edge to be contracted; after that, the edge is not part of the resulting graph $G^*$; (2) an edge that appears as one of the $j$ edges removed from $G$ is not part of the resulting graph $G^-$; and (3) if an edge $e$ is parallel to some removed edge in $e_2,\dots,e_j$, then $e$ is a chord in $G^-$, hence it is not a parallel edge of a removed edge later, as that would imply that a removed edge is also a chord, which does not happen in a {\sc Remove/Reinsert} procedure.    
  
    Finally, let $\lambda$ be a leaf of $\cal T$, let $\eta$ be the leaf of $\cal T$ that follows $\lambda$, and let $\xi$ be the lowest common ancestor of $\lambda$ and $\eta$. The time between the output of the inner-canonical orientation of $G$ corresponding to $\lambda$ and the output of the inner-canonical orientation of $G$ corresponding to $\eta$ coincides with the sum of the costs of all the edges of the path $q_\lambda$ in $\cal T$ between $\lambda$ and $\xi$ and of the costs of all the edges of the path $q_\eta$ in $\cal T$ between $\xi$ and $\eta$, i.e., $\sum_{e \in q_\lambda} c(e) + \sum_{e \in q_\eta} c(e)$, plus the time spent by the {\sc Output} procedure. As the latter is $\bigoh(\varphi)$, we only need to show that the former is also $\bigoh(\varphi)$. By the same arguments used above, we have that $\sum_{e \in q_\lambda} c(e) + \sum_{e \in q_\eta} c(e) \leq \max(k_1,k_2) \cdot 6 \varphi$, which is $O(\varphi)$. 
\end{description}
This concludes the proof of \cref{th:innercanonical-orientation-plane-uv}.

\section{Enumeration of Canonical Orderings and Canonical Drawings} \label{se:canonical-drawings}

In this section, we show how the enumeration algorithm for canonical orientations from \cref{se:canonical-orientation} can be used in order to provide efficient algorithms for the enumeration of canonical orderings and canonical drawings. We start with the former.

\begin{lemma} \label{th:canonical-ordering-plane-uv}
Let $G$ be an $n$-vertex maximal plane graph and let $(u,v,z)$ be the cycle delimiting its outer face. There exists an algorithm with $\bigoh(n)$ setup time and $\bigoh(n)$ space usage that lists all canonical orderings of $G$ with first vertex $u$ with $\bigoh(n)$ delay.
\end{lemma}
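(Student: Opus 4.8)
The plan is to obtain the algorithm by composing the enumerator of canonical orientations from \cref{th:canonical-orientation-plane-uv} with an enumerator of the topological sortings of each such orientation, using \cref{le:from-orientation-to-ordering} as the bridge. First I would record the exact correspondence between the two families of objects: a labeling $\pi=(v_1=u,\dots,v_n=z)$ of the vertices of $G$ is a canonical ordering with first vertex $u$ if and only if it is a topological sorting of some canonical orientation of $G$ with first vertex $u$. One direction is precisely \cref{le:from-orientation-to-ordering}; for the converse, the orientation $\mathcal D_\pi$ obtained by directing every edge $(v_i,v_j)$ from $v_i$ to $v_j$ exactly when $i<j$ is, by definition, a canonical orientation with first vertex $u$, and $\pi$ is a topological sorting of it. Since $\mathcal D_\pi$ is determined by $\pi$, the canonical orderings with first vertex $u$ split, according to the canonical orientation they induce, into the classes consisting of the topological sortings of each canonical orientation with first vertex $u$; hence it suffices to enumerate the canonical orientations with first vertex $u$ and, for each of them, to enumerate all its topological sortings, which lists every canonical ordering with first vertex $u$ exactly once and nothing else.

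Second, I would describe a standard recursive routine that, given a directed acyclic graph $\mathcal H$ with $n$ vertices and $m$ edges, lists all its topological sortings. It keeps a global array $\sigma$ storing the current partial ordering, an array of in-degrees, the out-adjacency lists of $\mathcal H$, and the current set of sources in a doubly-linked list $L$; it iterates over the vertices $v$ of $L$, and for each such $v$ it appends $v$ to $\sigma$, removes $v$ from $L$ (recording its neighbours in $L$), decrements the in-degree of each out-neighbour of $v$ while appending to the tail of $L$ those that reach in-degree $0$, recurses, and then undoes all these changes (reinserting $v$ at its recorded position) before advancing to the successor of $v$ in $L$; the base case outputs $\sigma$. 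Each leaf of the recursion tree is one topological sorting, each is produced once, and all are produced.

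Third, I would analyze the delay of this routine. Its setup (in-degrees, out-adjacency lists, initial source list) takes $\bigoh(n+m)$ time and space, and outputting $\sigma$ at a leaf takes $\bigoh(n)$ time. The removal or restoration of a source $v$, together with the advance to the next source, costs $\bigoh(\deg_{\mathcal H}(v)+1)$; the key point is that along any root-to-leaf path the removed vertices are pairwise distinct (they are the entries of $\sigma$), so the total cost of the operations along such a path is $\bigoh(n+m)$. Between two consecutively listed sortings the routine unwinds from one leaf up to the lowest common ancestor of the two leaves and then descends to the next leaf, undoing and then redoing the removals of two sets of pairwise distinct vertices, hence in $\bigoh(n+m)$ time; the same bound holds for reaching the first leaf. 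Thus all topological sortings of $\mathcal H$ are listed with $\bigoh(n+m)$ setup, $\bigoh(n+m)$ space, and $\bigoh(n+m)$ delay.

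Finally I would put the two pieces together. Since $G$ is maximal plane, $m=3n-6=\bigoh(n)$, and every canonical orientation of $G$ has $\bigoh(n)$ edges. Run the algorithm of \cref{th:canonical-orientation-plane-uv}; whenever it outputs a canonical orientation $\mathcal D$ with first vertex $u$, pause it, set up and run the topological-sorting routine on $\mathcal D$ in $\bigoh(n)$ setup, $\bigoh(n)$ space, and $\bigoh(n)$ delay, and resume the orientation enumerator only once all topological sortings of $\mathcal D$ have been listed. By the correspondence above this lists all canonical orderings of $G$ with first vertex $u$, each exactly once. The space is $\bigoh(n)$ since the two enumerators are active in succession and each uses $\bigoh(n)$ space. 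The delay is $\bigoh(n)$: two consecutively listed orderings are either topological sortings of the same $\mathcal D$, hence $\bigoh(n)$ apart, or the last and first orderings of consecutive orientations $\mathcal D,\mathcal D'$, in which case the elapsed time is $\bigoh(n)$ (the orientation enumerator's delay) plus $\bigoh(n)$ (setup of the sorting routine on $\mathcal D'$) plus $\bigoh(n)$ (its first output); the time to the first ordering is bounded likewise. I expect the delay analysis of the topological-sorting routine to be the only delicate point — specifically, combining the distinctness-of-removed-vertices argument with an implementation of $L$ in which each source removal, each restoration, and each advance to the next source take $\bigoh(1)$ time.
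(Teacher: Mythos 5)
Your argument is correct and follows the same high-level route as the paper: compose the canonical-orientation enumerator of Lemma~\ref{th:canonical-orientation-plane-uv} with a topological-sorting enumerator, using Lemma~\ref{le:from-orientation-to-ordering} and its easy converse to establish that the canonical orderings with first vertex $u$ are partitioned among the topological sortings of the canonical orientations with first vertex $u$. The one place you genuinely diverge is the topological-sorting subroutine: the paper simply cites known enumeration algorithms achieving $\bigoh(n)$ setup, $\bigoh(n)$ space and even $\bigoh(1)$ delay, whereas you build a self-contained backtracking enumerator and bound its worst-case delay by $\bigoh(n+m)$ via the telescoping observation that the vertices removed along any root-to-leaf path of the recursion tree are pairwise distinct, so the degree-weighted removal/restoration costs sum to $\bigoh(n+m)$. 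This is a sound, more elementary substitute: since outputting a sorting already costs $\Theta(n)$ and $m=\bigoh(n)$ for a maximal planar $G$, the weaker delay of your subroutine is absorbed into the overall $\bigoh(n)$ bound. One small point to spell out explicitly: between the last ordering produced from $\mathcal D$ and the first produced from the next orientation $\mathcal D'$, you also pay for unwinding your recursion from the last leaf of $\mathcal D$'s call tree back to its root before control returns to the orientation enumerator; this is another $\bigoh(n+m)=\bigoh(n)$, so the bound still holds, but your delay accounting omits it.
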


\begin{proof}
The algorithm uses the algorithm in the proof of \cref{th:canonical-orientation-plane-uv} for the enumeration of the canonical orientations of $G$ with first vertex $u$. Indeed, for every canonical orientation $\mathcal D$ of $G$ listed by the latter algorithm, all canonical orderings $\pi$ of $G$ such that $\mathcal D$ is the canonical orientation of $G$ with respect to $\pi$ can be generated as the topological sortings of $\mathcal D$, by \cref{le:from-orientation-to-ordering}. Algorithms exist for listing all such topological sortings with $\bigoh(n)$ setup time, $\bigoh(n)$ space usage, and even just $\bigoh(1)$ delay, given that $G$ has $\bigoh(n)$ edges; see~\cite{on-ctg-05,pr-glef-94}. 

All generated canonical orderings have $u$ as first vertex since all the canonical orientations produced by the algorithms in~\cite{on-ctg-05,pr-glef-94} have $u$ as first vertex. Furthermore, any two canonical orderings generated from the same canonical orientation $\mathcal D$ differ as any two topological sortings listed by the algorithms in~\cite{on-ctg-05,pr-glef-94} are different from one another. Moreover, any two canonical orderings $\pi$ and $\pi'$ generated from different canonical orientations $\mathcal D$ and $\mathcal D'$, respectively, differ as there exists an edge $(w,w')$ in $G$ which is oriented from $w$ to $w'$ in $\mathcal D$ and from $w'$ to $w$ in $\mathcal D'$; this implies that $w$ precedes $w'$ in $\pi$ and follows $w'$ in $\pi'$.
\end{proof}

\begin{theorem} \label{th:canonical-ordering-plane}
Let $G$ be an $n$-vertex maximal plane (planar) graph. There exists an algorithm $\mathcal B_1$ (resp.\ $\mathcal B_2$) with $\bigoh(n)$ setup time and $\bigoh(n)$ space usage that lists all canonical orderings of $G$ with $\bigoh(n)$ delay.
\end{theorem}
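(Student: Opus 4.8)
The plan is to reuse, almost verbatim, the two-level reduction used in the proof of \cref{th:canonical-orientation-plane-and-planar}, but feeding it the canonical-ordering enumerator of \cref{th:canonical-ordering-plane-uv} instead of the canonical-orientation enumerator. For the plane case I would let $\mathcal B_1$ call the algorithm of \cref{th:canonical-ordering-plane-uv} three times, once for each of the three vertices incident to the outer face of $G$ taken as the ``first vertex'' (equivalently, running it on $G$ with counter-clockwise outer triple $(u,v,z)$, then $(v,z,u)$, then $(z,u,v)$). For the planar case I would let $\mathcal B_2$ call $\mathcal B_1$ on each of the $4n-8$ maximal plane graphs isomorphic to $G$, which --- exactly as argued for \cref{th:canonical-orientation-plane-and-planar} --- arise by choosing the outer face among the $2n-4$ facial cycles of a fixed planar embedding of $G$ and then choosing one of the two circular orders of its three outer vertices.

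\textbf{Bounds.} I would run the $3$ (resp.\ $3(4n-8)$) invocations sequentially, keeping in memory only the data structures of the invocation currently executing. Since each single invocation of \cref{th:canonical-ordering-plane-uv} has $\bigoh(n)$ setup time, $\bigoh(n)$ space usage, and $\bigoh(n)$ delay, this yields $\bigoh(n)$ space usage overall, $\bigoh(n)$ setup time (the time before the very first output), and $\bigoh(n)$ delay --- the only new contribution being the $\bigoh(n)$ re-initialization performed when switching from one invocation to the next, which is charged to the delay. Computing a planar embedding of $G$ and enumerating its $2n-4$ faces for $\mathcal B_2$ is also $\bigoh(n)$ and fits in the setup budget.

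\textbf{Correctness: completeness.} By the definition of a canonical ordering of a maximal plane (resp.\ planar) graph, the canonical orderings of $G$ are precisely the canonical orderings with first vertex $x$ of $G$ (resp.\ of a maximal plane graph isomorphic to $G$), with $x$ ranging over the three outer vertices. Hence the union of the output sequences of all invocations is exactly the set of all canonical orderings of $G$.

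\textbf{Correctness: no duplicates --- the main obstacle.} The delicate point, and the one I expect to require the only genuine argument, is that no canonical ordering is listed by two distinct invocations. A canonical ordering $\pi=(v_1,\dots,v_n)$ of a maximal plane graph $H$ has, by definition, $(v_1,v_2,v_n)$ equal to the counter-clockwise cyclic order of the outer triangle of $H$. So if two orderings output by different invocations agreed on the ordered triple $(v_1,v_2,v_n)$, then (using that a maximal planar graph has a unique embedding up to reflection, so that the outer face together with one of its two circular orders determines the plane graph) the two invocations would be operating on the same plane graph and with the same ``first vertex'', hence would be a single invocation of \cref{th:canonical-ordering-plane-uv}, which by that lemma lists each of its canonical orderings exactly once --- a contradiction. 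Consequently two orderings from different invocations always differ in $v_1$, in $v_2$, or in $v_n$, so the overall algorithm is duplicate-free. Putting completeness and duplicate-freeness together with the bounds above gives the theorem; the remaining details (bookkeeping of the sequential scheduling and of the embedding computation) are routine.
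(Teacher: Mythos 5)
Your proposal is correct and follows essentially the same approach as the paper's proof: $\mathcal B_1$ invokes the enumerator of \cref{th:canonical-ordering-plane-uv} three times (once per outer vertex as the first vertex), $\mathcal B_2$ invokes $\mathcal B_1$ on the $4n-8$ maximal plane graphs isomorphic to $G$, and duplicate-freeness is argued by observing that orderings from distinct invocations must differ in $v_1$, $v_2$, or $v_n$. The only difference is that you justify the no-duplicate claim in a bit more detail (via Whitney's uniqueness of the embedding of a $3$-connected planar graph), whereas the paper states it more tersely, but the underlying reasoning is identical.
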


\begin{proof}
Algorithm $\mathcal B_1$ uses the algorithm  in the proof of \cref{th:canonical-ordering-plane-uv} applied three times, namely once for each choice of the first vertex among the three vertices incident to the outer face of the given maximal plane graph. Algorithm $\mathcal B_2$ uses the algorithm $\mathcal B_1$ applied $4n-8$ times, since there are $4n-8$ maximal plane graphs which are isomorphic to a given maximal planar graph (see the proof of \cref{th:canonical-orientation-plane-and-planar}). Note that any two canonical orderings produced by different applications of algorithm $\mathcal B_2$ differ on the first, or on the second, or on the last vertex in the ordering. 
\end{proof}

We now turn our attention to the enumeration of the planar straight-line drawings produced by the algorithm by de Fraysseix, Pach, and Pollack~\cite{dpp-hdpgg-90}, known as \emph{canonical drawings}. We begin by reviewing such an algorithm, which in the following is called \emph{FPP algorithm}. The algorithm takes as input:
\begin{itemize}
\item an $n$-vertex maximal plane graph $G$, whose outer face is delimited by a cycle $(u,v,z)$, where $u$, $v$, and $z$ appear in this counter-clockwise order along the outer face of $G$; and
\item a canonical ordering $\pi=(v_1=u,v_2=v,v_3,\dots,v_n=z)$ of $G$; recall that $G_k$ denotes the subgraph of $G$ induced by the first $k$ vertices of $\pi$. 
\end{itemize}

The FPP algorithm works in $n-2$ steps. At the first step, the FPP algorithm constructs a planar straight-line drawing $\Gamma_3$ of $G_3$ so that $v_1$ is placed at $(0,0)$, $v_2$ at $(2,0)$, and $v_3$ at $(1,1)$, and defines the sets $M_3(v_1)=\{v_1,v_2,v_3\}$, $M_3(v_3)=\{v_2,v_3\}$, and $M_3(v_2)=\{v_2\}$.

\begin{figure}[t]
    \centering
    \begin{subfigure}{.49\textwidth}
    \centering
    	\includegraphics[page=1, scale =.75]{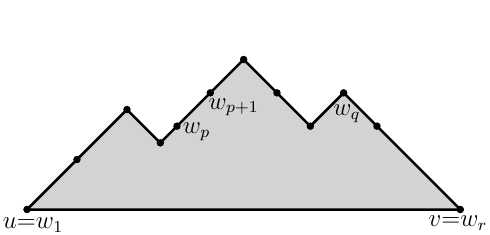}
        \caption{The drawing $\Gamma_k$ of $G_k$.}
    	\label{fig:canonical-drawing-a}
    \end{subfigure}
    \hfill
    \begin{subfigure}{.49\textwidth}
    \centering
    	\includegraphics[page=2, scale =.75]{img/FPP.pdf}
    	\caption{The drawing $\Gamma_{k+1}$ of $G_{k+1}$.}
    	\label{fig:canonical-drawing-b}
    \end{subfigure}
\caption{Illustrations for the FPP algorithm.}
\label{fig:canonical-drawing}
\end{figure}
For any $k=3,\dots,n-1$, at step $k-1$, the FPP algorithm constructs a planar straight-line drawing $\Gamma_{k+1}$ of $G_{k+1}$ by suitably modifying $\Gamma_{k}$, as follows; refer to \cref{fig:canonical-drawing}. Denote by $w_1=u,w_2,\dots,w_{r}=v$ the clockwise order of the vertices of $G_k$ along its outer face. Assume that, during step $k-2$, the algorithm has defined, for $i=1,\dots,r$, a subset $M_k(w_i)$ of the vertices of $G_k$ such that $M_k(w_1)\supset M_k(w_2)\supset \dots \supset M_k(w_{r})$. Let $w_p,w_{p+1},\dots,w_q$ be the neighbors of $v_{k+1}$ in $G_k$, for some $1\leq p <q\leq r$. Then $\Gamma_{k+1}$ is obtained from $\Gamma_k$ by increasing the $x$-coordinate of each vertex in $M_k(w_{p+1})$ by one unit, by increasing the $x$-coordinate of each vertex in $M_k(w_q)$ by one additional unit, and by placing $v_{k+1}$ at the intersection point of the line through $w_p$ with slope $+1$ and of the line through $w_q$ with slope $-1$. The key point for the proof of planarity of $\Gamma_{k+1}$ is that the vertices $w_1,\dots,w_{r}$ define in $\Gamma_k$ an $x$-monotone path whose edges have slope either $+1$ or $-1$. The ``shift'' of the vertices in $M_k(w_{p+1})$ makes room for drawing the edge $(w_p,v_{k+1})$ with slope $+1$, and the shift of the vertices in $M_k(w_q)$ makes room for drawing the edge $(w_q,v_{k+1})$ with slope $-1$, thus maintaining the invariant on the shape of the boundary of $\Gamma_{k+1}$. Step $k-1$ is completed by defining the sets:
\begin{itemize}
    \item $M_{k+1}(w_i)=M_k(w_i)\cup \{v_{k+1}\}$, for $i=1,\dots,p$;
    \item $M_{k+1}(v_{k+1})=M_k(w_{p+1})\cup \{v_{k+1}\}$; and
    \item $M_{k+1}(w_i)=M_{k}(w_i)$, for $i=q,\dots,r$.
\end{itemize}
We call \emph{canonical drawing with base edge $(u,v)$} the drawing $\Gamma_n$ of $G$ constructed by the FPP algorithm. We now prove the following main ingredient of our enumeration algorithm for canonical drawings.

\begin{theorem} \label{th:bijection-FPP}
Let $G$ be an $n$-vertex maximal plane graph and let $(u,v,z)$ be the cycle delimiting the outer face of $G$, where $u$, $v$, and $z$ appear in this counter-clockwise order along the outer face of $G$.
There exists a bijective function from the canonical orientations of $G$ with first vertex $u$ to the canonical drawings of $G$ with base edge $(u,v)$. Also, given a canonical orientation of $G$ with first vertex $u$, the corresponding canonical drawing of $G$ with base edge $(u,v)$ can be constructed in $\bigoh(n)$ time.
\end{theorem}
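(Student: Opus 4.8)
We need a bijection between canonical orientations of $G$ with first vertex $u$ and canonical drawings of $G$ with base edge $(u,v)$. The natural candidate map is: given a canonical orientation $\mathcal{D}$, pick any topological sorting $\pi$ of $\mathcal{D}$ (which is a canonical ordering with first vertex $u$ by Lemma~\ref{le:from-orientation-to-ordering}), run the FPP algorithm on $(G,\pi)$, and output $\Gamma_n$. The well-definedness of this map is the crux: we must show that two distinct canonical orderings $\pi$ and $\pi'$ that are both topological sortings of the same $\mathcal{D}$ produce the \emph{same} drawing $\Gamma_n$. For injectivity we must show that distinct canonical orientations yield distinct drawings, and for surjectivity we observe every canonical drawing with base edge $(u,v)$ arises from \emph{some} canonical ordering with first vertex $u$ (by definition), hence from the canonical orientation that ordering induces.

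**Well-definedness (the main obstacle).** This is where the work lies. The key structural fact is that two topological sortings of the same acyclic $\mathcal{D}$ differ only by a sequence of transpositions of \emph{adjacent} elements $v_k, v_{k+1}$ that are \emph{incomparable} in $\mathcal{D}$ (i.e., nonadjacent in $G$, since $G$ is a maximal plane graph and every edge is oriented). So it suffices to prove: if $\pi$ and $\pi'$ are canonical orderings with first vertex $u$ differing only by swapping positions $k$ and $k+1$, where $v_k$ and $v_{k+1}$ are nonadjacent, then the FPP algorithm produces the same $\Gamma_n$ on both. I would prove this by tracking the FPP invariants. The point is that when $v_k$ and $v_{k+1}$ are nonadjacent and both insertable at consecutive steps, the neighbor intervals $w_p,\dots,w_q$ of $v_k$ on $C_{k-1}$ and of $v_{k+1}$ on the updated outer path are disjoint (or at most share an endpoint), so the two "shift sets" and placement operations commute: inserting $v_k$ then $v_{k+1}$ gives the same coordinates, the same outer path, and the same sets $M(\cdot)$ as inserting $v_{k+1}$ then $v_k$. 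One must check the bookkeeping of the $M$-sets carefully — that $M_{k+1}(v_{k+1})$, the shifts applied to $M_k(w_{p+1})$ and $M_k(w_q)$, and the resulting sets after both insertions are independent of the order — using the nesting $M(w_1)\supset M(w_2)\supset\cdots$ and the fact that the two insertion windows are separated along the outer path. Since every pair of topological sortings is connected by such adjacent-incomparable transpositions, a standard induction then gives that \emph{all} topological sortings of $\mathcal{D}$ yield the same drawing.

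**Injectivity.** Suppose $\mathcal{D} \neq \mathcal{D}'$ are canonical orientations with first vertex $u$. Then some edge $(a,b)$ of $G$ is oriented $a\to b$ in $\mathcal{D}$ and $b\to a$ in $\mathcal{D}'$; in any topological sorting $\pi$ of $\mathcal{D}$ we have $a$ before $b$, and in any topological sorting $\pi'$ of $\mathcal{D}'$ we have $b$ before $a$. I claim the resulting drawings differ. The cleanest way: the FPP drawing $\Gamma_n$ determines, for each internal edge, which endpoint was inserted earlier — indeed, one can recover from $\Gamma_n$ the canonical orientation, because the FPP construction orients every edge from its earlier to its later endpoint, and this orientation is exactly $\mathcal{D}$ (this also re-proves well-definedness of the \emph{orientation} read off the drawing, though not a priori of the drawing itself). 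Concretely, I would argue that in $\Gamma_n$ the edge $(a,b)$ is drawn so that the local geometry (the boundary-path shape at the moment of the later vertex's insertion, with edges of slope $\pm 1$) pins down which of $a,b$ lay on the boundary path when the other was added; hence $\Gamma_n \neq \Gamma'_n$. Alternatively, and more robustly, I would show directly that $\mathcal{D}$ is reconstructible from $\Gamma_n$ by a combinatorial rule, so the map is injective.

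**Surjectivity and the running time.** Surjectivity is immediate from the definitions: a canonical drawing with base edge $(u,v)$ is, by definition, $\Gamma_n$ for some canonical ordering $\pi$ with first vertex $u$; that $\pi$ is a topological sorting of its induced canonical orientation $\mathcal{D}$, and by well-definedness the map sends $\mathcal{D}$ to exactly that drawing. For the time bound: given $\mathcal{D}$, compute one topological sorting in $\bigoh(n)$ time (DFS/Kahn on the $\bigoh(n)$-edge graph), then run the FPP algorithm. A naive FPP implementation costs $\bigoh(n^2)$ because of the shift-set updates, but the classical linear-time implementations of the shift method (maintaining the $M$-sets implicitly via a forest of offsets, as in Chrobak–Payne) compute $\Gamma_n$ in $\bigoh(n)$ time; I would cite this and note it applies verbatim here since we only need the final coordinates. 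Hence the whole construction is $\bigoh(n)$.
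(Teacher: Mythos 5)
Your map and overall strategy match the paper's, and you correctly identify the crux: well-definedness, i.e.\ that any two topological sortings of the same canonical orientation $\mathcal D$ must produce the same FPP drawing. Where you differ is how you organize that argument. You propose connecting any two topological sortings of $\mathcal D$ by a chain of adjacent transpositions of incomparable (hence nonadjacent in $G$) vertices and verifying invariance of the FPP state (coordinates plus $M$-sets) under each single swap. The paper instead proves a stronger claim (\cref{cl:biconnected-orientation-determines-drawing}) by induction on the number of vertices: if $\pi$ and $\tau$ share the last vertex, peel it off and recurse on a smaller graph; otherwise, build an explicit transposition chain $\sigma_\ell,\dots,\sigma_m$ that slides the last vertex of $\pi$ to the end of $\tau$, then recurse. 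Both reduce to the same commutation check — when $v_k$ and $v_{k+1}$ are nonadjacent, their neighbor intervals along the boundary path are disjoint (up to a shared endpoint), so inserting them in either order yields identical coordinates and identical $M$-sets — and you would still need to carry out that check in full (it is the longest part of the paper's proof). Your organization is arguably more economical because it avoids the auxiliary induction on graph size; the paper's buys a slightly more structured bookkeeping of the $M$-sets via \cref{pr:inductive-sets}. One point to sharpen: your injectivity sketch, which proposes recovering $\mathcal D$ from local boundary geometry in $\Gamma_n$, is more complicated than needed. The paper's argument is a one-liner once you note the invariant that in an FPP drawing a vertex's $y$-coordinate never changes after placement and a newly placed vertex lies strictly above all of its already-placed neighbors; hence every edge is oriented in $\mathcal D$ toward its endpoint of strictly larger $y$-coordinate, so two orientations disagreeing on an edge $(a,b)$ produce drawings with opposite $y$-orderings of $a$ and $b$. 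You should make that invariant explicit rather than appeal to slope-based local reconstruction.
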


In the following we prove \cref{th:bijection-FPP}. We do this outside of a proof environment as the proof contains some statements which might be of independent interest. 

We introduce some definitions. If $\mathcal D$ is the canonical orientation of $G$ with respect to $\pi$, we say that $\pi$ \emph{extends} $\mathcal D$ and that $\pi$ \emph{defines} $\mathcal D$, depending on whether $\pi$ is constructed from $\mathcal D$ or vice versa. Also, we say that the canonical drawing $\Gamma$ of $G$ obtained by applying the FPP algorithm with a canonical ordering $\pi$ of $G$ \emph{corresponds to} $\pi$.

The bijective function $f$ that proves the statement of the theorem is defined as follows. Consider any canonical orientation $\mathcal D$ of $G$ with first vertex $u$ and let $\pi$ be a canonical ordering with first vertex $u$ that extends $\mathcal D$. Then the function $f$ maps $\mathcal D$ to the canonical drawing that corresponds to $\pi$. Note that this canonical drawing has $(u,v)$ as the base edge, since $u$ is the first vertex of $\mathcal D$ (and hence the first vertex of $\pi$) and since $u$, $v$, and $z$ appear in this counter-clockwise order along the boundary of the outer face of $G$. The second part of the statement of \cref{th:bijection-FPP} then follows from the fact that a canonical ordering that extends $\mathcal D$ can be computed in $\bigoh(n)$ time as any topological sorting of $\mathcal D$~\cite{DBLP:journals/cacm/Kahn62}, and that the FPP algorithm can be implemented in $\bigoh(n)$ time~\cite{chrobak1995linear}.

In order to prove that $f$ is bijective, we prove that it is injective (that is, for any two distinct canonical orientations $\mathcal D_1$ and $\mathcal D_2$ of $G$ with first vertex $u$, we have that  $f(\mathcal D_1)$ and $f(\mathcal D_2)$ are not the same drawing) and that it is surjective (that is, for any canonical drawing $\Gamma$ with base edge $(u,v)$, there exists a canonical orientation $\mathcal D$ such that $f(\mathcal D)$ is $\Gamma$).

We first prove that $f$ is injective. Let $\mathcal D_1$ and $\mathcal D_2$ be any two distinct canonical orientations  of $G$ with first vertex $u$. For $i=1,2$, let $\pi_i$ be any canonical ordering that extends~$\mathcal D_i$. Since $\mathcal D_1$ and $\mathcal D_2$ are distinct, they differ on the orientation of some edge $(a,b)$ different from $(u,v)$, say that $(a,b)$ is directed towards $b$ in $\mathcal D_1$ and towards $a$ in $\mathcal D_2$. This implies that $b$ follows $a$ in $\pi_1$ and precedes $a$ in $\pi_2$. Hence, the $y$-coordinate of $b$ is larger than the one of $a$ in $f(\mathcal D_1)$ and smaller than the one of $a$ in $f(\mathcal D_2)$, thus $f(\mathcal D_1)$ and $f(\mathcal D_2)$ are not the same drawing.

We now prove that $f$ is surjective. Consider any canonical drawing $\Gamma$ of $G$ with base edge $(u,v)$ and let $\pi$ be a canonical ordering of $G$ with first vertex $u$ such that the canonical drawing corresponding to $\pi$ is $\Gamma$. Let $\mathcal D$ be the canonical orientation of $G$ with respect to $\pi$. The existence of the canonical orientation $\mathcal D$ is not enough to prove that $f$ is surjective. Indeed, given the canonical orientation $\mathcal D$, the function $f$ considers {\em some} canonical ordering $\tau$, possibly different from $\pi$, that extends $\mathcal D$, hence $f(\mathcal D)$ is the canonical drawing corresponding to $\tau$ and it is not guaranteed that $f(\mathcal D)=\Gamma$. However, we have the following claim.

\begin{claimx} \label{cl:orientation-determines-drawing}
Any two canonical orderings $\pi$ and $\tau$ that extend $\mathcal D$ are such that the canonical drawings of $G$ corresponding to $\pi$ and $\tau$ are the same drawing.    
\end{claimx}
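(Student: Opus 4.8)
The plan is to show that two canonical orderings $\pi$ and $\tau$ that extend the same canonical orientation $\mathcal D$ produce, step by step, the same coordinates for every vertex via the FPP algorithm. The key structural fact I would exploit is that $\pi$ and $\tau$ differ only by swapping the relative order of vertices that are "parallel" in $\mathcal D$ (incomparable in the partial order induced by $\mathcal D$); any two topological sortings of a DAG are connected by a sequence of such adjacent transpositions of incomparable elements. So it suffices to prove the claim when $\pi$ and $\tau$ differ by a single adjacent transposition: $\tau$ is obtained from $\pi=(v_1,\dots,v_n)$ by swapping $v_k$ and $v_{k+1}$, where $(v_k,v_{k+1})$ is not an edge of $G$ (equivalently, $v_k$ and $v_{k+1}$ are incomparable in $\mathcal D$). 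The general statement then follows by transitivity along the connecting sequence.

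First I would record what "$(v_k,v_{k+1})$ is not an edge and the swap keeps a valid canonical ordering" buys us: since $v_{k+1}$ comes after $v_k$ in $\tau$, its neighbors among $\{v_1,\dots,v_k\}$ are the same as its neighbors among $\{v_1,\dots,v_{k-1}\}$, and symmetrically the set of neighbors of $v_k$ that precede it is unchanged; moreover neither is a neighbor of the other. In particular $G_j$ (the subgraph induced by the first $j$ vertices) is the same set of vertices for all $j\notin\{k\}$ and only the label of the single "extra" vertex changes at index $k$ versus $k+1$. The crucial point is that when the FPP algorithm inserts two consecutive vertices $a$ then $b$ with $(a,b)\notin E$, inserting them in the order $b$ then $a$ yields the identical drawing of $G_{k+1}$: neither insertion shifts a vertex whose position matters for the other, because the neighborhoods $\{w_p,\dots,w_q\}$ of $a$ and of $b$ on the outer boundary are disjoint (they can share at most one endpoint, and a shared endpoint would make $a,b$ both adjacent to it but the faces are triangles, forcing $a\sim b$ — contradiction), hence the two "shift + place" operations commute, and the sets $M(\cdot)$ after both insertions coincide regardless of order. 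This commutation lemma is the technical heart.

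The main steps, in order, are: (1) state and prove the commutation lemma for two consecutive FPP insertions of non-adjacent vertices with disjoint outer-boundary neighborhoods, checking that the coordinate updates and the $M$-sets agree; (2) verify that when $\pi$ and $\tau$ differ by one adjacent transposition of incomparable vertices, the hypotheses of the lemma hold at the relevant step and the two runs of FPP agree on $G_{k-1}$ (trivially, identical prefixes) and hence agree on $G_{k+1}$ and therefore on all of $G_n$ since the remaining insertions are dictated by identical sequences acting on identical partial drawings; (3) invoke the fact that any two topological sortings of $\mathcal D$ are connected by such transpositions and conclude by induction along the chain.

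The step I expect to be the main obstacle is step (1): carefully checking that the two shift operations genuinely commute. One must confirm that the neighbor intervals of $a$ and $b$ on the current outer path are disjoint (using triangularity of $G$ and non-adjacency of $a,b$), that shifting $M(w_{p+1})$ for one insertion does not disturb which vertices lie in $M(w_{p'+1})$ for the other, and that the recursively defined $M$-sets end up identical — this requires unwinding the nested definition $M_{k+1}(v_{k+1})=M_k(w_{p+1})\cup\{v_{k+1}\}$ etc. in both orders and matching them. Everything else is bookkeeping on top of this lemma.
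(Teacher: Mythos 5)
Your route is genuinely different from the paper's, and conceptually cleaner. The paper proves a stronger statement for biconnected (not necessarily maximal) plane graphs (\cref{cl:biconnected-orientation-determines-drawing}) by induction on the number of vertices: the last vertex of $\pi$ is located at some position $\ell$ in $\tau$, and a hand-built chain of intermediate canonical orderings $\sigma_\ell=\tau,\sigma_{\ell+1},\dots,\sigma_m$ (each justified via \cref{le:third-canonical}) bubbles that vertex to the end, with consecutive $\sigma_j,\sigma_{j+1}$ differing by one adjacent transposition. You instead invoke the standard fact that linear extensions of a poset are connected by adjacent transpositions of incomparable elements, and observe (correctly, via \cref{le:from-orientation-to-ordering}) that every intermediate is automatically a canonical ordering of the maximal plane graph $G$. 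This sidesteps the generalization to non-maximal subgraphs and the nested induction. Both approaches ultimately hinge on the same commutation lemma for two consecutive FPP insertions of non-adjacent vertices, together with the corresponding agreement of the $M$-sets.

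However, there is a genuine gap in the commutation lemma as you state it. You claim the two vertices' neighbor intervals on the outer boundary are \emph{disjoint}, on the grounds that ``a shared endpoint would make $a,b$ both adjacent to it but the faces are triangles, forcing $a\sim b$.'' This is false: if $a$ is adjacent to $w_p,\dots,w_q$ and $b$ is adjacent to $w_q,\dots,w_{q'}$ with the shared endpoint $w_q$, then after inserting both, the outer boundary contains the subpath $a,w_q,b$; the triangular faces at $w_q$ incident to $(a,w_q)$ and to $(w_q,b)$ that lie \emph{outside} $G_{k+1}$ have third vertices that are later in the ordering and need not coincide, so no edge $(a,b)$ is forced. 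Indeed the paper explicitly handles the shared-endpoint case (``either $q\leq p'$ or $q'\leq p$''), and most of the work in its Case~2 --- the sixteen $M$-set identities and the five-way partition $V_0,\dots,V_4$ used to track the horizontal shifts --- is there precisely because the intervals may overlap in one vertex. Your lemma, stated with the disjointness hypothesis, therefore does not apply at every transposition in the chain. The fix is to state and prove the commutation lemma under the correct, weaker hypothesis (the two intervals intersect in at most one endpoint), which is exactly what the paper's computation delivers; once that is in place, your reduction goes through.
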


\cref{cl:orientation-determines-drawing} implies that $f$ is surjective. Indeed, the function $f$ considers some canonical ordering $\tau$, possibly different from $\pi$, that extends $\mathcal D$; by \cref{cl:orientation-determines-drawing}, the drawing corresponding to $\tau$ is the same drawing as the one corresponding to $\pi$, that is, $\Gamma$.  It remains to prove \cref{cl:orientation-determines-drawing}, which we do next. We start by extending the notions of canonical ordering, orientation, and drawing to biconnected plane graphs that are not necessarily maximal. 

Let $H$ be an $m$-vertex biconnected plane graph, with $m\geq 3$, whose internal faces are delimited by cycles with $3$ vertices, and let $u$ and $v$ be two vertices incident to the outer face of $H$ such that $u$ immediately precedes $v$ in the counter-clockwise order of the vertices along the boundary of the outer face of $H$. A \emph{canonical ordering of $H$ with first vertex $u$} is a labeling of the vertices $(v_1=u, v_2=v, v_3,  \dots, v_{m-1}, v_m)$ such that, for $k=3,\dots,m$, the plane subgraph $H_{k}$ of $H$ induced by $v_1,v_2,\dots,v_k$ satisfies conditions (CO-1) and (CO-2) in \cref{se:preliminaries} (with $H_k$ and $H$ replacing $G_k$ and $G$, respectively). Then a \emph{canonical orientation $\mathcal D_H$ of $H$ with first vertex $u$} is obtained from a canonical ordering $\pi$ of $H$ with first vertex $u$ by orienting each edge of $H$ so that it is outgoing at the end-vertex that comes first in $\pi$. We say that $\pi$ \emph{extends} and \emph{defines} $\mathcal D_H$. A \emph{canonical drawing $\Gamma$ of $H$ with base edge $(u,v)$} is a drawing obtained by applying the FPP algorithm with a canonical ordering $\pi$ of $H$. We say that $\Gamma$ \emph{corresponds to} $\pi$.

%$(v_1=u, v_2=v, v_3,  \dots, v_{s-1}, v_s, v^*)$ is a canonical ordering of the maximal plane graph $H^*$ obtained by adding a vertex in the outer face of $H$ and connecting it to all the vertices incident to the outer face of $H$, so that the cycle $(u,v,v^*)$ delimits the outer face of the resulting maximal plane graph. For $k=3,\dots,s$, we denote by  . Equivalently, we have that $(v_1=u, v_2=v, v_3,  \dots, v_{s-1}, v_s)$ is a canonical ordering of $H$ with first vertex $u$ if $H_{k}$  A 

We now state the following claim, which is more general than, and hence implies, \cref{cl:orientation-determines-drawing}.

\begin{claimx} \label{cl:biconnected-orientation-determines-drawing}
Any two canonical orderings $\pi$ and $\tau$ that extend $\mathcal D_H$ are such that the canonical drawings of $H$ corresponding to $\pi$ and $\tau$ are the same drawing.    
\end{claimx}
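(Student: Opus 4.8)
The plan is to prove the claim by induction on the number $m$ of vertices of $H$, showing the stronger statement that if $\pi=(v_1,\dots,v_m)$ and $\tau=(v'_1,\dots,v'_m)$ are two canonical orderings extending the same canonical orientation $\mathcal D_H$, then not only are the final drawings $\Gamma^\pi_m$ and $\Gamma^\tau_m$ identical, but in fact every vertex receives the same coordinates, and moreover the ``offset sets'' $M(\cdot)$ attached to the outer vertices agree in a suitable sense. The base case $m=3$ is immediate, since there is a unique canonical ordering on three vertices and the FPP algorithm places $v_1,v_2,v_3$ at fixed positions $(0,0),(2,0),(1,1)$.

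The key structural fact I would establish first is a \emph{commutation / local-moves} lemma: any two topological sortings of a fixed acyclic orientation $\mathcal D_H$ are connected by a sequence of transpositions of consecutive elements $v_i,v_{i+1}$ that are not joined by an edge of $H$ (this is the standard fact that the linear extensions of a poset form a connected graph under adjacent transpositions of incomparable elements). By this lemma it suffices to prove the claim when $\pi$ and $\tau$ differ by a single such swap: $\tau$ is obtained from $\pi=(v_1,\dots,v_{k-1},v_k,v_{k+1},v_{k+2},\dots,v_m)$ by exchanging $v_k$ and $v_{k+1}$, where $(v_k,v_{k+1})\notin E(H)$ and, crucially, both $(v_1,\dots,v_{k-1},v_k,v_{k+1},\dots)$ and $(v_1,\dots,v_{k-1},v_{k+1},v_k,\dots)$ are canonical orderings of $H$. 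Then I must show that the two runs of the FPP algorithm produce the same drawing.

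For the single-swap case, the heart of the argument is to analyze steps $k-2$ and $k-1$ of the FPP algorithm in the two runs. Up to step $k-3$ the two runs are literally identical, producing the same $\Gamma_{k-1}=G_{k-1}$ and the same sets $M_{k-1}(\cdot)$. I would argue that because $v_k$ and $v_{k+1}$ are non-adjacent, their neighbor intervals on the outer boundary $C_{k-1}-(u,v)$ are either disjoint or share at most one endpoint vertex; either way, the two insertions ``act on disjoint or nested portions of the contour,'' so the stretch/shift operations they trigger commute. Concretely: inserting $v_k$ then $v_{k+1}$ shifts certain suffixes of the contour by prescribed amounts, and inserting $v_{k+1}$ then $v_k$ shifts (possibly different) suffixes; I must check that the net horizontal displacement of every already-placed vertex is the same in both orders, that the two new vertices land at the same points, and that the resulting outer contour (and hence the sets $M_{k+1}(\cdot)$ used in all later steps) coincide. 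After step $k-1$, $G_{k+1}$, its drawing, and the $M$-sets are identical in the two runs, so the identical remaining steps $k,\dots,m-2$ finish the proof. Finally, Claim \ref{cl:orientation-determines-drawing} follows by applying Claim \ref{cl:biconnected-orientation-determines-drawing} with $H=G$ a maximal plane graph (whose internal faces are triangles).

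I expect the main obstacle to be the ``commuting insertions'' bookkeeping in the single-swap case: one has to argue carefully, using the fact that $w_1,\dots,w_r$ form an $x$-monotone unit-slope path and that the two neighbor intervals of $v_k$ and $v_{k+1}$ interact in only one of a small number of combinatorial patterns (disjoint, or meeting at one shared boundary vertex, with the order forced by the embedding and by both orderings being canonical), that the shift amounts and the placement coordinates are symmetric functions of the data available before step $k-2$. A clean way to package this is to give a closed-form description of the coordinates assigned by the FPP algorithm in terms of the canonical orientation alone — e.g., the $y$-coordinate of a vertex equals its rank, determined by $\mathcal D_H$, and the $x$-coordinate can be read off from counts of certain faces/edges — so that coordinates are manifestly independent of the chosen linear extension; if such a formula is available or easy to derive, it would short-circuit the swap analysis entirely, and I would try that route first.
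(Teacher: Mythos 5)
Your core idea matches the paper's: both hinge on a single-swap analysis in which two canonical orderings that agree up to position $k-1$ and differ by exchanging two non-adjacent consecutive vertices produce the same FPP drawing and the same $M$-sets. The paper organizes this via induction on $m$ together with a specific ``bubbling'' sequence $\sigma_\ell,\sigma_{\ell+1},\dots,\sigma_m$ that moves the vertex $u_m=v_\ell$ one position at a time to the end (Case~2), reducing to the case where the last vertex agrees (Case~1, induction); you instead invoke the general fact that linear extensions of a poset are connected under adjacent transpositions of incomparable elements, which in principle reduces directly to the single-swap lemma without the outer induction. Your variant is cleaner in spirit, and you correctly anticipate that the invariant must include the offset sets $M(\cdot)$, not just coordinates (cf.\ \cref{pr:inductive-sets}).

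There is, however, a genuine gap in your reduction. The connectedness lemma you invoke passes through arbitrary topological sortings of $\mathcal D_H$, whereas the single-swap lemma---and indeed the FPP algorithm itself---is only meaningful for \emph{canonical orderings} of $H$. You acknowledge that both sides of a single swap must be canonical, but you give no argument that the \emph{intermediate} orderings along the adjacent-transposition path are canonical. For a maximal plane graph $G$ this follows from \cref{le:from-orientation-to-ordering}, but $H$ here is a general biconnected plane graph with triangular interior faces (not maximal), and the analogous ``every topological sorting of $\mathcal D_H$ is a canonical ordering of $H$'' is not something the paper states or proves; you would need to establish it, essentially re-running the argument of \cref{th:canonical-orientation-characterization} for this wider class. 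The paper sidesteps this by choosing a concrete swap sequence and proving canonicity only for the orderings that actually arise (\cref{le:one-less-vertex-canonical} and \cref{le:third-canonical}). Either fill in the missing lemma, or replace the general connectedness argument with a specific sequence of swaps whose canonicity you verify directly; as it stands, the reduction step is incomplete.
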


The proof of \cref{cl:biconnected-orientation-determines-drawing} is by induction on $m$. The proof uses \cref{pr:inductive-sets} below, which is proved inductively together with \cref{cl:biconnected-orientation-determines-drawing}. Let $z_1,z_2,\dots,z_{r}$ be the clockwise order of the vertices along the outer face of $H$. Furthermore, let $M^{\pi}_m(z_1)$, $M^{\pi}_m(z_2)$, $\dots$, $M^{\pi}_m(z_{r})$ (let $M^{\tau}_m(z_1)$, $M^{\tau}_m(z_2)$, $\dots$, $M^{\tau}_m(z_{r})$) be the sets that are associated to the vertices $z_1,z_2,\dots,z_{r}$, respectively, by the FPP algorithm, when applied with canonical ordering $\pi$ (resp.\ $\tau$). 

\begin{property} \label{pr:inductive-sets}
For $i=1,2,\dots,r$, the sets $M^{\pi}_m(z_i)$ and $M^{\tau}_m(z_i)$ coincide.   
\end{property}

Let $\pi=(u_1=u,u_2=v,\dots,u_m)$ and $\tau=(v_1=u,v_2=v,\dots,v_m)$. Also, let $\Gamma$ and $\Phi$ be the canonical drawings of $H$ that correspond to $\pi$ and $\tau$, respectively. In the base case we have $m=3$. Then the statements of \cref{cl:biconnected-orientation-determines-drawing} and \cref{pr:inductive-sets} are trivial. Indeed, the first two vertices in any canonical ordering of $H$ with first vertex $u$ are respectively $u$ and $v$, hence there is a unique canonical ordering that extends $\mathcal D_H$, there is a unique canonical drawing with base edge $(u,v)$, and we have $M^{\pi}_3(z_1)=M^{\tau}_3(z_1)=\{u_1,u_2,u_3\}$, $M^{\pi}_3(z_2)=M^{\tau}_3(z_2)=\{u_2,u_3\}$, and $M^{\pi}_3(z_3)=M^{\tau}_3(z_3)=\{u_2\}$.

Suppose now that $m>3$. Assume that the statements of \cref{cl:biconnected-orientation-determines-drawing} and \cref{pr:inductive-sets} hold if $H$ has $m-1$ vertices. We prove that they also hold if $H$ has $m$ vertices. Let $\ell$ be the index such that $v_{\ell}$ is the same vertex as $u_m$. We observe the following simple facts.

\begin{property} \label{pr:last-elsewhere}
We have $\ell\in \{4,\dots,m\}$.   
\end{property}

\begin{proof}
The first three vertices are the same in any canonical drawing with first vertex $u$, hence $v_1$, $v_2$, and $v_3$ respectively coincide with $u_1$, $u_2$, and $u_3$ and are different from $u_m$.     
\end{proof}

\begin{property} \label{pr:neighbors-before}
All the neighbors of $v_{\ell}$ in $H$ precede $v_{\ell}$ in $\tau$.  
\end{property}
\begin{proof}
Since $u_m$ is the last vertex of $\pi$, all the neighbors of $u_m$ in $H$ precede $u_m$ in $\pi$, hence they also precede $v_{\ell}=u_m$ in $\tau$, as $\pi$ and $\tau$ define the same canonical orientation of $H$.    
\end{proof}

The proof now distinguishes two cases, depending on whether $\ell=m$ or not.

{\bf Case 1: $u_m$ and $v_m$ are the same vertex.} Let $L$ be the $(m-1)$ vertex plane graph obtained from $H$ by removing the vertex $u_m$ and its incident edges. Let $w_1,w_2,\dots,w_{s}$ be the clockwise order of the vertices along the outer face of $H$ and let $w_p,w_{p+1},\dots,w_q$ be the neighbors of $u_m$ in $H$, for some $1\leq p<q\leq s$. Also, let $\lambda$ and $\xi$ be the vertex orderings of $L$ obtained from $\pi$ and $\tau$, respectively, by removing the vertex $u_m$. Finally, let $M^{\lambda}_{m-1}(w_1), M^{\lambda}_{m-1}(w_2), \dots,M^{\lambda}_{m-1}(w_s)$ (let $M^{\xi}_{m-1}(w_1)$, $M^{\xi}_{m-1}(w_2)$, $\dots$, $M^{\xi}_{m-1}(w_s)$) be the sets that are associated to the vertices $w_1,w_2,\dots,w_{s}$ by the FPP algorithm, when applied with canonical ordering $\lambda$ (resp.\ $\xi$). 
We next prove the following.

\begin{lemma} \label{le:one-less-vertex-canonical}
$\lambda$ and $\xi$ are canonical orderings of $L$; furthermore, $\lambda$ and $\xi$ define the same canonical orientation of $L$.    
\end{lemma}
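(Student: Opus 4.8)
The plan is to exploit that in Case~1 we have $u_m=v_m$, so that $\lambda$ and $\xi$ are both vertex orderings of the \emph{same} graph $L=H-u_m$, and in fact $L$ equals the graph induced by the first $m-1$ vertices of $\pi$ (equivalently, of $\tau$). So the first step is to check that $L$ meets the structural hypotheses under which canonical orderings, orientations, and drawings were defined for (possibly non-maximal) biconnected plane graphs. Since $m>3$ we have $m-1\ge 3$; $L$ is biconnected because it coincides with $H_{m-1}$, which is $2$-connected by condition~(CO-1) of $\pi$ at step $m-1$; every internal face of $L$ is a triangle because deleting the outer vertex $u_m$ together with its incident edges merges the outer face of $H$ and the (triangular) internal faces of $H$ incident to $u_m$ into one new outer face, leaving as internal faces of $L$ exactly those internal faces of $H$ that are not incident to $u_m$, all of which are triangles by hypothesis on $H$; and $u$ still immediately precedes $v$ in the counter-clockwise order along the outer boundary of $L$, since by (CO-2) of $\pi$ the neighbors of $u_m$ in $H_{m-1}$ form a subinterval of $C_{m-1}-(u,v)$, so the edge $(u,v)$ and the vertices $u,v$ are undisturbed by the removal.

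Next I would show that $\lambda=(u_1,\dots,u_{m-1})$ is a canonical ordering of $L$, and symmetrically for $\xi=(v_1,\dots,v_{m-1})$. For every $k\le m-1$ the subgraph of $L$ induced by the first $k$ vertices of $\lambda$ is precisely $H_k$, and the cycle bounding its outer face is the same $C_k$; hence conditions~(CO-1) and~(CO-2) for $L$ with respect to $\lambda$ at step $k$ are literally conditions~(CO-1) and~(CO-2) for $H$ with respect to $\pi$ at step $k$, and $u_{k+1}\ne u_m$ for $k\le m-2$. These hold because $\pi$ is a canonical ordering of $H$, so $\lambda$ is a canonical ordering of $L$. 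Applying the identical argument to $\tau$ and $\xi$ (using $v_m=u_m$, so that $\xi$ is obtained from $\tau$ simply by deleting its last element) shows $\xi$ is a canonical ordering of $L$ as well.

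Finally, to see that $\lambda$ and $\xi$ define the same canonical orientation of $L$, I would use that deleting a single element from a linear order preserves the relative order of all remaining elements. Let $(a,b)$ be any edge of $L$; then $a,b\ne u_m$, so $a$ precedes $b$ in $\lambda$ exactly when $a$ precedes $b$ in $\pi$, and $a$ precedes $b$ in $\xi$ exactly when $a$ precedes $b$ in $\tau$. Because $\pi$ and $\tau$ both extend $\mathcal{D}_H$, the edge $(a,b)$ is oriented the same way by $\pi$ and by $\tau$, i.e.\ $a$ precedes $b$ in $\pi$ iff $a$ precedes $b$ in $\tau$. Chaining these equivalences, $a$ precedes $b$ in $\lambda$ iff $a$ precedes $b$ in $\xi$, so $\lambda$ and $\xi$ orient every edge of $L$ identically, which is exactly the claim.

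None of these steps is individually deep; the only one that calls for a genuine (if short) topological argument — and hence the place I would expect to spend the most care — is confirming that $L$ actually qualifies as an input to the generalized definitions, specifically that all internal faces of $L$ are triangles. This rests on the observation that the star of the outer vertex $u_m$ collapses into the outer face upon deletion, so no non-triangular internal face is created; everything else is bookkeeping on restrictions of the orderings $\pi$ and $\tau$.
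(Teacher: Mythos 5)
Your proof is correct and follows essentially the same route as the paper's: identify $L$ with $H_{m-1}$, observe that the prefixes of $\lambda$ and of $\pi$ induce the same subgraphs so that (CO-1)/(CO-2) carry over verbatim, and note that both $\lambda$ and $\xi$ orient the edges of $L$ exactly as $\mathcal D_H$ restricted to $L$ does. You are somewhat more careful than the paper in explicitly verifying the standing hypotheses (that $L$ has all internal faces triangular, which rests on $u_m$ being an outer vertex of $H$, and that $u$ still immediately precedes $v$ on the outer boundary of $L$) -- these checks are needed for the induction to be applicable and are implicitly assumed in the paper's terser argument, so your added care is a plus rather than a deviation.
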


\begin{proof}
First, $L$ is biconnected, as it coincides with the subgraph $H_{m-1}$ of $H$ induced by the first $m-1$ vertices of $\pi$, which is biconnected by Condition (CO-1) of $\pi$. We show that $\lambda$ is a canonical ordering of $L$; the proof that $\xi$ is also a canonical ordering of $L$ is analogous. The first and second vertex of $\lambda$ are $u$ and $v$, since the same is true for $\pi$ and since $m>3$. Also, for $k=3,\dots,m-2$, we have that conditions (CO-1) and (CO-2) hold for the subgraph $L_k$ of $L$ induced by the first $k$ vertices in $\lambda$, given that this coincides with the subgraph $H_k$ of $H$ induced by the first $k$ vertices in $\pi$ and given that $\pi$ is a canonical ordering of $H$. Finally, $\lambda$ and $\xi$ define the same canonical orientation of $L$, since  the orientations of $L$ defined by $\lambda$ and $\xi$ both coincide with the orientation obtained from $\mathcal D_H$ by removing $u_m$ and its incident edges.
\end{proof}

By \cref{le:one-less-vertex-canonical}, we have that $\lambda$ and $\xi$ are canonical orderings of $L$. Let $\Lambda$ and $\Xi$ be the canonical drawings of $L$ corresponding to $\lambda$ and $\xi$, respectively. By induction, $\Lambda$ and $\Xi$ are the same drawing and, for $i=1,2,\dots,s$, we have $M^{\lambda}_{m-1}(w_i)=M^{\xi}_{m-1}(w_i)$. 

The FPP algorithm constructs $\Gamma$ from $\Lambda$ by shifting each vertex in $M^{\lambda}_{m-1}(w_{p+1})$ by one unit to the right, by shifting each vertex in $M^{\lambda}_{m-1}(w_q)$ by one additional unit to the right, and by placing $u_m$ at the intersection point of the line through $w_p$ with slope $+1$, and of the line through $w_q$ with slope $-1$. Also, the FPP algorithm constructs $\Phi$ from $\Xi$ by shifting each vertex in $M^{\xi}_{m-1}(w_{p+1})$ by one unit to the right, by shifting each vertex in $M^{\xi}_{m-1}(w_q)$ by one additional unit to the right, and by placing $u_m$ at the intersection point of the line through $w_p$ with slope $+1$, and of the line through $w_q$ with slope $-1$. Since $\Xi$ coincides with $\Lambda$, since $M^{\xi}_{m-1}(w_{p+1})=M^{\lambda}_{m-1}(w_{p+1})$, and since $M^{\xi}_{m-1}(w_q)=M^{\lambda}_{m-1}(w_q)$, we have that $\Phi$ and $\Gamma$ are the same drawing, as required.

By construction, the FPP algorithm defines: {\bf(i)} $M^{\pi}_{m}(w_i)=M^{\lambda}_{m-1}(w_i)\cup \{u_m\}$, for $i=1,\dots,p$; {\bf(ii)} $M^{\pi}_{m}(u_m)=M^{\lambda}_{m-1}(w_{p+1})\cup \{u_m\}$; {\bf(iii)} $M^{\pi}_{m}(w_i)=M^{\lambda}_{m-1}(w_i)$, for $i=q,\dots,s$; {\bf(iv)} $M^{\tau}_m(w_i)=M^{\xi}_{m-1}(w_i)\cup \{u_m\}$, for $i=1,\dots,p$; {\bf(v)} $M^{\tau}_m(u_m)=M^{\xi}_{m-1}(w_{p+1})\cup \{u_m\}$; and {\bf(vi)} $M^{\tau}_m(w_i)=M^{\xi}_{m-1}(w_i)$, for $i=q,\dots,s$. Since $M^{\xi}_{m-1}(w_i)=M^{\lambda}_{m-1}(w_i)$, for $i=1,\dots,s$, it follows that $M^{\pi}_m(z_i)=M^{\tau}_m(z_i)$, for $i=1,\dots,r$, as required. 

{\bf Case 2: $u_m$ and $v_m$ are not the same vertex.} In this case, by \cref{pr:last-elsewhere}, we have that $u_m$ coincides with $v_{\ell}$, for some $\ell\in\{4,\dots,m-1\}$. Our proof uses a sequence of canonical orderings $\sigma_{\ell+1}$, $\sigma_{\ell+2}$, $\dots$, $\sigma_{m}$ of $H$, where for $j=\ell+1,\dots,m$, the ordering $\sigma_{j}$ is defined as $(v_1$, $\dots$, $v_{\ell-1}$, $v_{\ell+1}$, $\dots$,$v_{j}$,$v_{\ell}$,$v_{j+1},\dots,v_m)$. That is, the order of the vertices of $H$ in $\sigma_j$ coincides with the one in $\tau$, except that $v_{\ell}$ is shifted to the $j$-th position. See \cref{fig:canonical-ordering} for an example. We first prove that $\sigma_{\ell+1},\sigma_{\ell+2},\dots,\sigma_{m}$ are indeed canonical orderings of $H$.

\begin{figure}[t]
    \centering
    \begin{subfigure}{.49\textwidth}
    \centering
    	\includegraphics[page=3, scale =.75]{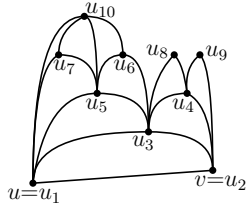}
        \caption{The canonical ordering $\pi$ of $H$.}
    	\label{fig:canonical-ordering-a}
    \end{subfigure}
    \hfill
    \begin{subfigure}{.49\textwidth}
    \centering
    	\includegraphics[page=4, scale =.75]{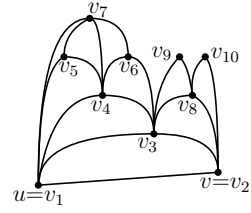}
    	\caption{The canonical ordering $\tau$ of $H$.}
    	\label{fig:canonical-ordering-b}
    \end{subfigure}
\caption{The case in which $u_m$ and $v_m$ are not the same vertex. In this example, we have $\ell=7$. Then $\sigma_7=(v_1,v_2,v_3,v_4,v_5,v_6,v_7,v_8,v_9,v_{10})$, $\sigma_8=(v_1,v_2,v_3,v_4,v_5,v_6,v_8,v_7,v_9,v_{10})$, $\sigma_9=(v_1,v_2,v_3,v_4,v_5,v_6,v_8,v_9,v_7,v_{10})$, and $\sigma_{10}=(v_1,v_2,v_3,v_4,v_5,v_6,v_8,v_9,v_{10},v_7)$.}
\label{fig:canonical-ordering}
\end{figure}

\begin{lemma} \label{le:third-canonical}
For $j=\ell+1,\dots,m$, we have that $\sigma_j$ is a canonical ordering of $H$; furthermore, the canonical orientation of $H$ defined by $\sigma_j$ is $\mathcal D_H$.
\end{lemma}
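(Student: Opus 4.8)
The plan is to prove the lemma by a telescoping argument along the chain $\sigma_{\ell}=\tau,\sigma_{\ell+1},\dots,\sigma_{m}$. For each $j\in\{\ell+1,\dots,m\}$, the ordering $\sigma_j$ is obtained from $\sigma_{j-1}$ by transposing the two consecutive entries in positions $j-1$ and $j$, which are $v_\ell$ (in position $j-1$) and $v_j$ (in position $j$). Since $v_\ell=u_m$ is the last vertex of $\pi$, \cref{pr:neighbors-before} gives that every neighbor of $v_\ell$ in $H$ is among $v_1,\dots,v_{\ell-1}$; in particular $v_\ell v_j\notin E(H)$ for every $j>\ell$. Granting for a moment that each $\sigma_j$ is a canonical ordering of $H$ (proved below), the orientation it defines is $\mathcal D_H$: for an edge $(x,y)$ with $x,y\neq v_\ell$ the relative order of $x$ and $y$ is the same in $\sigma_j$ as in $\tau$, while an edge incident to $v_\ell$ has its other end-vertex in $\{v_1,\dots,v_{\ell-1}\}$, which precedes $v_\ell$ both in $\tau$ and in $\sigma_j$; hence every edge is oriented as in $\tau$, i.e.\ as in $\mathcal D_H$.

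It thus remains to prove, by induction on $j$ (the base case $j=\ell$ being $\tau$ itself, which is a canonical ordering extending $\mathcal D_H$ by hypothesis), that $\sigma_j$ is a canonical ordering of $H$. Write $\rho:=\sigma_{j-1}$, which is a canonical ordering by the inductive hypothesis, set $a:=v_\ell$ and $b:=v_j$, and let $\rho':=\sigma_j$. The subgraph induced by the first $t$ vertices is the same for $\rho$ and $\rho'$ whenever $t\neq j-1$, so conditions (CO-1) and (CO-2) need to be re-established only for the insertion of the vertex in position $j-1$ (index $k=j-2$), for the biconnectivity of the prefix of length $j-1$ (index $k=j-1$), and for the insertion of the vertex in position $j$ (index $k=j-1$). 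Let $A$ be the plane subgraph of $H$ induced by the first $j-2$ vertices (common to $\rho$ and $\rho'$), let $C_A$ be the cycle bounding its outer face, and write $C_A-(u,v)=(w_1=u,\dots,w_r=v)$. By (CO-2) applied to $\rho$ at index $j-2$, the vertex $a$ lies in the outer face of $A$ and its neighbors in $A$ form a subinterval $(w_p,\dots,w_q)$ with $p<q$; consequently the outer path of $A+a$ is $(w_1,\dots,w_p,a,w_q,\dots,w_r)$. By (CO-2) applied to $\rho$ at index $j-1$, the vertex $b$ lies in the outer face of $A+a$ and its neighbors there form a subinterval of at least two vertices of $(w_1,\dots,w_p,a,w_q,\dots,w_r)$; since $ab\notin E(H)$ this subinterval does not contain $a$, and since $a$ is an internal vertex of this path (as $a=v_\ell\neq u,v$), the subinterval lies entirely within $(w_1,\dots,w_p)$ or within $(w_q,\dots,w_r)$. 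Hence the neighbors of $b$ in $A$ coincide with its neighbors in $A+a$ and form a subinterval of at least two vertices of $C_A-(u,v)$; moreover $b$ lies in the outer face of $A$, because the outer face of $A$ contains the outer face of $A+a$. This establishes (CO-2) for the insertion of $b$ into $A$, and the standard fact that adding a vertex to the outer face of a biconnected plane graph so that it is joined to at least two consecutive outer vertices keeps the graph biconnected (cf.\ the proof of \cref{th:canonical-orientation-characterization}) yields (CO-1) for the prefix $A+b$ of length $j-1$.

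Finally we must check that $a$ can be legally inserted into $A+b$. Let $I_b$ be the neighbor interval of $b$, so that $I_b\subseteq(w_1,\dots,w_p)$ or $I_b\subseteq(w_q,\dots,w_r)$; in either case $I_b$ is internally disjoint from $(w_p,\dots,w_q)$. Inserting $b$ into the outer face of $A$ replaces the sub-path $I_b$ of $C_A$ by a path through $b$ and creates bounded faces whose boundaries lie in $\{b\}\cup I_b$; since in the plane graph $H$ the edges of $a$ reach the two or more vertices of $(w_p,\dots,w_q)$ without crossing the edges of $b$, the vertex $a$ must lie in the outer face of $A+b$ and the sub-path $(w_p,\dots,w_q)$ survives intact in the outer cycle of $A+b$. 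As $ab\notin E(H)$, the neighbors of $a$ in $A+b$ are exactly $(w_p,\dots,w_q)$, a subinterval of at least two vertices of the outer cycle of $A+b$, which gives (CO-2) for the insertion of $a$; and (CO-1) for the prefix of length $j$ is inherited from $\rho$, since that prefix is the subgraph induced by the first $j$ vertices of $\rho$ as well. This completes the induction, and together with the first paragraph, the proof of the lemma.

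The step I expect to be the main obstacle is the last one: making rigorous, via the plane embedding of $H$, the claim that reversing the order in which the two non-adjacent vertices $a$ and $b$ are inserted keeps both insertions legal --- specifically that inserting $b$ first does not enclose $a$ and does not disturb the portion of the current outer cycle to which $a$ attaches. This rests on the observation that the two neighbor intervals are internally disjoint sub-paths of $C_A-(u,v)$, which in turn follows from the non-adjacency of $a$ and $b$ together with condition (CO-2); turning this into a clean planarity argument (rather than an appeal to pictures) is where care is needed.
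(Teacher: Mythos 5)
Your proof is correct, but it takes a genuinely different route from the paper's. The paper compares $\sigma_j$ with $\tau$ directly: it observes that for $k=\ell,\dots,j-1$ the prefix graph $H^j_k$ is obtained from $H^{\tau}_{k+1}$ by deleting $v_\ell$ and its incident edges, then transfers (CO-1) and (CO-2) from $\tau$ to $\sigma_j$ in one shot, using that $v_{\ell+1},\dots,v_j$ are non-neighbors of $v_\ell$. You instead telescope along the chain $\sigma_\ell=\tau,\sigma_{\ell+1},\dots,\sigma_m$ and reduce to the single cleanly isolated claim that an adjacent transposition of two non-adjacent vertices in a canonical ordering preserves canonicity (and the induced orientation). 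This is a more modular argument and the underlying swap lemma is a nice reusable fact; the price is that you must carry out a careful planarity analysis of the two successive insertions $A\to A+b\to A+b+a$, which is exactly the step you flag as the main obstacle. Your handling of it is sound: the key points are that $I_b$ avoids $a$ and hence lies entirely on one side of $a$ along $C_A-(u,v)$, that the new internal faces created by inserting $b$ have boundaries in $\{b\}\cup I_b$, that $a$'s edge to $w_q$ (or $w_p$) reaches a vertex outside $\{b\}\cup I_b$ so $a$ cannot be trapped in a new internal face, and that the sub-path $(w_p,\dots,w_q)$ survives on the outer cycle of $A+b$. By contrast, the paper avoids this two-insertion analysis because it never restores $v_\ell$ to an intermediate position: it goes straight to $H^j_k\subset H^{\tau}_{k+1}$, so the ``in the outer face'' part of (CO-2) follows monotonically from the subgraph relation. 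Both proofs rest on \cref{pr:neighbors-before} for the non-adjacency of $v_\ell$ to $v_{\ell+1},\dots,v_m$, and both handle the orientation claim by the same simple observation about relative order.
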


\begin{proof}
First, since $u_m$ coincides with $v_{\ell}$, for some $\ell\in \{4,\dots,m-1\}$, the first and second vertex of $\sigma_j$ are $u$ and $v$, since the same is true for $\tau$. For $k=3,\dots,m$, let $H^{\tau}_k$ and $H^j_k$ be the plane subgraphs induced by the first $k$ vertices of $H$ in $\tau$ and in $\sigma_j$, respectively. We now prove that $H^j_k$ satisfies conditions (CO-1) and (CO-2), for $k=3,\dots,m-1$.

\begin{itemize}
\item {\em Condition (CO-1).} For $k=3,\dots,\ell-1,j,j+1,\dots,m-1$, we have that $H^j_k$ is biconnected, because it coincides with $H^{\tau}_k$, and because $H^{\tau}_k$ satisfies condition (CO-1). We next prove that the graphs $H^j_{\ell}$, $H^j_{\ell+1}$, $\dots$, $H^j_{j-1}$ are biconnected. The vertices $v_{\ell+1},v_{\ell+2},\dots,v_{j}$ are not neighbors of $v_{\ell}$ in $H$, since by \cref{pr:neighbors-before} all the neighbors of $v_{\ell}$ in $H$ precede $v_{\ell}$ in $\tau$ and hence are among $v_1,v_2,\dots,v_{\ell-1}$. Also, for $k=\ell,\ell+1,\dots,j-1$, we have that $H^j_k$ is the graph obtained from $H^{\tau}_{k+1}$ by removing $v_{\ell}$ and its incident edges. Hence, for $k=\ell-1,\ell,\dots,j-2$, the $(k+1)$-th vertex of $\sigma_j$ has at least two neighbors in $H^j_k$, given that it coincides with the $(k+2)$-th vertex of $\tau$, which has at least two neighbors in $H^{\tau}_{k+1}$ since $H^{\tau}_{k+1}$ satisfies condition (CO-2), and given that it is not a neighbor of $v_{\ell}$. This, together with the fact that $H^j_{\ell-1}$ is biconnected, implies the biconnectivity of $H^j_{\ell}$, $H^j_{\ell+1}$, $\dots$, $H^j_{j-1}$. 

%Finally, the graph $H^j_m$ is biconnected as it coincides with $H$, which is biconnected by assumption. 

\item {\em Condition (CO-2).} For $k=3,\dots,\ell-2,j,j+1,\dots,m-1$, the $(k+1)$-th vertex of $\sigma_j$ is in the outer face of $H^j_k$ because it is the same vertex as the $(k+1)$-th vertex of $\tau$, because $H^j_k$ coincides with  $H^{\tau}_k$, and because the $(k+1)$-th vertex of $\tau$ is in the outer face of $H^{\tau}_k$, given that $H^{\tau}_k$ satisfies condition (CO-2). For $k=\ell-1,\ell,\dots,j-2$, the $(k+1)$-th vertex of $\sigma_j$ is in the outer face of $H^j_k$ because it is the same vertex as the $(k+2)$-th vertex of $\tau$, because $H^j_k$ is a subgraph of  $H^{\tau}_{k+1}$, and because the $(k+2)$-th vertex of $\tau$ is in the outer face of $H^{\tau}_{k+1}$, given that $H^{\tau}_{k+1}$ satisfies condition (CO-2). Finally, the $j$-th vertex of $\sigma_j$, that is $v_{\ell}=u_m$, is in the outer face of $H^j_{j-1}$ because it is in the outer face of the plane subgraph $H_{m-1}$ of $H$ induced by the first $m-1$ vertices of $\pi$, given that $H_{m-1}$ satisfies condition (CO-2). 
\end{itemize}

This proves that $\sigma_j$ is a canonical ordering. The fact that the canonical orientation $\mathcal D_j$ defined by $\sigma_j$ is $\mathcal D_H$ follows from the following facts: (i) the canonical orientation defined by $\tau$ is $\mathcal D_H$; (ii) the orderings $\tau$ and $\sigma_j$ coincide on the vertices of $H$ different from $v_{\ell}$ -- this implies that the orientation of every edge that is not incident to $v_{\ell}$ is the same in $\mathcal D_j$ and in $\mathcal D_H$; and (iii) the vertex $v_{\ell}$ follows its neighbors in $H$ both in $\tau$ and in $\sigma_j$ -- this implies that the orientation of every edge incident to $v_{\ell}$ is the same in $\mathcal D_j$ and in $\mathcal D_H$.
\end{proof}

In order to simplify the description, let $\sigma_{\ell}$ coincide with $\tau$. For $j=\ell,\dots,m$, let $\Sigma^j$ be the canonical drawing of $H$ that corresponds to $\sigma_j$. Also, let $M^j_m(z_1)$, $M^j_m(z_2)$, $\dots$, $M^j_m(z_{r})$ be the sets that are associated with the vertices $z_1,z_2,\dots,z_{r}$, respectively, by the FPP algorithm, when applied with canonical ordering $\sigma_j$. In order to prove that $\Gamma$ and $\Phi=\Sigma^{\ell}$ are the same drawing and that $M^{\pi}_m(z_i)=M^{\tau}_m(z_i)$, for $i=1,\dots,r$, it suffices to prove that: 
\begin{enumerate}[\bf (i)]
    \item for $j=\ell,\dots,m-1$, it holds that $\Sigma^j$ and $\Sigma^{j+1}$ are the same drawing and, for $j=\ell,\dots,m-1$ and for $i=1,\dots,r$, it holds that $M^j_m(z_i)=M^{j+1}_m(z_i)$.
    \item $\Sigma^m$ and $\Gamma$ are the same drawing and, for $i=1,\dots,r$, it holds that $M^m_m(z_i)=M^{\pi}_m(z_i)$.
\end{enumerate}
We have that {\bf (ii)} follows by Case~1. Indeed, $\Sigma^m$ and $\Gamma$ define the same canonical orientation of $H$, by \cref{le:third-canonical}, and the last vertex of both $\sigma_m$ and $\pi$ is $v_\ell=u_m$. Hence, it only remains to prove {\bf (i)}. Thus, consider any $j\in \{\ell,\dots,m-1\}$. We need to prove that $\Sigma^j$ and $\Sigma^{j+1}$ are the same drawing and that, for $i=1,\dots,r$, it holds that $M^j_m(z_i)=M^{j+1}_m(z_i)$. 

We introduce some notation. For $k=3,\dots,m$, let $H^j_k$ (let $H^{j+1}_k$) be the plane subgraph of $H$ induced by the first $k$ vertices of $\sigma_j$ (resp.\ of $\sigma_{j+1}$). Also, let $\Sigma^j_k$  (let $\Sigma^{j+1}_k$) be the drawing of $H^j_k$ (resp.\ of $H^{j+1}_k$) that is constructed by the FPP algorithm when applied with canonical ordering $\sigma_j$ (resp.\ $\sigma_{j+1}$), on the way of constructing $\Sigma^j$ (resp.\ $\Sigma^{j+1}$). Furthermore, let $w^k_1=u,w^k_2,\dots,w^k_{r_k}=v$ (let $z^k_1=u,z^k_2,\dots,z^k_{s_k}=v$) be the clockwise order of the vertices along the outer face of $H^j_k$ (resp.\ of $H^{j+1}_k$). Finally, for $i=1,\dots,r_k$ (for $i=1,\dots,s_k$), let $M^j_k(w^k_i)$ (resp.\ let $M^{j+1}_k(z^k_i)$) be the set that is associated to the vertex $w^k_i$ (resp.\ to the vertex $z^k_i$) by the FPP algorithm, when applied with canonical ordering $\sigma_j$ (resp.\ $\sigma_{j+1}$).

The whole reason for introducing the sequence of canonical orderings $\sigma_{\ell}$, $\sigma_{\ell+1}$, $\dots$, $\sigma_{m}$ of $H$ is that canonical orderings that are consecutive in the sequence are very similar to one another. Indeed, $\sigma_j$ and $\sigma_{j+1}$ coincide, except for the $j$-th and $(j+1)$-th vertex, which are swapped. Specifically, the $j$-th and $(j+1)$-th vertex of $\sigma_j$ are $v_{\ell}$ and $v_{j+1}$, respectively, while the $j$-th and $(j+1)$-th vertex of $\sigma_{j+1}$ are $v_{j+1}$ and $v_{\ell}$, respectively. This implies that, for $k=3,\dots,j-1,j+1,\dots,m$, the graphs $H^j_k$ and $H^{j+1}_k$ coincide. Further, since $\sigma_j$ and $\sigma_{j+1}$ coincide on the first $j-1$ vertices, the drawings $\Sigma^j_{j-1}$ and $\Sigma^{j+1}_{j-1}$ of $H^j_{j-1}=H^{j+1}_{j-1}$ coincide and, for $i=1,\dots,r_{j-1}=s_{j-1}$, the set $M^j_{j-1}(w^{j-1}_i)$ coincides with $M^{j+1}_{j-1}(w^{j-1}_i)$. 

Note that $H^j_j$ and $H^{j+1}_j$ are not the same graph, as the vertex set of the former is $V(H^j_{j-1})\cup \{v_{\ell}\}$, while the one of the latter is $V(H^j_{j-1})\cup \{v_{j+1}\}$. Thus, obviously, $\Sigma^j_j$ and $\Sigma^{j+1}_j$ are not the same drawing, and the sets $M^j_j(w^j_i)$ and $M^{j+1}_j(z^j_i)$ associated to the vertices on the boundary of $H^j_j$ and $H^{j+1}_j$, respectively, do not coincide. However, we are going to prove that all the required equalities are recovered at the next step of the FPP algorithm, when $v_{j+1}$ and $v_{\ell}$ are inserted into $H^j_j$ and $H^{j+1}_j$ to form $H^j_{j+1}=H^{j+1}_{j+1}$. 

Recall that the boundary of $H^j_{j-1}=H^{j+1}_{j-1}$ is $w^{j-1}_1,w^{j-1}_2,\dots,w^{j-1}_{r_{j-1}}$. In the discussion that follows, for ease of notation, we drop the apex $^{j-1}$ from these vertices, which are then denoted by $w_1,w_2,\dots,w_{r_{j-1}}$. Since all the neighbors of $v_{\ell}$ precede $v_{\ell}$ in $\tau$, by \cref{pr:neighbors-before}, and since $v_{j+1}$ follows $v_{\ell}$ in $\tau$, it follows that $v_{\ell}$ and $v_{j+1}$ are not neighbors. This, together with condition (CO-2) for $H^j_{j}$ and $H^j_{j+1}$, implies that the neighbors of $v_{\ell}$ in $H^j_{j+1}=H^{j+1}_{j+1}$ are $w_{p},w_{p+1},\dots,w_{q}$, for some $1\leq p<q\leq r_{j-1}$, that the neighbors of $v_{j+1}$ in $H^j_{j+1}=H^{j+1}_{j+1}$ are $w_{p'},w_{p'+1},\dots,w_{q'}$, for some $1\leq p'<q'\leq r_{j-1}$, and that either $q\leq p'$ or $q'\leq p$ (that is, the sequences of neighbors of $v_{\ell}$ and  $v_{j+1}$ along the boundary of $H^j_{j-1}=H^{j+1}_{j-1}$ are disjoint, except for the last vertex of one of them, which might coincide with the first vertex of the other one). Assume that $q\leq p'$, as the case $q'\leq p$ is symmetric. Then the graphs $H^j_j$, $H^{j+1}_j$, and $H^j_{j+1}=H^{j+1}_{j+1}$ have the following boundaries.
\begin{itemize}
\item $H^j_j$ has boundary $w_1,\dots,w_p,v_{\ell},w_q,\dots,w_{r_{j-1}}$;
\item $H^{j+1}_j$ has boundary $w_1,\dots,w_{p'},v_{j+1},w_{q'},\dots,w_{r_{j-1}}$; and
\item $H^j_{j+1}=H^{j+1}_{j+1}$ has boundary $w_1$, $\dots$, $w_p$, $v_{\ell}$, $w_q$, $\dots$, $w_{p'}$, $v_{j+1}$, $w_{q'}$, $\dots$, $w_{r_{j-1}}$.
\end{itemize}
Since $M^j_{j-1}(w_i)=M^{j+1}_{j-1}(w_i)$, for $i=1,\dots,r_{j-1}$, the FPP algorithm defines: 
\begin{enumerate}[(i)]
    \item $M^j_j(w_i)=M^j_{j-1}(w_i)\cup \{v_{\ell}\}$, for $i=1,\dots,p$;
    \item $M^j_{j}(v_{\ell})=M^j_{j-1}(w_{p+1})\cup \{v_{\ell}\}$; and
    \item $M^j_{j}(w_i)=M^j_{j-1}(w_i)$, for $i=q,\dots,r_{j-1}$.
    \item $M^{j+1}_{j}(w_i)=M^j_{j-1}(w_i)\cup \{v_{j+1}\}$, for $i=1,\dots,p'$;
    \item $M^{j+1}_{j}(v_{j+1})=M^j_{j-1}(w_{p'+1})\cup \{v_{j+1}\}$; and
    \item $M^{j+1}_{j}(w_i)=M^j_{j-1}(w_i)$, for $i=q',\dots,r_{j-1}$.
    \item $M^j_{j+1}(w_i)=M^j_{j}(w_i)\cup \{v_{j+1}\}$, for $i=1,\dots,p$;
    \item $M^j_{j+1}(v_{\ell})=M^j_{j}(v_{\ell})\cup \{v_{j+1}\}$;
    \item $M^j_{j+1}(w_i)=M^j_{j}(w_i)\cup \{v_{j+1}\}$, for $i=q,\dots,p'$;
    \item $M^j_{j+1}(v_{j+1})=M^j_{j}(w_{p'+1})\cup \{v_{j+1}\}$; and
    \item $M^j_{j+1}(w_i)=M^j_{j}(w_i)$, for $i=q',\dots,r_{j-1}$.
    \item $M^{j+1}_{j+1}(w_i)=M^{j+1}_{j}(w_i)\cup \{v_{\ell}\}$, for $i=1,\dots,p$;
    \item $M^{j+1}_{j+1}(v_{\ell})=M^{j+1}_{j}(w_{p+1})\cup \{v_{\ell}\}$;
    \item $M^{j+1}_{j+1}(w_i)=M^{j+1}_{j}(w_i)$, for $i=q,\dots,p'$;
    \item $M^{j+1}_{j+1}(v_{j+1})=M^{j+1}_{j}(v_{j+1})$; and
    \item $M^{j+1}_{j+1}(w_i)=M^{j+1}_{j}(w_i)$, for $i=q',\dots,r_{j-1}$.
\end{enumerate}
We now prove that the required equalities for the sets associated by the FPP algorithm to the vertices along the boundary of $H^j_{j+1}=H^{j+1}_{j+1}$ are indeed satisfied. 

\begin{itemize}
\item For $i=1,\dots,p$, by equalities (vii) and (i), we have $M^j_{j+1}(w_i)=M^j_{j-1}(w_i)\cup \{v_{\ell},v_{j+1}\}$. By equalities (xii) and (iv), we have $M^{j+1}_{j+1}(w_i)=M^j_{j-1}(w_i)\cup \{v_{\ell},v_{j+1}\}$. Hence, $M^j_{j+1}(w_i)=M^{j+1}_{j+1}(w_i)$, as required.
\item By equalities (viii) and (ii), we have $M^j_{j+1}(v_{\ell})=M^j_{j-1}(w_{p+1})\cup \{v_{\ell},v_{j+1}\}$. By equalities (xiii) and (iv), we have $M^{j+1}_{j+1}(v_{\ell})=M^j_{j-1}(w_{p+1})\cup \{v_{\ell},v_{j+1}\}$. Hence, $M^j_{j+1}(v_{\ell})=M^{j+1}_{j+1}(v_{\ell})$, as required.
\item For $i=q,\dots,p'$, by equalities (ix) and (iii), we have $M^j_{j+1}(w_i)=M^j_{j-1}(w_i)\cup \{v_{j+1}\}$. By equalities (xiv) and (iv), we have $M^{j+1}_{j+1}(w_i)=M^j_{j-1}(w_i)\cup \{v_{j+1}\}$. Hence, $M^j_{j+1}(w_i)=M^{j+1}_{j+1}(w_i)$, as required.
\item By equalities (x) and (iii), we have $M^j_{j+1}(v_{j+1})=M^j_{j-1}(w_{p'+1})\cup \{v_{j+1}\}$. By equalities (xv) and (v), we have $M^{j+1}_{j+1}(v_{j+1})=M^j_{j-1}(w_{p'+1})\cup \{v_{j+1}\}$. Hence, $M^j_{j+1}(v_{j+1})=M^{j+1}_{j+1}(v_{j+1})$, as required.
\item Finally, for $i=q'+1,\dots,r_{j-1}$,  by equalities (xi) and (iii), we have $M^j_{j+1}(w_i)=M^j_{j-1}(w_i)$. By equalities (xvi) and (vi), we have $M^{j+1}_{j+1}(w_i)=M^j_{j-1}(w_i)$. Hence, $M^j_{j+1}(w_i)=M^{j+1}_{j+1}(w_i)$, as required.
\end{itemize}
In order to prove that $\Sigma^j_{j+1}$  and $\Sigma^{j+1}_{j+1}$ are the same drawing, we first observe that all the vertices of $H^j_{j-1}$ have in $\Sigma^j_{j+1}$  and $\Sigma^{j+1}_{j+1}$ the same $y$-coordinate as in $\Sigma^j_{j-1}$ and in $\Sigma^{j+1}_{j-1}$, respectively. This is because the $j$-th and $(j+1)$-th step of the FPP algorithm shift the vertices in the drawing of $H^j_{j-1}$ only horizontally and because $\Sigma^j_{j-1}$ and $\Sigma^{j+1}_{j-1}$ are the same drawing.

In order to deal with the $x$-coordinates of the vertices of $H^j_{j-1}$ in $\Sigma^j_{j+1}$  and $\Sigma^{j+1}_{j+1}$, we partition the vertices of $H^j_{j-1}$ into five sets $V_0,V_1,V_2,V_3, V_4$ defined as follows: 
\begin{itemize}
\item $V_0:=M^j_{j-1}(w_1)\setminus M^j_{j-1}(w_{p+1})$,
\item $V_1:=M^j_{j-1}(w_{p+1})\setminus M^j_{j-1}(w_q)$,
\item $V_2:=M^j_{j-1}(w_q)\setminus M^j_{j-1}(w_{p'+1})$,
\item $V_3:=M^j_{j-1}(w_{p'+1})\setminus M^j_{j-1}(w_{q'})$, and
\item $V_4:=M^j_{j-1}(w_{q'})$.
\end{itemize}
We claim that, both in $\Sigma^j_{j+1}$ and in $\Sigma^{j+1}_{j+1}$, for $i=0,\dots,4$, the $x$-coordinate of a vertex in $V_i$ coincides with the one in $\Sigma^j_{j-1}=\Sigma^{j+1}_{j-1}$ plus $i$. This, together with the fact that $\Sigma^j_{j-1}$ and $\Sigma^{j+1}_{j-1}$ are the same drawing, implies that every vertex of $H^j_{j-1}$ has the same $x$-coordinate in $\Sigma^j_{j+1}$  and in $\Sigma^{j+1}_{j+1}$. 

By the FPP algorithm, the $x$-coordinate of every vertex $z$ of $H^j_{j-1}$ in $\Sigma^j_{j+1}$ is the same as in $\Sigma^j_{j-1}$, plus one unit for each set $z$ belongs to among $M^j_{j-1}(w_{p+1})$, $M^j_{j-1}(w_q)$, $M^j_{j}(w_{p'+1})$, and $M^j_{j}(w_{q'})$. By equality (iii), the last two sets coincide with $M^j_{j-1}(w_{p'+1})$ and $M^j_{j-1}(w_{q'})$, respectively. Analogously, the $x$-coordinate of every vertex $z$ in $\Sigma^{j+1}_{j+1}$ is the same as in $\Sigma^j_{j-1}$, plus one unit for each set $z$ belongs to among  $M^{j+1}_{j-1}(w_{p'+1})$, $M^{j+1}_{j-1}(w_{q'})$, $M^{j+1}_{j}(w_{p+1})$, and $M^{j+1}_{j}(w_{q})$. The first two sets coincide with $M^j_{j-1}(w_{p'+1})$ and $M^j_{j-1}(w_{q'})$, respectively. Further, by equality (iv), the last two sets coincide with $M^j_{j-1}(w_{p+1})\cup \{v_{j+1}\}$ and $M^j_{j-1}(w_{q})\cup \{v_{j+1}\}$, respectively.
\begin{itemize}
    \item First, any vertex $z\in V_0$ belongs to neither of the sets $M^j_{j-1}(w_{p+1})\cup \{v_{j+1}\}$, $M^j_{j-1}(w_{q})\cup \{v_{j+1}\}$, $M^j_{j-1}(w_{p'+1})$, $M^j_{j-1}(w_{q'})$, given that $v_{j+1}$ does not belong to $H^j_{j-1}$, given that $V_0\cap M^j_{j-1}(w_{p+1})=\emptyset$, by definition, and given that $M^j_{j-1}(w_{p+1})\supseteq M^j_{j-1}(w_{q})\supset M^j_{j-1}(w_{p'+1})\supseteq M^j_{j-1}(w_{q'})$. Hence, the $x$-coordinate of $z$ in both $\Sigma^j_{j+1}$ and $\Sigma^{j+1}_{j+1}$ coincides with the one in $\Sigma^j_{j-1}=\Sigma^{j+1}_{j-1}$.
    \item Second, any vertex $z\in V_1$ belongs to the set $M^j_{j-1}(w_{p+1})$, by definition, and to neither of the sets $M^j_{j-1}(w_{q})\cup \{v_{j+1}\}$, $M^j_{j-1}(w_{p'+1})$, $M^j_{j-1}(w_{q'})$, given that $v_{j+1}$ does not belong to $H^j_{j-1}$, given that $V_1\cap M^j_{j-1}(w_{q})=\emptyset$, by definition, and given that $M^j_{j-1}(w_{q})\supset M^j_{j-1}(w_{p'+1})\supseteq M^j_{j-1}(w_{q'})$. Hence, the $x$-coordinate of $z$ in both $\Sigma^j_{j+1}$ and $\Sigma^{j+1}_{j+1}$ coincides with the one in $\Sigma^j_{j-1}=\Sigma^{j+1}_{j-1}$ plus one.   
    \item Third, any vertex $z\in V_2$ belongs to the sets $M^j_{j-1}(w_{p+1})$ and $M^j_{j-1}(w_{q})$, by definition and since $M^j_{j-1}(w_{p+1})\supseteq M^j_{j-1}(w_{q})$, and to neither of the sets $M^j_{j-1}(w_{p'+1})$ and $M^j_{j-1}(w_{q'})$, given that $V_2\cap M^j_{j-1}(w_{p'+1})=\emptyset$, by definition, and given that $M^j_{j-1}(w_{p'+1})\supseteq M^j_{j-1}(w_{q'})$. Hence, the $x$-coordinate of $z$ in both $\Sigma^j_{j+1}$ and $\Sigma^{j+1}_{j+1}$ coincides with the one in $\Sigma^j_{j-1}=\Sigma^{j+1}_{j-1}$ plus two. 
    \item Fourth, any vertex $z\in V_3$ belongs to the sets $M^j_{j-1}(w_{p+1})$, $M^j_{j-1}(w_{q})$, and $M^j_{j-1}(w_{p'+1})$, by definition and since $M^j_{j-1}(w_{p+1})\supseteq M^j_{j-1}(w_{q})\supset M^j_{j-1}(w_{p'+1})$, and does not belong to $M^j_{j-1}(w_{q'})$, given that $V_3\cap M^j_{j-1}(w_{q'})=\emptyset$, by definition. Hence, the $x$-coordinate of $z$ in both $\Sigma^j_{j+1}$ and $\Sigma^{j+1}_{j+1}$ coincides with the one in $\Sigma^j_{j-1}=\Sigma^{j+1}_{j-1}$ plus three. 
    \item Finally, any vertex $z\in V_4$ belongs to all of the sets $M^j_{j-1}(w_{p+1})$, $M^j_{j-1}(w_{q})$, $M^j_{j-1}(w_{p'+1})$, and $M^j_{j-1}(w_{q'})$, by definition and since $M^j_{j-1}(w_{p+1})\supseteq M^j_{j-1}(w_{q})\supset M^j_{j-1}(w_{p'+1})\supseteq M^j_{j-1}(w_{q'})$. Hence, the $x$-coordinate of $z$ in both $\Sigma^j_{j+1}$ and $\Sigma^{j+1}_{j+1}$ coincides with the one in $\Sigma^j_{j-1}=\Sigma^{j+1}_{j-1}$ plus four.  
\end{itemize}

It remains to prove that $v_{\ell}$ and $v_{j+1}$ have the same coordinates in $\Sigma^j_{j+1}$  and $\Sigma^{j+1}_{j+1}$. Denote by $\Omega(z)$ the position of a vertex $z$ in a drawing $\Omega$.

\begin{itemize}
    \item By construction, $\Sigma^j_j(v_{\ell})$ coincides with the intersection point of the line with slope $+1$ through $\Sigma^j_{j-1}(w_{p})$ and the line with slope $-1$ through the point two units to the right of $\Sigma^j_{j-1}(w_{q})$. Since $v_{\ell}$ belongs neither to $M^j_{j}(w_{p'+1})$ nor to $M^j_{j}(w_{q'})$, as by equality (iii) such sets coincide with $M^j_{j-1}(w_{p'+1})$ and $M^j_{j-1}(w_{q'})$, respectively, it follows that $\Sigma^j_{j+1}(v_{\ell})=\Sigma^j_j(v_{\ell})$.
    
    Note that $\Sigma^{j+1}_j(w_{p})=\Sigma^{j+1}_{j-1}(w_{p})=\Sigma^j_{j-1}(w_{p})$; indeed, the first equality holds true because $w_{p}$ belongs neither to $M^{j+1}_{j-1}(w_{p'+1})$ nor to $M^{j+1}_{j-1}(w_{q'})$, and the second equality holds true since $\Sigma^j_{j-1}=\Sigma^{j+1}_{j-1}$. Analogously, $\Sigma^{j+1}_j(w_{q})=\Sigma^{j+1}_{j-1}(w_{q})=\Sigma^j_{j-1}(w_{q})$. Hence, $\Sigma^{j+1}_{j+1}(v_{\ell})$ coincides with the intersection point of the line with slope $+1$ through $\Sigma^j_{j-1}(w_{p})$ and the line with slope $-1$ through the point two units to the right of $\Sigma^j_{j-1}(w_{q})$, thus $v_{\ell}$ has the same coordinates in $\Sigma^j_{j+1}$  and $\Sigma^{j+1}_{j+1}$. 

    \item By construction and since $\Sigma^j_{j-1}=\Sigma^{j+1}_{j-1}$, we have that $\Sigma^{j+1}_j(v_{j+1})$ coincides with the intersection point $p$ of the line with slope $+1$ through $\Sigma^j_{j-1}(w_{p'})$ and the line with slope $-1$ through the point two units to the right of $\Sigma^j_{j-1}(w_{q'})$. Since $v_{j+1}$ belongs both to $M^{j+1}_{j}(w_{p+1})$ and to $M^{j+1}_{j}(w_{q})$, as by equality (iv) such sets coincide with $M^j_{j-1}(w_{p+1})\cup \{v_{j+1}\}$ and with $M^j_{j-1}(w_{q})\cup \{v_{j+1}\}$, respectively, it follows that $\Sigma^{j+1}_{j+1}(v_{j+1})$ coincides with the point two units to the right of $p$.

    Note that $\Sigma^j_j(w_{p'})$ coincides with the point two units to the right of $\Sigma^j_{j-1}(w_{p'})$. Indeed, $w_{p'}$ belongs both to $M^j_{j-1}(w_{p+1})$ and to $M^j_{j-1}(w_{q})$. Analogously, $\Sigma^j_j(w_{q'})$ coincides with the point two units to the right of $\Sigma^j_{j-1}(w_{q'})$. Hence, $\Sigma^j_{j+1}(v_{j+1})$ coincides with the intersection point of the line with slope $+1$ through the point two units to the right of $\Sigma^j_{j-1}(w_{p'})$ and the line with slope $-1$ through the point two units to the right of $\Sigma^j_{j-1}(w_{q'})$, thus coincides with the point two units to the right of $p$. Hence, $v_{j+1}$ has the same coordinates in $\Sigma^j_{j+1}$  and $\Sigma^{j+1}_{j+1}$.     
\end{itemize} 

This concludes the proof that $\Sigma^j_{j+1}$  and $\Sigma^{j+1}_{j+1}$ are the same drawing.

Finally, since $\Sigma^j_{j+1}$  and $\Sigma^{j+1}_{j+1}$ are the same drawing, since the sets $M^j_{j+1}(w^{j+1}_i)$ and $M^{j+1}_{j+1}(z^{j+1}_i=w^{j+1}_i)$ coincide, for $i=1,\dots,r_{j+1}=s_{j+1}$, and since $\sigma_j$ and $\sigma_{j+1}$ coincide on the last $m-(j+1)$ vertices, it follows that $\Sigma^j_k$ and $\Sigma^{j+1}_k$ are the same drawing, for $k=j+2,\dots,m$, and that the sets $M^j_{k}(w^k_i)$ and $M^{j+1}_{k}(z^k_i=w^k_i)$ coincide, for $k=j+2,\dots,m$ and for $i=1,\dots,r_k=s_k$. Since the drawings $\Sigma^j_m$ and $\Sigma^{j+1}_m$ coincide with $\Sigma^j$ and $\Sigma^{j+1}$, respectively, it follows that $\Sigma^j$ and $\Sigma^{j+1}$ are the same drawing, as required. Also, since for $i=1,\dots,r=r_m=s_m$, the set $M^j_m(z_i)$ coincides with $M^j_m(w^m_i)$ and the set $M^{j+1}_m(z_i)$ coincides with $M^{j+1}_m(z^m_i)$, it follows that $M^j_m(z_i)=M^{j+1}_m(z_i)$, as required. 
This completes the induction and hence the proof of \cref{cl:biconnected-orientation-determines-drawing} and \cref{pr:inductive-sets}. It follows that the function $f$ is surjective, which concludes the proof of \cref{th:bijection-FPP}. 

\cref{th:canonical-ordering-plane-uv} and \cref{th:bijection-FPP} imply the following. 

\begin{lemma} \label{th:FPP-enumeration-plane-uv}
Let $G$ be an $n$-vertex maximal plane graph and let $(u,v,z)$ be the cycle delimiting its outer face, where $u$, $v$, and $z$ appear in this counter-clockwise order along the cycle. There exists an algorithm with $\bigoh(n)$ setup time and $\bigoh(n)$ space usage that lists all canonical drawings of $G$ with base edge $(u,v)$ with $\bigoh(n)$ delay.
\end{lemma}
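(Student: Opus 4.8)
The idea is to drive the enumeration by canonical orientations and read off the corresponding drawings. Concretely, run the algorithm underlying \cref{th:canonical-orientation-plane-uv} (the same engine exploited in the proof of \cref{th:canonical-ordering-plane-uv}), which lists all canonical orientations of $G$ with first vertex $u$ with $\bigoh(n)$ setup time, $\bigoh(n)$ space, and $\bigoh(n)$ delay. Each time this algorithm outputs a canonical orientation $\mathcal D$, we interrupt it, invoke the $\bigoh(n)$-time procedure guaranteed by the second part of \cref{th:bijection-FPP} to construct the canonical drawing $f(\mathcal D)$ of $G$ with base edge $(u,v)$, output $f(\mathcal D)$, and then resume the orientation enumeration where it left off. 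If one prefers to spell out this construction step: from $\mathcal D$ compute any topological sorting $\tau$ in $\bigoh(n)$ time~\cite{DBLP:journals/cacm/Kahn62}; by \cref{le:from-orientation-to-ordering}, $\tau$ is a canonical ordering of $G$ with first vertex $u$ that extends $\mathcal D$; run the FPP algorithm on $(G,\tau)$ in $\bigoh(n)$ time~\cite{chrobak1995linear}; by \cref{cl:orientation-determines-drawing} the drawing so obtained depends only on $\mathcal D$, hence it is exactly $f(\mathcal D)$.

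\textbf{Correctness.} By \cref{th:canonical-orientation-plane-uv}, the orientation enumeration produces every canonical orientation of $G$ with first vertex $u$, each exactly once, and nothing else. By \cref{th:bijection-FPP}, $f$ is a bijection from the canonical orientations of $G$ with first vertex $u$ onto the canonical drawings of $G$ with base edge $(u,v)$. Therefore the composed algorithm outputs every canonical drawing of $G$ with base edge $(u,v)$, each exactly once, and nothing else, as required by the notion of an enumeration algorithm. The only matching to verify is that the domain of $f$ coincides with the set listed by \cref{th:canonical-orientation-plane-uv} and that its codomain is the set we want to enumerate; both identifications are immediate from the definitions of canonical orientation with first vertex $u$ and of canonical drawing with base edge $(u,v)$.

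\textbf{Resource bounds.} Setup time is inherited from \cref{th:canonical-orientation-plane-uv}, hence $\bigoh(n)$. Space usage is $\bigoh(n)$ for the orientation enumeration, plus $\bigoh(n)$ for the single drawing currently being built and output, hence $\bigoh(n)$ overall. For the delay, the orientation enumeration advances from one canonical orientation to the next in $\bigoh(n)$ time, and building and outputting the corresponding drawing takes an additional $\bigoh(n)$ time; the time before the first output is $\bigoh(n)$ (setup) $+\,\bigoh(n)$ (first orientation) $+\,\bigoh(n)$ (its drawing). Thus the worst-case delay is $\bigoh(n)$.

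\textbf{Main obstacle.} The subtle point is that the drawing-construction step must not corrupt the internal state on which the orientation enumeration relies between two consecutive outputs, namely the data structures of the {\sc ICE} algorithm encoding the current (partially contracted) graph. This is handled by observing that at the moment $\mathcal D$ is output we already have the orientation of all edges of $G$ available in the array {\sc EDGES}; we run the FPP step (topological sort $+$ FPP algorithm) on this read-only snapshot, or on a fresh $\bigoh(n)$-size copy of it, so the enumeration's data structures are left untouched and the enumeration resumes correctly. Everything else is a routine composition argument; no further difficulty is expected.

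\begin{proof}

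\end{proof}

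\begin{theorem} \label{th:FPP-enumeration-plane}
Let $G$ be an $n$-vertex maximal plane (resp.\ planar) graph. There exists an algorithm with $\bigoh(n)$ setup time and $\bigoh(n)$ space usage that lists all canonical drawings of $G$ with $\bigoh(n)$ delay.
\end{theorem}
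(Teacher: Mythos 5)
Your proposal is correct and matches the paper's intended argument: enumerate canonical orientations with first vertex $u$ via the {\sc ICE}-based algorithm, and for each orientation apply the $\bigoh(n)$-time construction from \cref{th:bijection-FPP} to output the corresponding canonical drawing with base edge $(u,v)$; the bijectivity of $f$ guarantees exhaustiveness and no duplicates, and the delay/space bounds compose as you describe. One small point worth noting: you drive the enumeration with \cref{th:canonical-orientation-plane-uv} (orientation enumeration), whereas the paper literally cites \cref{th:canonical-ordering-plane-uv} (ordering enumeration) together with \cref{th:bijection-FPP}; your choice is the correct one, since enumerating canonical orderings and running FPP on each would list every drawing once per ordering that extends the corresponding orientation, hence with duplicates --- so the paper's citation appears to be a slip, and your formulation is what the proof actually requires. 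Your remark about running the FPP step on a read-only snapshot of the {\sc EDGES} array (so as not to perturb the {\sc ICE} state) is also a sensible and correct observation that the paper leaves implicit.
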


\begin{theorem} \label{th:FPP-enumeration-plane}
Let $G$ be an $n$-vertex maximal plane (planar) graph. There exists an algorithm $\mathcal C_1$ (resp.\ $\mathcal C_2$) with $\bigoh(n)$ setup time and $\bigoh(n)$ space usage that lists all canonical drawings of $G$ with $\bigoh(n)$ delay.
\end{theorem}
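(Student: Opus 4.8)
The plan is to obtain \cref{th:FPP-enumeration-plane} from \cref{th:FPP-enumeration-plane-uv} exactly as \cref{th:canonical-orientation-plane-and-planar} is obtained from \cref{th:canonical-orientation-plane-uv} and \cref{th:canonical-ordering-plane} from \cref{th:canonical-ordering-plane-uv}. In the plane case, algorithm $\mathcal C_1$ runs the algorithm of \cref{th:FPP-enumeration-plane-uv} three times, once for each choice of the first vertex among the three vertices incident to the outer face of $G$ (equivalently, once for each of the three outer edges taken as base edge), and concatenates the three output streams; since a canonical drawing of $G$ has as base edge one of those three outer edges, this produces all canonical drawings of $G$. Each run meets the $\bigoh(n)$ bounds on setup time, space, and delay, and there are only three runs, so $\mathcal C_1$ meets the same bounds (at the seam between two consecutive runs the elapsed time is the sum of an $\bigoh(n)$ tail of one run, an $\bigoh(n)$ re-initialization, and an $\bigoh(n)$ prefix of the next). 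The step that requires a proof is that no drawing is listed twice. For this I would use the elementary fact that the FPP algorithm places the first vertex $v_1$ of the canonical ordering at $(0,0)$ and never moves it afterwards: every shift displaces only vertices belonging to some set $M_k(w_i)$ with $i \ge 2$, and one checks by induction on $k$ that $v_1 \notin M_k(w_i)$ for all $i \ge 2$ (true for $k=3$, and preserved because the second vertex on the outer boundary of $G_{k+1}$ is either $w_2$ or $v_{k+1}$, with $M$-set contained in $M_k(w_2) \cup \{v_{k+1}\}$, while $M_k(w_i) \subseteq M_k(w_2)$ for $i \ge 2$). Consequently a canonical drawing with base edge $(x,y)$ is the only one among the three families that places $x$ at $(0,0)$, and since $u$, $v$, $z$ are distinct vertices with pairwise distinct coordinates, the three families are disjoint.

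In the planar case, algorithm $\mathcal C_2$ runs $\mathcal C_1$ on each of the $4n-8$ maximal plane graphs isomorphic to $G$ (a facial cycle chosen among the $2n-4$ faces, with its three vertices in one of the two counter-clockwise orders), concatenating the $4n-8$ output streams; as for algorithm $\mathcal A_2$ in \cref{th:canonical-orientation-plane-and-planar}, the $\bigoh(n)$ bounds survive because there are $\bigoh(n)$ runs, each within the bounds, and each $\bigoh(n)$-time change of the outer face is absorbed into the delay. For the no-duplicates property I would observe that a canonical drawing $\Gamma$ determines the triple consisting of (a) its outer face --- the unbounded region, bounded by the triangle $v_1v_2v_n$; (b) the counter-clockwise cyclic order of the three vertices on it; and (c) which of them lies at the origin, namely $v_1$, by the property above. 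Two runs of $\mathcal C_1$ arising from distinct plane graphs differ in (a) or (b), and the three runs inside a single application of $\mathcal C_1$ agree on (a) and (b) but differ in (c); hence any two of the $3(4n-8)$ runs output drawings with distinct triples, so every canonical drawing of $G$ is listed exactly once.

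I do not anticipate a substantial obstacle: the content already lies in \cref{th:FPP-enumeration-plane-uv}, which rests on \cref{th:bijection-FPP} and \cref{th:canonical-ordering-plane-uv}, and what remains is essentially organizational. The only step that genuinely needs attention is the guarantee that no canonical drawing is produced twice over the several invocations, and the clean way to secure it is via the drawing invariant described above --- the outer face, the counter-clockwise order of its vertices, and the vertex placed at the origin. For the degenerate value $n=3$ there are only a bounded number of canonical drawings, so one can simply list them directly and avoid any subtlety in the count of isomorphic maximal plane graphs.
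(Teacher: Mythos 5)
Your proposal is correct and takes essentially the same approach as the paper's proof: run the algorithm of \cref{th:FPP-enumeration-plane-uv} three times for $\mathcal C_1$ (once per outer edge as base edge) and run $\mathcal C_1$ on the $4n-8$ isomorphic maximal plane graphs for $\mathcal C_2$. The paper's proof merely notes that drawings from different applications ``differ on the three vertices incident to the outer face, or on their coordinates''; your inductive observation that $v_1 \notin M_k(w_i)$ for $i\geq 2$, hence $v_1$ is never shifted from $(0,0)$, makes the deduplication argument rigorous and is a worthwhile refinement of the paper's terse remark.
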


\begin{proof}
Algorithm $\mathcal C_1$ uses the algorithm for the proof of \cref{th:FPP-enumeration-plane-uv} applied three times, namely once for each choice of the base edge among the three edges incident to the outer face of the given maximal plane graph. Algorithm $\mathcal C_2$ uses algorithm $\mathcal C_1$ applied $4n-8$ times, since there are $4n-8$ maximal plane graphs which are isomorphic to a given maximal plane graph (see the proof of \cref{th:canonical-orientation-plane-and-planar}). Note that any two canonical drawings produced by different applications of algorithm $\mathcal C_1$ differ on the three vertices incident to the outer face, or on their coordinates in the drawing. 
\end{proof}

% ---------------------------------------------

\section{Enumeration of Schnyder Woods and Schnyder Drawings}\label{se:schnyder}

In this section, we show how the enumeration algorithm for canonical orientations from \cref{se:canonical-orientation} can be used in order to provide efficient algorithms for the enumeration of Schnyder woods and Schnyder drawings. We start with the following theorems.

\begin{theorem} \label{th:schnyder-plane}
	Let $G$ be an $n$-vertex maximal plane (planar) graph. There exists an algorithm $\mathcal M_1$ (resp.\ $\mathcal M_2$) with $\bigoh(n)$ setup time and $\bigoh(n)$ space usage that lists all Schnyder woods of $G$ with $\bigoh(n)$ delay.
\end{theorem}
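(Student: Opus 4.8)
The plan is to obtain the Schnyder-wood enumerators by post-processing the canonical-orientation enumerators of \cref{th:canonical-orientation-plane-uv,th:canonical-orientation-plane-and-planar} through the bijection $\Psi$ between canonical orientations and Schnyder woods of~\cite[Theorem~3.3]{DBLP:journals/dm/FraysseixM01}, in the same spirit in which $\mathcal B_1,\mathcal B_2$ (canonical orderings) and $\mathcal C_1,\mathcal C_2$ (canonical drawings) were built in \cref{se:canonical-drawings}. First I would prove an intermediate lemma: for a maximal plane graph $G$ with outer cycle $(u,v,z)$ (counter-clockwise), the first-vertex-$u$ canonical-orientation enumerator of \cref{th:canonical-orientation-plane-uv}, composed with $\Psi$, lists with $\bigoh(n)$ setup time, $\bigoh(n)$ space usage, and $\bigoh(n)$ delay all the Schnyder woods of $G$; here the roots of the trees of colour $1$, $2$, $3$ are $u$, $v$, $z$, which by \cref{th:canonical-orientation-characterization} are respectively the source, the remaining outer vertex, and the sink of a first-vertex-$u$ canonical orientation. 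Algorithm $\mathcal M_1$ is then obtained from this lemma exactly as $\mathcal B_1$ is obtained from \cref{th:canonical-ordering-plane-uv}, and $\mathcal M_2$ by running $\mathcal M_1$ on each of the $4n-8$ maximal plane graphs isomorphic to the given maximal planar graph, exactly as $\mathcal B_2$ is obtained from $\mathcal B_1$.

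Correctness of the lemma is immediate once $\Psi$ is known to be a bijection and \cref{th:canonical-orientation-plane-uv} to list each canonical orientation exactly once; the quantitative point to verify is that $\Psi(\mathcal D)$ can be computed in $\bigoh(n)$ time from a canonical orientation $\mathcal D$ together with the rotation system of $G$, so that the per-output overhead does not degrade the delay. This holds because the colour and the orientation of each internal edge of $\Psi(\mathcal D)$ are fixed by a purely local rule at its endpoints: by \cref{obs:bimodality}, around each internal vertex $v$ the incoming and the outgoing edges each form a contiguous arc, and the three outgoing edges of $v$ together with the incoming colour-sectors prescribed by condition (S-1) are read off directly from the cyclic order at $v$; hence a single traversal of the incidence lists of $G$, performing $\bigoh(1)$ work per incidence, produces $\Psi(\mathcal D)$. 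Adding this $\bigoh(n)$ cost per output to the $\bigoh(n)$ setup time, space usage, and delay of the canonical-orientation enumerator keeps all three in $\bigoh(n)$.

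It remains to certify that neither $\mathcal M_1$ nor $\mathcal M_2$ outputs a Schnyder wood twice. For $\mathcal M_2$, two runs on plane graphs that are not equal as plane graphs yield, via $\Psi^{-1}$, canonical orientations differing on the source, on the sink, or on the third outer vertex (see the proof of \cref{th:canonical-orientation-plane-and-planar}); since $\Psi$ carries these three vertices to the roots of colours $1$, $3$, and $2$, the corresponding Schnyder woods differ on the colour-to-outer-vertex assignment, and an analogous remark handles any distinct runs internal to $\mathcal M_1$. Together with the fact that each individual run lists its Schnyder woods exactly once, this gives each Schnyder wood exactly once overall.

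The main obstacle I anticipate is the second paragraph: writing down explicitly the local rule that converts a $uz$-orientation in which every internal vertex has at least two incoming edges into the coloured oriented trees $(\mathcal T_1,\mathcal T_2,\mathcal T_3)$, and verifying both that this rule realises the bijection of~\cite[Theorem~3.3]{DBLP:journals/dm/FraysseixM01} and that it is implementable with $\bigoh(1)$ work per edge incidence; everything else amounts to routine bookkeeping on top of results already proved.
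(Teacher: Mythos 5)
Your proposal is correct and follows essentially the same route as the paper: it post-processes the canonical-orientation enumerator through the de Fraysseix--Ossona de Mendez bijection, observes that the Schnyder wood can be extracted in $\bigoh(n)$ time by a local rule at each vertex using \cref{obs:bimodality} (the paper spells this rule out explicitly: the first and last incoming edges in counter-clockwise order become outgoing of colours $1$ and $2$, the middle ones become incoming of colour $3$, with a special rule at the sink), and derives $\mathcal M_2$ by running $\mathcal M_1$ on the $4n-8$ plane embeddings with the same disjointness argument. The only cosmetic difference is that you build $\mathcal M_1$ from three invocations of the per-pole lemma rather than directly from $\mathcal A_1$ of \cref{th:canonical-orientation-plane-and-planar}, which amounts to the same thing.
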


\begin{proof}
We first discuss algorithm $\mathcal M_1$, hence let $G$ be an $n$-vertex maximal plane graph. As proved by de Fraysseix and Ossona De Mendez~\cite[Theorem 3.3]{DBLP:journals/dm/FraysseixM01}, there is a  bijection between the Schnyder woods of $G$ and the canonical orientations of $G$. Given a canonical orientation $\mathcal D$ of $G$, the corresponding Schnyder wood $\mathcal W=(\mathcal T_1,\mathcal T_2,\mathcal T_3)$ can be obtained as follows (see also~\cite{b-3os3d-00,c-h-21,dpp-hdpgg-90,d-gdt-10,dtv-osr-99,kobourov2016canonical,DBLP:conf/soda/Schnyder90}). For every internal vertex $w$ of $G$, let $e_1,\dots,e_k$ be the counter-clockwise order of the incoming edges at $w$ in $\mathcal D$, where $k\geq 2$. Assign $e_1$ with color $1$ and orient it (in $\mathcal W$) so that it is outgoing at $w$; also, assign $e_k$ with color $2$ and orient it (in $\mathcal W$) so that it is outgoing at $w$; finally, assign $e_2,\dots,e_{k-1}$ with color $3$ and orient them (in $\mathcal W$) so that they are incoming at $w$. The construction of $\mathcal W$ is completed by assigning all the edges that are incident to the sink of $\mathcal D$ and that do not belong to the boundary of $G$ with color $3$ and orienting them (in $\mathcal W$) so that they are incoming at the sink of $\mathcal D$. Since the construction of $\mathcal W$ from $\mathcal D$ can be easily implemented in $\bigoh(n)$ time and space, it descends from \cref{th:canonical-orientation-plane-and-planar} that algorithm $\mathcal M_1$ satisfies the required properties.

Algorithm $\mathcal M_2$ uses algorithm $\mathcal M_1$ applied $4n-8$ times, since there are $4n-8$ maximal plane graphs isomorphic to a given maximal planar graph (see the proof of \cref{th:canonical-orientation-plane-and-planar}). Note that any two Schnyder woods produced by different applications of algorithm $\mathcal M_1$ differ on the triple of vertices that have no outgoing edge.
\end{proof}

We now turn our attention to the enumeration of the planar straight-line drawings produced by the algorithm by Schnyder~\cite{DBLP:conf/soda/Schnyder90}, known as \emph{Schnyder drawings}. We start by describing such an algorithm, which takes as input (see  \cref{fig:schnyder-drawings-a}):
\begin{itemize}
\item an $n$-vertex maximal plane graph $G$, whose outer face is delimited by a cycle $(u,v,z)$, where $u$, $v$, and $z$ appear in this counter-clockwise order along the outer face of $G$; and
\item a Schnyder wood $\mathcal W = (\mathcal T_1,\mathcal T_2, \mathcal T_3)$ of $G$.
\end{itemize}
For ease of notation, we let $u_1$, $u_2$, and $u_3$ be alternative labels for $u$, $v$, and $z$, respectively, so that $\mathcal T_i$ contains $u_i$, for $i=1,2,3$. For a cycle $\mathcal C$ in $G$, let $\#_f(\mathcal C)$ denote the number of internal faces of $G$ in the interior of $\mathcal C$.

\begin{figure}[htb]
    \centering
    \begin{subfigure}{.31\textwidth}
    \centering
    	\includegraphics[page=1, scale =.65]{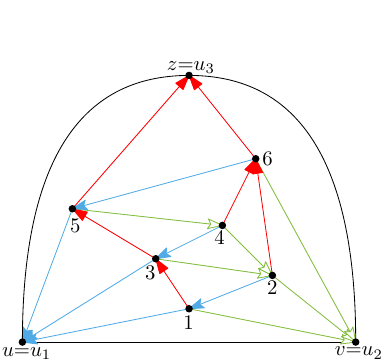}
    \caption{\label{fig:schnyder-drawings-a}}
    \end{subfigure}
    \hfill
    \begin{subfigure}{.31\textwidth}
    \centering
    	\includegraphics[page=2, scale =.65]{img/schnyder-drawings.pdf}
    \caption{\label{fig:schnyder-drawings-b}}
    \end{subfigure}
    \hfill
    \begin{subfigure}{.31\textwidth}
    \centering
    	\includegraphics[page=3, scale =.65]{img/schnyder-drawings.pdf}
     \caption{\label{fig:schnyder-drawings-c}}
    \end{subfigure}
\caption{(a) A maximal plane graph $G$ and a Schnyder wood $\mathcal W$ of $G$. (b) Paths $\mathcal P_1(4)$, $\mathcal P_2(4)$, and $\mathcal P_3(4)$, and cycles $\mathcal C_x(4)$ and $\mathcal C_y(4)$, where $4$ is  an internal vertex of $G$. Note that $\#_f(\mathcal C_x(4))=5$ and $\#_f(\mathcal C_y(4))=5$. (c) The Schnyder drawing $s(\mathcal W)$ of $G$.}
\label{fig:schnyder-drawings}
\end{figure}

Schnyder's algorithm assigns coordinates $(0,0)$, $(2n-5,0)$, and $(0,2n-5)$ to the vertices $u_1$, $u_2$, and $u_3$, respectively. Consider any internal vertex $w$. For $i=1,2,3$, properties (S-1) and (S-2) of $\mathcal W$ imply that $\mathcal T_i$ contains a directed path $\mathcal P_i(w)$ from $w$ to $u_i$; see  \cref{fig:schnyder-drawings-b}. Moreover, $\mathcal P_1(w)$, $\mathcal P_2(w)$, and $\mathcal P_3(w)$ have $w$ as the only common vertex~\cite{DBLP:conf/soda/Schnyder90}. Let $\mathcal C_x(w)$ be the cycle composed of the paths $\mathcal P_1(w)$ and $\mathcal P_3(w)$, and of the edge $(u_1,u_3)$. Also, let $\mathcal C_y(w)$ be the cycle composed of the paths $\mathcal P_1(w)$ and $\mathcal P_2(w)$, and of the edge $(u_1,u_2)$. Then Schnyder's algorithm assigns coordinates $(\#_f(\mathcal C_x(w)),\#_f(\mathcal C_y(w)))$ to $w$; see  \cref{fig:schnyder-drawings-c}.

We now prove the main ingredient of our enumeration algorithm for Schnyder drawings.

\begin{theorem} \label{th:bijection-sch}
	Let $G$ be an $n$-vertex maximal plane graph. There exists a bijective function from the Schnyder woods of $G$ to the Schnyder drawings of $G$. Also, given a Schnyder wood of $G$, the corresponding Schnyder drawing of $G$ can be constructed in $\bigoh(n)$ time.
\end{theorem}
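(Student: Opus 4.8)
The plan is to show that Schnyder's algorithm, viewed as a function $s$ from Schnyder woods of $G$ to straight-line drawings of $G$, is injective; surjectivity onto the set of Schnyder drawings is immediate since that set is by definition the image of $s$. The $\bigoh(n)$ running time is also routine: the three paths $\mathcal P_1(w),\mathcal P_2(w),\mathcal P_3(w)$ partition the internal faces of $G$ into three regions, and a single traversal of the Schnyder wood computes, for every internal vertex $w$, the face-counts $\#_f(\mathcal C_x(w))$ and $\#_f(\mathcal C_y(w))$ (e.g.\ by the standard region-counting / ``coordinate'' computation of Schnyder, which can be implemented in linear time). So the entire content is the injectivity of $s$.

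To prove injectivity I would argue the contrapositive in a constructive way: from a Schnyder drawing $\Gamma = s(\mathcal W)$ one can \emph{recover} $\mathcal W$, so two distinct Schnyder woods cannot produce the same drawing. First I would recover the vertex positions and hence, using the known fact that $\Gamma$ is a planar straight-line drawing of $G$ with the prescribed outer triangle $u_1=(0,0)$, $u_2=(2n-5,0)$, $u_3=(0,2n-5)$, identify which drawn point is which vertex (the embedding $G$ is fixed, so this identification is unambiguous). The key structural fact I would invoke is Schnyder's barycentric / region inequality: for every internal vertex $w$ and every neighbor $w'$ of $w$, comparing the coordinate vectors of $w$ and $w'$ determines, for each color $i$, whether the edge $ww'$ is the outgoing color-$i$ edge at $w$, an incoming color-$i$ edge at $w$, or neither. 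Concretely, the colored orientation at $w$ is encoded in how the three ``region sizes'' of $w$ compare to those of its neighbors: the outgoing color-$i$ edge at $w$ goes to the neighbor whose color-$i$ region count is strictly larger (and equal in the other relevant coordinate pattern dictated by (S-1)). Since the three coordinates $\#_f(\mathcal P_1(w)$-region$)$, $\#_f(\mathcal P_2(w)$-region$)$, $\#_f(\mathcal P_3(w)$-region$)$ sum to the total number $2n-5$ of internal faces, the $x$- and $y$-coordinates assigned by the algorithm determine all three region counts, hence the comparisons above are fully determined by $\Gamma$ alone. Therefore $\mathcal W$ (as an assignment of colors and directions to the internal edges) is a function of $\Gamma$, which gives injectivity.

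The step I expect to be the main obstacle is making precise the claim ``the color and direction of each internal edge $ww'$ is read off from the coordinate comparison between $w$ and $w'$.'' This requires recalling and citing Schnyder's region-monotonicity lemma — along a directed color-$i$ path the color-$i$ region count strictly decreases while the other two region counts weakly change in a controlled way — and then checking that these monotonicity relations are \emph{strict} enough that no two incident vertices can have the same pattern of comparisons under two different wood structures. I would handle this by first establishing a local lemma: for an internal vertex $w$ with cyclic neighbor order $n_1,\dots,n_d$, conditions (S-1)–(S-2) plus the face-counting definition force $\#_f(\mathcal C_x(n_j))$ and $\#_f(\mathcal C_y(n_j))$ to be strictly monotone as $n_j$ ranges over the incoming/outgoing sectors, so the three outgoing edges $e_1,e_2,e_3$ are exactly the neighbors realizing the three ``strict increases'' in the respective coordinates; the remaining edges and their colors are then forced by the sector structure of (S-1). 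Once this local lemma is in place, injectivity follows by applying it at every internal vertex, and the theorem is proved.
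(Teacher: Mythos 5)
Your overall architecture matches the paper's exactly: both treat $s$ as Schnyder's algorithm, dismiss surjectivity and the $\bigoh(n)$ running time as routine, and reduce injectivity to the claim that the drawing $\Gamma = s(\mathcal W)$ alone determines, for every internal edge, its color in $\{1,2,3\}$ and its orientation. Where you differ is in how that local recoverability is established. The paper cites the known fact that in a Schnyder drawing the slope of each edge $(u,v)$ falls into one of six pairwise-disjoint angular intervals (on the triangular grid, $(0^\circ,60^\circ)$, $(60^\circ,120^\circ)$, $\dots$, $(300^\circ,360^\circ)$, corresponding to incoming/outgoing edges of each $\mathcal T_i$; on the square grid the analogous disjoint intervals hold), so the color and direction are read off directly from the slope. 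You instead propose to recover them from Schnyder's region-monotonicity, i.e., by comparing $\#_f(\mathcal C_x(w)), \#_f(\mathcal C_y(w))$ (and the implied third region count) with those of each neighbor. The two facts are equivalent --- the slope-interval statement is a rephrasing of the region-monotonicity inequalities in coordinate language --- so your approach is sound in principle, but it is the less economical route: you yourself flag that the ``local lemma'' about which inequalities are strict and how they separate, say, an incoming color-$1$ edge from an outgoing color-$2$ edge, still needs careful proof, and the phrase ``equal in the other relevant coordinate pattern'' is not a correct statement as written (the non-growing region counts are weakly, not exactly, comparable). The paper sidesteps all of that by invoking the slope fact, which packages the same monotonicity into disjoint angular sectors and is citable as standard. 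So: same proof strategy, a less streamlined choice of key lemma, and the one nontrivial step you identify as the main obstacle is precisely the step the paper avoids having to reprove.
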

\begin{proof}
	The bijective function $s$ that proves the statement is simply Schnyder's algorithm. The second part of the statement then follows from the fact that this algorithm can be implemented in $\bigoh(n)$ time~\cite{DBLP:conf/soda/Schnyder90}. 
	
	In order to prove that $s$ is bijective, we prove that it is injective (that is, for any two distinct Schnyder woods $\mathcal W_1$ and $\mathcal W_2$ of $G$, we have that  $s(\mathcal W_1)$ and $s(\mathcal W_2)$ are not the same drawing) and that it is surjective (that is, for any Schnyder drawing $\Gamma$, there exists a Schnyder wood $\mathcal W$ such that $s(\mathcal W)$ is $\Gamma$). That $s$ is surjective is actually obvious, as a Schnyder drawing $\Gamma$ is generated by applying Schnyder's algorithm to some Schnyder's wood $\mathcal W$. Then $s(\mathcal W)=\Gamma$. In the following, we prove that $s$ is injective.
	
	Consider any Schnyder drawing $\Gamma$. By definition of Schnyder drawing, there is {\em at least} one Schnyder wood $\mathcal W$ such that $s(\mathcal W)=\Gamma$. We prove that there is {\em at most} one such Schnyder wood, that is, $\Gamma$ uniquely determines $\mathcal W$. 
	
	\begin{figure}
		\centering
		\begin{subfigure}{0.47\textwidth}
			\centering
			\includegraphics[page=1, scale=.7]{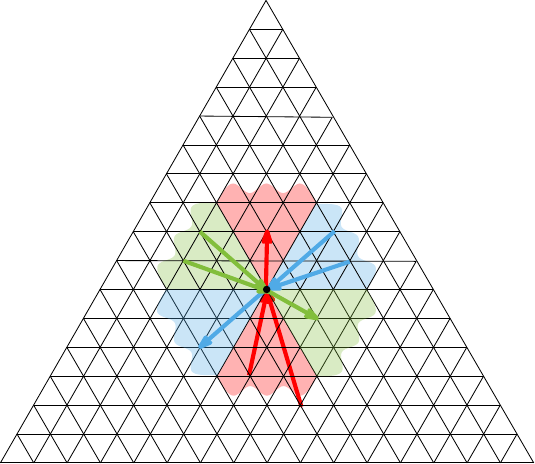}
			\caption{\label{fig:triangular-grid-angles}}
		\end{subfigure}
		\hfil
		\begin{subfigure}{0.47\textwidth}
			\centering
			\includegraphics[page=2, scale=.7]{img/schnyder-grid.pdf}
			\caption{\label{fig:square-grid-angles}}
		\end{subfigure}
		\caption{Slopes of the edges in a Schnyder drawing on (a) a triangular grid and (b) a square grid.}
		\label{fig:grid-angles}
	\end{figure}
	
	First, we recall that Schnyder drawings are often constructed on a triangular grid, rather than on the square grid. On the triangular grid, a Schnyder drawing constructed from a Schnyder wood $\mathcal W=(\mathcal T_1,\mathcal T_2,\mathcal T_3)$ has the property that, for each vertex $v$, the edges of $\mathcal T_1$, $\mathcal T_2$, and $\mathcal T_3$ incoming into $v$ have slopes in the intervals $(0^\circ,60^\circ)$, $(120^\circ,180^\circ)$, and $(240^\circ,300^\circ)$, respectively, while the edges of $\mathcal T_1$, $\mathcal T_2$, and $\mathcal T_3$ outgoing from $v$ have slopes in the intervals $(180^\circ,240^\circ)$, $(300^\circ,360^\circ)$, and $(60^\circ,120^\circ)$, respectively; see, e.g., \cite{d-gdt-10} and \cref{fig:triangular-grid-angles}. Hence, back on the square grid, a Schnyder drawing constructed from a Schnyder wood $\mathcal W=(\mathcal T_1,\mathcal T_2,\mathcal T_3)$ has the property that, for each vertex $v$, the edges of $\mathcal T_1$, $\mathcal T_2$, and $\mathcal T_3$ incoming into $v$ have slopes in the intervals $(0^\circ,90^\circ)$, $(135^\circ,180^\circ)$, and $(270^\circ,315^\circ)$, respectively, while the edges of $\mathcal T_1$, $\mathcal T_2$, and $\mathcal T_3$ outgoing from $v$ have slopes in the intervals $(180^\circ,270^\circ)$, $(315^\circ,360^\circ)$, and $(90^\circ,135^\circ)$, respectively; see \cref{fig:square-grid-angles}. Thus, for each edge $(u,v)$ of $G$, whether $(u,v)$ belongs to $\mathcal T_1$, $\mathcal T_2$, or $\mathcal T_3$, and whether $(u,v)$ is directed from $u$ to $v$ or vice versa, can be uniquely determined by the slope of the edge $(u,v)$ in $\Gamma$. This concludes the proof that $s$ is an injective function and hence the proof of \cref{th:bijection-sch}.
\end{proof}

We get the following.
\begin{theorem} \label{th:schnyder-enumeration-plane}
Let $G$ be an $n$-vertex maximal plane (planar) graph. There exists an algorithm $\mathcal N_1$ (resp.\ $\mathcal N_2$) with $\bigoh(n)$ setup time and $\bigoh(n)$ space usage that lists all Schnyder drawings of $G$ with $\bigoh(n)$ delay.
\end{theorem}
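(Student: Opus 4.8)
The plan is to mirror the structure of the proofs of \cref{th:FPP-enumeration-plane} and \cref{th:schnyder-plane}, composing the enumeration machinery we already have with the bijection of \cref{th:bijection-sch}. Concretely, algorithm $\mathcal N_1$ runs the Schnyder wood enumerator $\mathcal M_1$ of \cref{th:schnyder-plane} on the given $n$-vertex maximal plane graph $G$ and, for each Schnyder wood $\mathcal W$ it produces, applies Schnyder's drawing algorithm to $\mathcal W$ and outputs the resulting drawing. Since $\mathcal M_1$ enumerates all Schnyder woods of $G$ with $\bigoh(n)$ setup time, $\bigoh(n)$ space, and $\bigoh(n)$ delay, and since by the second part of \cref{th:bijection-sch} the drawing $s(\mathcal W)$ can be constructed from $\mathcal W$ in $\bigoh(n)$ time and $\bigoh(n)$ space, the composed algorithm still has $\bigoh(n)$ setup time, $\bigoh(n)$ space, and $\bigoh(n)$ delay. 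The fact that every Schnyder drawing of $G$ is listed exactly once then follows because, by the first part of \cref{th:bijection-sch}, the function $s$ is a bijection between the Schnyder woods of $G$ and the Schnyder drawings of $G$: distinct Schnyder woods give distinct drawings (so no duplicates), and every Schnyder drawing arises from some Schnyder wood (so all drawings are listed).

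For algorithm $\mathcal N_2$, handling a maximal \emph{planar} graph $G'$, I would follow exactly the pattern already used for $\mathcal B_2$, $\mathcal C_2$, and $\mathcal M_2$: apply $\mathcal N_1$ to each of the $4n-8$ maximal plane graphs isomorphic to $G'$ (the $2n-4$ facial cycles of a planar drawing of $G'$, each with two choices of counter-clockwise orientation of the outer triangle), as justified in the proof of \cref{th:canonical-orientation-plane-and-planar}. The only thing that needs an argument is that distinct applications of $\mathcal N_1$ do not produce the same drawing, i.e. that no Schnyder drawing is output twice across the $4n-8$ runs. Here I would invoke the observation from the proof of \cref{th:schnyder-plane} that any two Schnyder woods produced by different applications of $\mathcal M_1$ differ on the triple of vertices with no outgoing edge, combined with the injectivity of $s$ from \cref{th:bijection-sch}: a Schnyder drawing determines its Schnyder wood uniquely, hence it determines the outer triple (the three vertices with no outgoing internal edge), hence it can be produced by at most one of the $4n-8$ applications. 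Since $4n-8 = \bigoh(n)$ and each application keeps the stated asymptotic bounds, $\mathcal N_2$ also achieves $\bigoh(n)$ setup time, space, and delay.

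I do not expect any genuine obstacle here: the theorem is essentially a corollary obtained by plugging \cref{th:bijection-sch} into \cref{th:schnyder-plane}, exactly as \cref{th:FPP-enumeration-plane} is obtained by plugging \cref{th:bijection-FPP} into \cref{th:canonical-ordering-plane}. The only mildly delicate point is the no-duplicates argument for $\mathcal N_2$, and even that is dispatched by the injectivity of $s$ together with the outer-triple distinctness already recorded in the proof of \cref{th:schnyder-plane}; I would state it explicitly in a single sentence rather than re-deriving it. Everything else is bookkeeping about the additive composition of $\bigoh(n)$ bounds.

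Concretely, the write-up would be:

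\begin{proof}
We first discuss algorithm $\mathcal N_1$, so let $G$ be an $n$-vertex maximal plane graph. Algorithm $\mathcal N_1$ runs the algorithm $\mathcal M_1$ of \cref{th:schnyder-plane} to enumerate all Schnyder woods of $G$ and, for each enumerated Schnyder wood $\mathcal W$, it outputs the drawing $s(\mathcal W)$ obtained by applying Schnyder's algorithm to $\mathcal W$. By the second part of \cref{th:bijection-sch}, $s(\mathcal W)$ can be constructed from $\mathcal W$ in $\bigoh(n)$ time and space. Since by \cref{th:schnyder-plane} the algorithm $\mathcal M_1$ has $\bigoh(n)$ setup time, $\bigoh(n)$ space usage, and $\bigoh(n)$ delay, it follows that $\mathcal N_1$ has $\bigoh(n)$ setup time, $\bigoh(n)$ space usage, and $\bigoh(n)$ delay as well. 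Moreover, by \cref{th:bijection-sch}, the function $s$ is a bijection between the Schnyder woods of $G$ and the Schnyder drawings of $G$; hence $\mathcal N_1$ lists every Schnyder drawing of $G$, and it lists each such drawing exactly once, since $\mathcal M_1$ lists each Schnyder wood of $G$ exactly once and distinct Schnyder woods yield distinct Schnyder drawings.

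Algorithm $\mathcal N_2$ uses algorithm $\mathcal N_1$ applied $4n-8$ times, since there are $4n-8$ maximal plane graphs isomorphic to a given maximal planar graph (see the proof of \cref{th:canonical-orientation-plane-and-planar}). Since $4n-8=\bigoh(n)$ and each application of $\mathcal N_1$ has $\bigoh(n)$ setup time, $\bigoh(n)$ space usage, and $\bigoh(n)$ delay, also $\mathcal N_2$ has $\bigoh(n)$ setup time, $\bigoh(n)$ space usage, and $\bigoh(n)$ delay. It remains to observe that no Schnyder drawing is listed twice across the $4n-8$ applications. By the injectivity of $s$ (see \cref{th:bijection-sch}), a Schnyder drawing of $G$ determines its Schnyder wood uniquely, and hence determines the triple of vertices that have no outgoing edge. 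Since, as observed in the proof of \cref{th:schnyder-plane}, any two Schnyder woods produced by different applications of $\mathcal M_1$ (and thus any two Schnyder drawings produced by different applications of $\mathcal N_1$) differ on the triple of vertices that have no outgoing edge, each Schnyder drawing is produced by exactly one of the $4n-8$ applications of $\mathcal N_1$.
\end{proof}
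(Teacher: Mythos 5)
Your proposal is correct and follows essentially the same route as the paper: $\mathcal N_1$ composes the Schnyder wood enumerator $\mathcal M_1$ with the $\bigoh(n)$-time map $s$ of \cref{th:bijection-sch}, and $\mathcal N_2$ applies $\mathcal N_1$ over the $4n-8$ maximal plane graphs isomorphic to $G$, with the no-duplicates argument resting on distinguishing the outer triple (or coordinates). Your spelled-out argument that a Schnyder drawing determines its wood via injectivity of $s$, and hence the outer triple, is a slightly more explicit version of the paper's one-line observation; the content is the same.
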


\begin{proof}
Algorithm $\mathcal N_1$ directly descends from \cref{th:schnyder-plane} and \cref{th:bijection-sch}. Algorithm $\mathcal N_2$ uses algorithm $\mathcal N_2$ applied $4n-8$ times, since there are $4n-8$ maximal plane graphs which are isomorphic to a given maximal planar graph (see the proof of \cref{th:canonical-orientation-plane-and-planar}). Note that any two Schnyder drawings produced by different applications of algorithm $\mathcal N_1$ differ on the three vertices incident to the outer face, or on their coordinates in the drawing. 
\end{proof}

% ---------------------------------------------

\section{Conclusions}\label{se:conclusions}

In this paper, we considered the problem of enumerating two fundamental combinatorial structures of maximal planar graphs, i.e., canonical orderings and Schnyder woods. By exploiting their connection with canonical orientations, we developed efficient enumeration algorithms for such structures. We also proved novel, and in our opinion interesting, bijections between canonical orientations and canonical drawings, and between Schnyder woods and Schnyder drawings. This allowed us to empower our enumeration algorithms so that they can enumerate drawings within this classical drawing styles. The worst-case delay between two consecutive outputs of all our algorithms is linear in the graph size.%; whether this bound can be decreased, at least in terms of amortized analysis, remains to be understood.

Our research initiates the study of graph-drawing enumeration algorithms and sparkles several interesting questions in this domain. In general, given a graph $G$, for any given drawing style $\cal D$, we may ask for the existence of efficient algorithms to enumerate all drawings of $G$ that respect $\cal D$. Natural examples of problems of this type include: (i) efficiently enumerating all the planar straight-line drawings of a given planar graph within a grid of prescribed size; (ii) efficiently enumerating all the orthogonal representations of a given plane graph with at most $b$ bends in total; and (iii) efficiently enumerating all the upward planar embeddings of a single-source digraph or of a triconnected digraph.

%, (2) all the convex straight-line drawings of $G$, if $G$ is internally triconnected, and (3) all the upward-planar straight-line drawings of $G$, if $G$ is given~together~with~an~$st$-orientation.

\bibliography{bibliography}

%In this appendix and in \cref{apx:pseudocode}, we denote a reference to any of the records in \cref{algo:datastructures} using the syntax 
%\texttt{RecordType name},
%where \texttt{RecordType} $\in \{$\texttt{Vertex, Edge}$\}$ and \texttt{name} is the name of the reference. Accessing a field named $\texttt{x}$ of a record named \texttt{name}, to read or modify its value, can be done using the syntax $\texttt{name.x}$.

\end{document}